\def\squareforqed{\hbox{\rlap{$\sqcap$}$\sqcup$}}
\def\qed{\ifmmode\squareforqed\else{\unskip\nobreak\hfil
\penalty50\hskip1em\null\nobreak\hfil\squareforqed
\parfillskip=0pt\finalhyphendemerits=0\endgraf}\fi}
\def\endenv{\ifmmode\;\else{\unskip\nobreak\hfil
\penalty50\hskip1em\null\nobreak\hfil\;
\parfillskip=0pt\finalhyphendemerits=0\endgraf}\fi}
\newtheorem{theorem}{Theorem}
\newtheorem{corollary}[theorem]{Corollary}
\newtheorem{definition}[theorem]{Definition}
\newtheorem{lemma}[theorem]{Lemma}
\newtheorem{proposition}[theorem]{Proposition}
\newtheorem{remark}[theorem]{Remark}
\newenvironment{proof}[1][Proof]{\noindent\textbf{Proof.} }{\hfill\qed}
\newenvironment{proofof}[1][Proof]{\noindent\textbf{Proof~#1.} }{\hfill\qed}
\newcommand{\nc}{\newcommand}
\nc{\rnc}{\renewcommand}
\nc{\beq}{\begin{equation}}
\nc{\eeq}{\end{equation}}
\nc{\bsp}{\begin{split}}
\nc{\esp}{\end{split}}
\nc{\beqa}{\begin{eqnarray}}
\nc{\eeqa}{\end{eqnarray}}
\nc{\lbar}[1]{\overline{#1}}
\nc{\ket}[1]{|#1\rangle}
\nc{\bra}[1]{\langle#1|}
\nc{\braket}[2]{\langle #1 | #2 \rangle}
\nc{\ketbra}[2]{|#1\rangle\!\langle#2|}
\nc{\proj}[1]{| #1\rangle\!\langle #1 |}
\nc{\avg}[1]{\langle#1\rangle}
\nc{\Rank}{\operatorname{Rank}}
\nc{\smfrac}[2]{\mbox{$\frac{#1}{#2}$}}
\nc{\tr}{\operatorname{Tr}}
\nc{\ox}{\otimes}
\nc{\dg}{\dagger}
\nc{\dn}{\downarrow}
\nc{\supp}{{\operatorname{supp}}}
\nc{\qsupp}{{\operatorname{qsupp}}}
\nc{\var}{\operatorname{var}}
\nc{\rar}{\rightarrow}
\nc{\lrar}{\longrightarrow}
\nc{\polylog}{\operatorname{polylog}}
\nc{\id}{{\operatorname{id}}}
\nc{\Hom}[2]{\mbox{Hom}(\CC^{#1},\CC^{#2})}
\nc{\rU}{\mbox{U}}
\nc{\mc}{\mathcal}
\begin{document}

\title{Strong converse exponents of partially smoothed information measures}

\author{Mario Berta}
%    \email{berta@physik.rwth-aachen.de}
    \affiliation{Institute for Quantum Information, RWTH Aachen University, Aachen
                 52074, Germany}
\author{Yongsheng Yao}
	\email{yongsh.yao@gmail.com}
    \affiliation{Institute for Quantum Information, RWTH Aachen University, Aachen
                 52074, Germany}
\date{\today}

%*******************************************************************************

\begin{abstract}
Partially smoothed information measures are fundamental tools in one-shot quantum information theory. In this work, we determine the exact strong converse exponents of these measures for both pure quantum states and classical states. Notably, we find that the strong converse exponents based on trace distance takes different forms between pure and classical states, indicating that they are not uniform across all quantum states. Leveraging these findings, we derive the strong converse exponents for quantum data compression, intrinsic randomness extraction, and classical state splitting. A key technical step in our analysis is the determination of the strong converse exponent for classical privacy amplification, which is of independent interest.
\end{abstract}

\maketitle
%\tableofcontents

%*******************************************************************************

\section{Introduction}

Partially smoothed information measures\,---\,including the partially smoothed mutual max-information and conditional min-entropy\,---\,were introduced in~\cite{ABJT2020partially} and have since become fundamental tools in quantum information theory. Unlike standard smoothed information measures~\cite{BCR2011the, TomamichelHayashi2013hierarchy}, which involve smoothing over all nearby states, partially smoothed measures require smoothing only over those nearby states whose marginal on a specified subsystem remains unchanged. This constraint aligns naturally with many quantum information processing tasks where certain subsystems are preserved, making these measures particularly effective for such scenarios. As a result, partially smoothed information measures often yield tighter bounds in one-shot information-theoretic tasks such as privacy amplification, quantum state merging, and channel simulation~\cite{ABJT2020partially, CRBT2024channel, FWTB2024channel, FBB2021thermodynamic,ADJ2017quantum}.

Like their standard counterparts, the first-order asymptotics of partially smoothed measures coincide with their corresponding mutual information or conditional entropy~\cite{ABJT2020partially}. However, unlike the well-understood second-order asymptotics and large-deviation behavior of standard smoothed measures~\cite{BCR2011the, TomamichelHayashi2013hierarchy}, much less is known about these refined asymptotics in the partially smoothed setting. Reference~\cite{ABJT2020partially} established the second-order asymptotics for the partially smoothed mutual max-information and conditional min-entropy\,---\,based on trace distance\,---\,in the fully classical case. Reference~\cite{AbdelhadiRenes2020second} extended this by analyzing the second-order asymptotics of the partially smoothed conditional min-entropy under purified distance for pure states. Moreover, it demonstrated that the second-order behavior differs for certain classes of states, indicating that the the second-order terms are not uniform across all states. These findings suggest that the fine-grained asymptotic analysis of partially smoothed information measures is significantly more intricate than for the standard case. Given their central role in one-shot quantum information theory, further study of their second-order asymptotics and large-deviation properties is essential for deriving sharper asymptotic characterizations of various quantum information processing tasks.

In this paper, we study the strong converse exponents of partially smoothed information measures. For a bipartite state $\rho_{RA}$, consider the smoothing quantities 
\[
\epsilon^\Delta_{\dot{R}:A}(\rho_{RA},\lambda):=\min \{\Delta(\rho_{RA},  \tilde{\rho}_{RA})~|\exists~ \text{quantum states}~\sigma_{A}, \tilde{\rho}_{RA}~\text{s.t.}~\tilde{\rho}_{R}=\rho_R, \tilde{\rho}_{RA} \leq 2^{\lambda} \rho_R \ox \sigma_{A}\}
\]
and
\[
\epsilon^\Delta_{A|\dot{R}}(\rho_{RA},\lambda):=\min \{\Delta(\rho_{RA},  \tilde{\rho}_{RA})~|\exists~ \text{quantum states}~\tilde{\rho}_{RA}~\text{s.t.}~\tilde{\rho}_{R}\leq \rho_R, \tilde{\rho}_{RA} \leq 2^{-\lambda} I_A \ox \rho_R\},
\]
where $\Delta$ can be either the trace distance or the purified distance and $I_A$ denotes the identity operator. When $r$ is fixed, as the 
number of $\rho_{RA}$ grows, the exponential decay rates under which $\epsilon^\Delta_{\dot{R^n}:A^n}(\rho^{\ox n}_{RA},nr)$ and $\epsilon^\Delta_{A^n|\dot{R^n}}(\rho^{\ox n}_{RA},nr)$ converge to $1$ are called the strong converse exponent for 
partially smoothing of the mutual max-information and the conditional min-entropy, respectively. As our main result, we establish the exact strong converse exponents of these partially smoothed information measures for classical states and pure states. 

For classical state $\rho_{RA}$, i.e., $\rho_{RA}$ is a bipartite probability distribution $\{p(x,a)\}_{(x,a)\in \mc{R} \times \mc{A}}$. Our main results are summarized in TABLE~\ref{tab:results} below.

\begin{table}[h!]
    \begin{center}
    \begin{tabular}{l|c|r}
        \hline
          & \textbf{Conditional Min-Entropy} & \textbf{Mutual Max-Information} \\
        \hline
        \textbf{Trace Distance}  & $ \sup\limits_{0\leq \alpha\leq 1}(1-\alpha)\big\{r-\bar{H}_{\alpha}(A|R)_p\big\}$ & $\sup\limits_{0 \leq \alpha \leq 1} (1-\alpha) \{I_{\alpha}(R:A)_p-r  \}$ \\
        \hline
        \textbf{Purified Distance}  & $\sup\limits_{\frac{1}{2} \leq \alpha \leq 1}\inf\limits_{t \in \mc{Q}(\mc{R})}
\big\{2D(t\|p)+\frac{1-\alpha}{\alpha}\big(r-\mathbb{E}_{x\sim t}H_\alpha(p(\cdot|x))\big)\big\}$ &  \\
        \hline
    \end{tabular}
\caption{Strong converse exponents of partially smoothed information measures for classical states. $\bar{H}_{\alpha}(A|R)_p$ is the Petz R\'enyi conditional entropy defined in Eq.~(\ref{def:con2}), $I_\alpha(R:A)_p$ is the Petz Rényi mutual information defined in Eq.~(\ref{def:mul1}), $\mc{Q}(\mc{R})$ denotes the set of all probability distributions on $\mc{R}$, $D(t\|p)$ is the relative entropy, $p(\cdot|x)$ is the marginal distribution on $\mc{A}$ conditional on $x$, and $H_{\alpha}(p(\cdot|x))$ is the R\'enyi entropy.}
\end{center}
    \label{tab:results}
\end{table}

\noindent For pure state $\rho_{RA}$ with Schmidt decomposition $\ket{\rho}_{RA}=\sum\limits_{x \in \mc{X}} \sqrt{p(x)}\ket{x}_R \ox \ket{x}_A$, We state our main results in TABLE~II.

\begin{table}[h!]
    \begin{center}
    \begin{tabular}{l|c|r}
        \hline
          & \textbf{Conditional Min-Entropy} & \textbf{Mutual Max-Information} \\
        \hline
        \textbf{Trace Distance}  & $ \inf\limits_{t \in \mc{Q}(\mc{X})} \big\{2D(t\|p)+|r+H(t)|^+  \big\}$ & $\sup\limits_{\beta>1} \frac{\beta-1}{\beta} \big\{2H_{\beta}(R)_\rho-r  \big\}$ \\
        \hline
        \textbf{Purified distance}  & $\inf\limits_{t \in \mc{Q}(\mc{X})} \big\{2D(t\|p)+|r+H(t)|^+  \big\}$ &  $\sup\limits_{\beta>1} \frac{\beta-1}{\beta} \big\{2H_{\beta}(R)_\rho-r  \big\}$ \\
        \hline
    \end{tabular}
\caption{Strong converse exponents of partially smoothed information measures for pure states. $H(t)$ is the Shannon entropy, $|x|^+$ denotes $\max\{x,0\}$, and $H_\beta(R)_\rho$ is the R\'enyi entropy of $\rho_R$. }
\end{center}
    \label{tab:results12}
\end{table}
\noindent In addition, we also show that when $\rho_{RA}$ is a pure state, the formulas of the strong converse exponents based on trace distance for classical states can not be converted to those for pure states. Hence, we find that the strong converse exponents of partially smoothed information measures based on trace distance are not uniform across states. As an application of these results, we derive the strong converse exponents of quantum data compression, intrinsic randomness and classical state splitting.

The core tools in establishing the strong converse exponents based on trace distance for classical states are two conclusions from~\cite[Proposition 2]{AbdelhadiRenes2020second} and~\cite[Lemma 2]{OCCB2024exponents}. In the derivation of the optimality parts of the strong converse exponents based on purified distance, we mainly use the operator H\"{o}lder inequality. Specifically, Lemma~\ref{lem:hof} proved in~\cite{WangWilde2019resource} plays an important role in our proofs. The establishment of the corresponding achievability parts depend on the relations between partially smoothed information measures and privacy amplification, quantum data compression. As an important ingredient in proving the achievability parts, we determine the strong converse exponent of classical privacy amplification which is of independent interest. The strong converse exponents based on trace distance for pure states can be obtained from those based on purified distance and an improved  Fuchs-van de Graaf inequality~(Lemma~\ref{lem:appen1} in Appendix).

The remainder of this paper is organized as follows. In Section~\ref{sec:preliminary}, we introduce some preliminaries in quantum information theory. In Section~\ref{sec:problem}, we present the main problems and the main results. In Section~\ref{sec:classical} , we give the proof of the strong converse exponents of partially smoothed information measures based on trace distance in classical setting. In Section~\ref{sec:donpuri}, we establish the strong converse exponent for partially smoothing of the conditional min-entropy based on purified distance. In Section~\ref{sec:constong} and Section~\ref{sec:strmutualpure} , we establish the strong converse exponents of partially smoothed information measures for pure states. In Section~\ref{sec:application}, we introduce the applications of the main results in quantum data compression, intrinsic randomness and classical state splitting. In Section~\ref{sec:conclu} , we conclude this paper with some discussion.

%*******************************************************************************

\section{Preliminaries}
\label{sec:preliminary}

\subsection{Notation}

In quantum information theory, every physical system is associated with a finite-dimensional Hilbert space $\mathcal{H}$. Let $\mc{L}(\mc{H})$ be the set of all linear operators on $\mc{H}$. We denote  the set of the  positive semi-definite  operators  on $\mc{H}$ as $\mc{P}(\mc{H})$. The set of normalized quantum states and sub-normalized quantum states on $\mc{H}$ are denoted as $\mathcal{S}(\mc{H})$ and $\mathcal{S}_{\leq}(\mc{H})$, respectively. We use the notation $|\mc{H}|$ for the dimension of $\mc{H}$. The identity operator on $\mc{H}$ is represented as $I_{\mc{H}}$. When $\mc{H}$ is associated with a system $A$, the above notations $\mc{L}(\mc{H})$, $\mc{P}(\mc{H})$, $\mathcal{S}(\mc{H})$, $\mathcal{S}_{\leq}(\mc{H})$, $|\mc{H}|$ and $I_{\mc{H}}$ also can be written as $\mc{L}(A)$, $\mc{P}(A)$, $\mathcal{S}(A)$, $\mathcal{S}_{\leq}(\mc{H})$, $|A|$ and $I_{A}$, respectively. The support of an operator $X$ is denoted by $\supp(X)$. For $A, B \in \mathcal{P}(\mathcal{H})$, we denote by $\{A \geq B\}$ the projection
onto the subspace spanned by the eigenvectors corresponding to the non-negative eigenvalues of $A-B$, $\{A>B\}$, $\{A \leq B\}$ and $\{A<B\}$ are defined similarly. We will use a famous inequality~\cite[Theorem 1]{AKCMBMA2007discriminating} stated as follows.
Let $A, B \in \mathcal{P}(\mathcal{H})$ and $s \in [0,1]$, then,
\begin{equation}
\label{equ:discri}
\tr A\{A \leq B\}+\tr B \{A>B\} \leq \tr A^{1-s}B^s.
\end{equation}
For a finite alphabet set $\mc{X}$, we denote as $\mc{Q}(\mc{X})$ and $|\mc{X}|$ the set of all probability distributions on $\mc{X}$ and the size of $\mc{X}$, respectively. For a bipartite probability distribution $P_{XY}\in \mc{Q}(\mc{X}\mc{Y})$, we let $P_{XY}(\cdot|x)$  represent the 
marginal distribution on $\mc{Y}$ conditional on $x$.

The purified distance between two quantum states $\rho, \sigma \in \mc{S}(\mc{H})$ is given by
\[
P(\rho,\sigma):=\sqrt{1-F^2(\rho,\sigma)},
\]
where $F(\rho,\sigma):=\|\sqrt{\rho}\sqrt{\sigma}\|_1$ is the fidelity function. The trace distance between $\rho$ and $\sigma$ is defined as
\[
d(\rho,\sigma):=\frac{1}{2}\|\rho-\sigma\|_1.
\]

A quantum channel $\mc{N}_{A \rightarrow B}$ is a completely positive and trace-preserving linear map from $\mc{L}(A)$ to $\mc{L}(B)$.  Let $A$ be a self-adjoint operator with spectral projections $P_1, \ldots, P_r$. The pinching channel $\mathcal{E}_A$ associated with $A$ is defined as
\[
\mathcal{E}_A : X\mapsto\sum_{i=1}^r P_i X P_i.
\]
The pinching inequality~\cite{Hayashi2002optimal} tells that for any $\sigma \in \mathcal{P}(\mathcal{H})$, we have
\begin{equation}
\sigma \leq v(A) \mathcal{E}_A(\sigma),
\end{equation}
where $v(A)$ is the number of different eigenvalues of $A$.

For $S_n$ being the symmetric group of the permutations of $n$ elements, we define  the set of symmetric states on $A^n$ as
\beq
\mc{S}_{\rm{sym}}(A^n):=\left\{\sigma_{A^n} \in \mc{S}(A^n)~|~
W^{\pi}_{A^n} \sigma_{A^n}W^{\pi * }_{A^n}=\sigma_{A^n}, \ \forall\ \pi \in S_n\right\}.
\eeq
where $W^{\pi}_{A^n}: \ket{\psi_1} \ox \ldots \ox \ket{\psi_n}
\mapsto  \ket{\psi_{\pi^{-1}(1)}} \ox \ldots \ox \ket{\psi_{\pi^{-1}(n)}}$ is the natural representation of $\pi \in S_n$. There exists a  symmetric state that dominates all the other symmetric states, as stated in the following Lemma~\ref{lem:sym}.  The paper~\cite{Hayashi2009universal} and~\cite{CKR2009postselection} give two different constructions of this symmetric state, respectively. See~\cite[Lemma 1]{HayashiTomamichel2016correlation} and~\cite[Appendix A]{MosonyiOgawa2017strong} for detailed arguments.
\begin{lemma}
\label{lem:sym}
For a finite-dimensional system $A$ and any $n\in\mathbb{N}$, there exists a  symmetric state $\sigma_{A^n}^u \in \mc{S}_{\rm{sym}}(A^n)$ such that every symmetric state $\sigma_{A^n} \in \mc{S}_{\rm{sym}}(A^n)$ is dominated as
\begin{equation}
\sigma_{A^n} \leq v_{n,|A|}\sigma_{A^n}^u,
\end{equation}
where $v_{n,|A|} \leq (n+1)^{\frac{(|A|+2)(|A|-1)}{2}}$ is a polynomial of $n$. The number of distinct eigenvalues of $\sigma_{A^n}^u$ is upper bounded by $v_{n,|A|}$ as well.
\end{lemma}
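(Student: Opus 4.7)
The plan is to exhibit $\sigma_{A^n}^u$ explicitly via Schur--Weyl duality, which splits $\mc{H}_A^{\ox n}$ into isotypic components indexed by Young diagrams and severely constrains the form that any permutation-invariant state can take. Schur's lemma then pins down the dominance factor in terms of two elementary representation-theoretic quantities: the number of admissible Young diagrams and the maximum Weyl-module dimension.

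Concretely, I would first invoke Schur--Weyl duality to decompose
\[
\mc{H}_A^{\ox n} \cong \bigoplus_{\lambda} \mc{U}_\lambda \ox \mc{V}_\lambda,
\]
with $\lambda$ ranging over Young diagrams of $n$ boxes and at most $|A|$ rows; the permutations $W^\pi_{A^n}$ act irreducibly on each $\mc{U}_\lambda$ and trivially on each $\mc{V}_\lambda$. Averaging over $S_n$ and applying Schur's lemma forces every $\sigma_{A^n}\in\mc{S}_{\rm{sym}}(A^n)$ into the block form
\[
\sigma_{A^n} = \bigoplus_\lambda p_\lambda\, \frac{I_{\mc{U}_\lambda}}{\dim \mc{U}_\lambda} \ox \tau_\lambda,
\]
with $\{p_\lambda\}$ a probability distribution and each $\tau_\lambda \in \mc{S}(\mc{V}_\lambda)$. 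I would then take the candidate
\[
\sigma_{A^n}^u := \frac{1}{N_{n,|A|}} \sum_\lambda \frac{I_{\mc{U}_\lambda}}{\dim \mc{U}_\lambda} \ox \frac{I_{\mc{V}_\lambda}}{\dim \mc{V}_\lambda},
\]
where $N_{n,|A|}$ is the number of admissible $\lambda$. This state is manifestly symmetric, and since $p_\lambda\le 1$ and $\tau_\lambda \leq I_{\mc{V}_\lambda}$ the block form above gives directly
\[
\sigma_{A^n} \leq N_{n,|A|} \cdot \max_\lambda (\dim \mc{V}_\lambda) \cdot \sigma_{A^n}^u.
\]
Because $\sigma_{A^n}^u$ is a sum of mutually orthogonal rescaled projectors, one per $\lambda$, its number of distinct eigenvalues is at most $N_{n,|A|}$.

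It then remains to bound both $N_{n,|A|}$ and $\max_\lambda \dim \mc{V}_\lambda$ polynomially and check that their product matches the stated exponent. A partition of $n$ into at most $|A|$ parts is determined by its first $|A|-1$ parts, each lying in $\{0,1,\ldots,n\}$, giving $N_{n,|A|} \leq (n+1)^{|A|-1}$; and the Weyl dimension formula bounds $\dim \mc{V}_\lambda$ by a product of $\binom{|A|}{2}$ factors each of size at most $n+1$, hence by $(n+1)^{|A|(|A|-1)/2}$. Multiplying, $N_{n,|A|}\cdot\max_\lambda\dim\mc{V}_\lambda \leq (n+1)^{(|A|-1)(|A|+2)/2}$, so a single $v_{n,|A|}$ of this size simultaneously bounds both the operator inequality and the eigenvalue count. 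The main obstacle is not conceptual but bookkeeping: the weights in $\sigma_{A^n}^u$ must be chosen uniformly across the isotypic blocks so that no extra eigenvalues are created, and the Weyl dimension estimate has to be pushed just far enough that the two polynomial factors combine exactly into the claimed exponent $(|A|+2)(|A|-1)/2$. An alternative, slightly less transparent route is to average $n$-fold product states against a suitable measure on $\mc{S}(A)$ as in~\cite{CKR2009postselection}, at the cost of looser eigenvalue control.
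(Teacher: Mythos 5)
Your proposal is correct and coincides with the standard argument behind this lemma, which the paper itself does not prove but quotes from the literature (Hayashi's universal symmetric state; see the cited Lemma~1 of Hayashi--Tomamichel and Appendix~A of Mosonyi--Ogawa): those references use exactly the Schur--Weyl block form of permutation-invariant states, the uniform mixture of normalized isotypic projectors as $\sigma_{A^n}^u$, and the same two counts (at most $(n+1)^{|A|-1}$ partitions and Weyl-module dimension at most $(n+1)^{|A|(|A|-1)/2}$), whose product gives $(n+1)^{(|A|+2)(|A|-1)/2}$. Your dominance step and eigenvalue count are both sound, so there is no gap.
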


%*******************************************************************************

\subsection{Quantum entropies}

For $\rho , \sigma \in \mc{P}(\mc{H})$, the quantum relative
entropy of $\rho$ and $\sigma$ is defined as~\cite{Umegaki1954conditional}
\begin{equation}
D(\rho\|\sigma):= \begin{cases}
\tr(\rho(\log\rho-\log\sigma)) & \text{ if }\supp(\rho)\subseteq\supp(\sigma), \\
+\infty                        & \text{ otherwise.}
                  \end{cases}
\end{equation}
Let $\rho_{AB} \in \mc{S}(AB)$, the conditional entropy and mutual information
of $\rho_{AB}$ are defined, respectively as
\begin{align}
H(A|B)_\rho:&= -\min_{\sigma_B \in \mc{S}(B)} D(\rho_{AB} \| I_A \ox \sigma_B) \\
&=-D(\rho_{AB} \|I_A \ox \rho_B), \\
I(A:B)_\rho:&=\min_{\sigma_B \in \mc{S}(B)} D(\rho_{AB} \| \rho_A \ox \sigma_B).
\end{align}

The Petz R\'enyi divergence and the sandwiched R\'enyi divergence were introduced in~\cite{Petz1986quasi} and~\cite{MDSFT2013on, WWY2014strong}, respectively, which occupy an important position among various quantum R\'enyi divergences. Since they appeared, they have been widely used to 
characterize the large-deviation exponential rates of quantum information processing tasks~\cite{CHDH2020non, MosonyiOgawa2017strong, LiYao2024operational, MosonyiOgawa2015quantum, GuptaWilde2015multiplicativity, CMW2016strong, LiYao2024strong, LYH2023tight, Hayashi2015precise, LiYao2021reliable, HayashiTomamichel2016correlation}. 

\begin{definition}
\label{definition:sand}
Let $\alpha\in(0,+\infty)\setminus\{1\}$, $\rho\in\mc{S}(\mc{H})$ and $\sigma\in\mc{P}(\mc{H})$.
When $\alpha >1$ and $\supp(\rho)\subseteq\supp(\sigma)$ or $\alpha\in (0,1)$ and $\supp(\rho)\not\perp\supp(\sigma)$, the Petz R\'enyi divergence $D_{\alpha}(\rho\|\sigma)$ and the sandwiched R{\'e}nyi divergence $D_\alpha^*(\rho\|\sigma)$
are defined as
\begin{align}
&D_{\alpha}(\rho \| \sigma):=\frac{1}{\alpha-1} \log Q_{\alpha}(\rho \| \sigma)-\frac{1}{\alpha-1} \log \tr \rho,
\quad\text{with}\ \
Q_{\alpha}(\rho \| \sigma)=\tr \rho^\alpha \sigma^{1-\alpha}; \\
&D_{\alpha}^*(\rho \| \sigma):=\frac{1}{\alpha-1} \log Q_{\alpha}^*(\rho \| \sigma)-\frac{1}{\alpha-1} \log \tr \rho,
\quad\text{with}\ \
Q_{\alpha}^*(\rho \| \sigma)=\tr {({\sigma}^{\frac{1-\alpha}{2\alpha}} \rho {\sigma}^{\frac{1-\alpha}{2\alpha}})}^\alpha;
\end{align}
otherwise, we set $D_{\alpha}^*(\rho \| \sigma)=+\infty$. 
\end{definition}
When $\alpha$ tends to $1$, $D_\alpha(\rho\|\sigma)$ and $D_\alpha^*(\rho\|\sigma)$ converge to the quantum relative entropy and when $\alpha$ goes to infinity,  $D_\alpha^*(\rho\|\sigma)$ converges to the max-relative entropy~\cite{Datta2009max}
\begin{equation}
D_{\rm{max}}(\rho\|\sigma):=\inf\{\lambda~|~\rho \leq 2^\lambda \sigma\}.
\end{equation}

Similar to the definition of the conditional entropy, the Petz conditional R\'enyi entropy~\cite{TBH2014relating} and the  sandwiched conditional R\'enyi entropy~\cite{MDSFT2013on} of order $\alpha\in(0,+\infty)\setminus\{1\}$ for a bipartite state $\rho_{AB}$ are defined as
\begin{align}
H_\alpha(A|B)_\rho:&=-\inf_{\sigma_B \in \mc{S}(B)} D_{\alpha}(\rho_{AB}\|I_A \ox \sigma_B), \\ \label{def:con2}
\bar{H}_\alpha(A|B)_\rho:&= D_{\alpha}(\rho_{AB}\|I_A \ox \rho_B), \\ 
H^*_\alpha(A|B)_\rho:&=-\inf_{\sigma_B \in \mc{S}(B)} D^*_{\alpha}(\rho_{AB}\|I_A \ox \sigma_B), \\
\bar{H}^*_\alpha(A|B)_\rho:&= D_{\alpha}(\rho_{AB}\|I_A \ox \rho_B).
\end{align}
Likewise, the Petz R\'enyi mutual information~\cite{HayashiTomamichel2016correlation} and the sandwiched R\'enyi mutual information~\cite{WWY2014strong, Beigi2013sandwiched, MckinlayTomamichel2020decomposition}  of
order $\alpha \in (0,+\infty)\setminus\{1\}$ for $\rho_{AB} \in \mc{S}(AB)$ are defined as
\begin{align}
\label{def:mul1}
I_\alpha(A:B)_\rho:&=\inf_{\sigma_B \in \mc{S}(B)} D_{\alpha}(\rho_{AB}\|\rho_A \ox \sigma_B), \\ 
I^*_\alpha(A:B)_\rho:&=-\inf_{\sigma_B \in \mc{S}(B)} D^*_{\alpha}(\rho_{AB}\|\rho_A \ox \sigma_B).
\end{align}

In the following proposition, we collect some important properties of these quantum R\'enyi divergences.
\begin{proposition}
\label{prop:mainpro}
For $\rho ,\sigma \in \mc{P}(\mc{H})$, the Petz R\'enyi divergence and the sandwiched R{\'e}nyi
divergence satisfy the following properties.
\begin{enumerate}[(i)]
  \item Monotonicity in $\sigma$~\cite{MDSFT2013on,MosonyiOgawa2017strong}: if $\sigma' \geq \sigma$, then $D^{(t)}_{\alpha}(\rho \| \sigma') \leq D^{(t)}_{\alpha}(\rho \| \sigma)$,
      for $(t)=\{\}$, $\alpha \in [0,+\infty)$ and $(t)=*$,  $\alpha \in [\frac{1}{2},+\infty)$;
   \item  Data processing inequality\cite{Petz1986quasi, FrankLieb2013monotonicity, Beigi2013sandwiched, MDSFT2013on, WWY2014strong, MosonyiOgawa2017strong}: for any quantum channel $\mc{N}$ from $\mc{L}(\mc{H})$ to $\mc{L}(\mc{H}')$, we have
      \begin{equation}
      D_{\alpha}^{(\rm{t})}(\mc{N}(\rho) \| \mc{N}(\sigma)) \leq D_{\alpha}^{(\rm{t})}(\rho \| \sigma),
      \end{equation}
      for $(t)=\{\}$, $\alpha \in [0,2]$ and $(t)=*$, $\alpha \in [\frac{1}{2},+\infty)$;
 \item Duality relation~\cite{HayashiTomamichel2016correlation}: for pure state $\rho_{ABC}\in \mc{S}(ABC)$ and $\tau_A \in \mc{P}(A)$ such that $\supp(\rho_A) \subseteq \supp(\tau_A)$, we have
  \begin{equation}
 \begin{split}
\inf_{\sigma_B \in \mc{S}(B)}D_\alpha^*(\rho_{AB} \|\tau_A \ox \sigma_B)&=- \inf_{\sigma_C \in \mc{S}(C)}D_\beta^*(\rho_{AC} \|\tau^{-1}_A \ox \sigma_C),~\text{for}~\alpha \in [\frac{1}{2},+\infty],~\frac{1}{\alpha}+\frac{1}{\beta}=2,\\
 \inf_{\sigma_B \in \mc{S}(B)}
D_\alpha(\rho_{AB} \|\tau_A \ox \sigma_B)&=-D_\beta^*(\rho_{AC} \|\tau^{-1}_A \ox \rho_C),~\text{for}~\alpha \in [\frac{1}{2},+\infty],~\beta=\frac{1}{\alpha},\\
D_\alpha(\rho_{AB} \|\tau_A \ox \rho_B)&=-D_\beta(\rho_{AC} \|\tau^{-1}_A \ox \rho_C),~\text{for}~\alpha \in [0,2],~\alpha+\beta=2;
 \end{split}
 \end{equation}
\item Approximation by pinching~\cite{MosonyiOgawa2015quantum,HayashiTomamichel2016correlation}:
        for $\alpha\geq 0$, we have
        \begin{equation}
        D^*_\alpha(\mc{P}_\sigma(\rho)\|\sigma)
        \leq D^*_\alpha(\rho\|\sigma)
        \leq D^*_\alpha(\mc{P}_\sigma(\rho)\|\sigma)+2\log v(\sigma);
        \end{equation}
\item Convexity in $\alpha$~\cite{MosonyiOgawa2017strong}:  the function $\alpha \mapsto\log Q^{(t)}_\alpha(\rho \| \sigma)$ is convex on $(0, +\infty)$ for $t=\{\}$ and $t=*$;
\item Convexity in $\sigma$~\cite{MosonyiOgawa2017strong}: the function $\sigma \mapsto D_{\alpha}^{(\rm{t})}(\rho \| \sigma)$ is convex on $\mc{P}(\mc{H})$ for $t=\{\}$, $\alpha \in [0,2]$ and $t=*$, $\alpha \in [\frac{1}{2}, +\infty)$;
\item Continuity in $\alpha$~\cite{MosonyiOgawa2017strong}:  the function $\alpha \mapsto\log Q^{(t)}_\alpha(\rho \| \sigma)$ is continuous on $[0, +\infty]$ for $t=\{\}$ and $t=*$;
\item Lower semi-continuity in $\sigma$~\cite{MosonyiOgawa2017strong}: the function $\sigma \mapsto D_{\alpha}^{(\rm{t})}(\rho \| \sigma)$ is lower semi-continuous on $\mc{P}(\mc{H})$ for $t=\{\}$, $\alpha \in (0,\infty) \setminus \{1\}$ and $t=*$, $\alpha \in (0,\infty) \setminus \{1\}$;
\item Additivity of the quantum R\'enyi mutual information~\cite{HayashiTomamichel2016correlation}:
        for $\rho_{AB}\in\mc{S}(AB)$, $\sigma_{A'B'} \in \mc{S}(A'B')$, we have
        \begin{equation}
        I_{\alpha}^{(\rm{t})}(AA':BB')_{\rho \ox \sigma}=I_{\alpha}^{(\rm{t})}(A:B)_\rho+I_{\alpha}^{(\rm{t})}(A':B')_\sigma,
        \end{equation}
   for $(t)=\{\}$, $\alpha \in [0,+\infty)$ and $(t)=*$, $\alpha \in [\frac{1}{2},+\infty)$;  
\item  Monotonicity under discarding classical information~\cite{LWD2016strong}:
        for the state $\sigma_{XAB}$ that is classical on $X$ and for $\alpha\in
        (0,+\infty)$, we have
        \begin{equation}
           \bar{H}^*_{\alpha}(AX|B)_\sigma \geq \bar{H}^*_{\alpha}(A|B)_\sigma;
        \end{equation}
\item  Variational expression~\cite{MosonyiOgawa2017strong}:
      when $\rho$ commutes with $\sigma$, we have
      \begin{equation}
          D_{\alpha}(\rho \| \sigma)= \begin{cases}
         \min\limits_{\tau \in \mc{S}_\rho(\mc{H})} \big\{D(\tau \| \sigma)
         -\frac{\alpha}{\alpha-1}D(\tau \| \rho)\big\}, & \alpha \in (0,1), \\
         \max\limits_{\tau \in \mc{S}_\rho(\mc{H})} \big\{D(\tau \| \sigma)
         -\frac{\alpha}{\alpha-1}D(\tau \| \rho)\big\}, & \alpha \in (1,+\infty),
      \end{cases} 
      \end{equation}
      where $\mc{S}_\rho(\mc{H}):=\{\tau~|~\tau \in \mc{P}(\mc{H}),~\supp(\tau) \subseteq \supp(\rho),~\tau~commutes~with~
      \rho~and~\sigma\}$.
\end{enumerate}
\end{proposition}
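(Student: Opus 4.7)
The plan is to verify each of the eleven items individually, since Proposition~\ref{prop:mainpro} compiles standard properties of the Petz and sandwiched R\'enyi divergences scattered across the cited literature rather than stating a single unified theorem. I would proceed by grouping the items by the main tool required, dispatch each group in turn, and defer to the existing arguments in the references wherever they apply; the role of the proof here is really to confirm that each statement as I have written it matches what is proved in the corresponding source, and to record the non-obvious side conditions on $\alpha$.

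First I would handle the purely order-theoretic and analytic items. Monotonicity in $\sigma$ (i) follows from operator monotonicity of $x\mapsto x^s$ for $s\in[0,1]$ after the standard cyclic rewriting of $Q_\alpha$ and $Q_\alpha^*$, together with the sign of $\tfrac{1}{\alpha-1}$ on either side of $\alpha=1$. Convexity in $\sigma$ (vi) and lower semi-continuity in $\sigma$ (viii) reduce (with a separate $\alpha\to 1$ check) to joint convexity and joint lower semi-continuity of the Umegaki relative entropy, which themselves come from the Lieb concavity theorem. Convexity in $\alpha$ (v) follows from H\"older's inequality applied to $\log\tr\rho^\alpha\sigma^{1-\alpha}$ and to the pinched sandwiched analogue; continuity in $\alpha$ (vii) comes from analyticity of these trace functionals on the open interior of their domain together with a boundary check at $\alpha\in\{0,1,+\infty\}$.

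Next I would treat the operational items. Data processing (ii) is the deepest general fact: on the Petz side for $\alpha\in[0,2]$ I would quote Petz via his recovery channel, and on the sandwiched side for $\alpha\ge 1/2$ I would quote Frank--Lieb and Beigi. The pinching bound (iv) is immediate from the pinching inequality $\rho\le v(\sigma)\mathcal{E}_\sigma(\rho)$ combined with monotonicity in the first argument under the unital channel $\mathcal{E}_\sigma$, which commutes with $\sigma$. Additivity of the mutual information (ix) follows because the product of individually optimal $\sigma_B$'s is optimal on the tensor product, exploiting multiplicativity of $Q_\alpha^{(t)}$ on product operators. Monotonicity under discarding classical information (x) is simply data processing applied to the partial trace of the classical register $X$.

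The hardest single item is the duality relation (iii), and this is where I expect the main obstacle. I would use a purification $|\rho\rangle_{ABC}$ and an exchange between $B$ and $C$ via polar decomposition, writing $Q_\alpha^*(\rho_{AB}\|\tau_A\otimes\sigma_B)$ as an $\alpha$-Schatten norm of an operator between purification spaces and then applying the Riesz--Thorin-type identity underlying the Hayashi--Tomamichel duality. The delicate point is tracking the weight $\tau_A$ through the inverse on the dual side and matching the optimization over $\sigma_B$ with the one over $\sigma_C$ after the substitution $\tfrac{1}{\alpha}+\tfrac{1}{\beta}=2$; I would first settle the unweighted case $\tau_A=\rho_A$ and then extend by the similarity $\rho_{AB}\mapsto \tau_A^{1/2}\rho_A^{-1/2}\rho_{AB}\rho_A^{-1/2}\tau_A^{1/2}$. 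The Petz-type dualities in (iii) are then read off from the sandwiched one by sending $\alpha\to 1/\alpha$ or $2-\alpha$. Finally, the variational formula (xi) is a Legendre-transform calculation for $\tau\mapsto D(\tau\|\sigma)-\tfrac{\alpha}{\alpha-1}D(\tau\|\rho)$ in the commuting case, whose extremum is attained at $\tau\propto \rho^\alpha\sigma^{1-\alpha}$, and this becomes routine once (iii) has been checked.
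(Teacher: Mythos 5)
The paper gives no proof of this proposition at all: it is a compendium of known properties of the Petz and sandwiched R\'enyi divergences, each backed only by a citation, so your item-by-item plan of deferring to the same references is essentially the paper's own approach. The only caveat is that one or two of your one-line reductions are weaker than what the cited results actually deliver (the pinching bound (iv) for $\alpha<1$, and with the uniform constant $2\log v(\sigma)$ on $\alpha\in(1,2)$, does not follow just from $\rho\le v(\sigma)\mathcal{E}_\sigma(\rho)$ plus monotonicity in the first argument, and the operator-(anti)monotonicity shortcut for item (i) only covers the Petz case up to $\alpha=2$), but since you explicitly defer to the sources for the actual arguments this does not amount to a genuine gap.
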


%*******************************************************************************

\subsection{Methods of types}

The method of types is a basic tool in information theory. We introduce the relevant definitions and properties in this section. We refer interested readers to~\cite{CsiszarKorner2011information}  for an overall introduction.

Let $\mc{X}$ be a finite alphabet set. For a sequence $x^n \in \mc{X}^{\times n}$, the type $t_{x^n}$ is the empirical distribution of $x^n$, i.e.,
\beq
t_{x^n}(a)=\sum_{i=1}^n \frac{\delta_{x_i,a}}{n}, \quad \forall a \in \mc{X}.
\eeq
We use the notation $\mc{T}_n^{\mc{X}}$ to represent the set of all types. The size of $\mc{T}_n^{\mc{X}}$ can be bounded as
\begin{equation}
\label{eq:typenumber}
|\mc{T}_n^{\mc{X}}| \leq (n+1)^{|\mc{X}|}.
\end{equation}
If $t\in\mc{T}_n^{\mc{X}}$, then we denote the set of all the sequences of type $t$ as $T_n^t$, that is,
\beq
T_n^t:=\{x^n~|~t_{x^n}=t\}.
\eeq
The size of  $T_n^t$ satisfies the following relation:
\begin{equation}
\label{eq:numt}
(n+1)^{-|\mc{X}|} 2^{nH(t)} \leq |T_n^t| \leq 2^{nH(t)},
\end{equation}
where $H(t)=-\sum\limits_{x\in\mc{X}} t(x)\log t(x)$ is the Shannon entropy of $t$. Let $X_1,X_2,\ldots,X_n$ be a sequence of i.i.d.\ random variables each taking values in $\mc{X}$ according to a distribution $p$. Then the probability of $X^n$ being equal to $x^n$ of type $t$ is given by
\begin{equation}
\label{eq:prot}
p^n(x^n)=\prod_{i=1}^np(x_i)=2^{-nH(t)-nD(t\|p)},
\end{equation}
where $D(t\|p)=\sum\limits_{x\in\mc{X}}t(x)\log\frac{t(x)}{p(x)}$ is the relative entropy. The combination of Eq.~(\ref{eq:numt}) and Eq.~(\ref{eq:prot}) gives that the probability of $X^n$ taking values in  $T_n^t$, given by $\sum\limits_{x^n\in T_n^t}p^n(x^n)$, can be bounded as
\begin{equation}
\label{eq:proset}
(n+1)^{-|\mc{X}|}2^{-nD(t\|p)} \leq \sum_{x^n\in T_n^t}p^n(x^n) \leq 2^{-nD(t\|p)}.
\end{equation}

%*******************************************************************************

\section{Main results}
\label{sec:problem}

For $\rho_{RA} \in \mc{S}(RA)$ and $\epsilon \in [0,1]$, the partially smoothed mutual max-information~\cite{ABJT2020partially} and the partially smoothed conditional min-entropy~\cite{ABJT2020partially} are defined, respectively as 
\begin{align}
\begin{split}
I_{\rm{max}}^{\epsilon, \Delta}(\dot{R}:A)_\rho&:=\inf_{\tilde{\rho}_{RA} \in \mc{G}^{\epsilon, \Delta}(\rho_{RA})} \inf_{\sigma_{A} \in \mc{S}(A)} D_{\rm{max}}(\tilde{\rho}_{RA} \| \rho_R \otimes \sigma_A),\\
H_{\rm{min}}^{\epsilon, \Delta}(A|\dot{R})_\rho&:=-\inf_{\tilde{\rho}_{RA} \in \mc{F}^{\epsilon, \Delta}(\rho_{RA})} D_{\rm{max}}(\tilde{\rho}_{RA} \| I_A \otimes \rho_R),
\end{split}
\end{align}
where $\Delta$ can be chosen as the trace distance or the purified distance and
\begin{align}
\begin{split}
&\mc{G}^{\epsilon, \Delta}(\rho_{RA}):=\{\tilde{\rho}_{RA} \in \mc{S}(RA)~|~\Delta(\rho_{RA}, \tilde{\rho}_{RA})\leq \epsilon, \tilde{\rho}_R=\rho_R \},\\
&\mc{F}^{\epsilon, \Delta}(\rho_{RA}):=\{\tilde{\rho}_{RA} \in \mc{S}_{\leq}(RA)~|~\Delta(\rho_{RA}, \tilde{\rho}_{RA})\leq \epsilon, \tilde{\rho}_R\leq \rho_R \}.
\end{split}
\end{align}
They are important tools in one-shot quantum information theory and have been used to give tight characterizations for the optimal performances of several quantum information processing tasks~\cite{ABJT2020partially, CRBT2024channel, FWTB2024channel, FBB2021thermodynamic, ADJ2017quantum}.

Regarding $I_{\rm{max}}^{\epsilon, \Delta}(\dot{R}:A)_\rho$ and $H_{\rm{min}}^{\epsilon, \Delta}(A|\dot{R})_\rho$ as functions of $\epsilon$, their inverse functions are given by
\begin{align}
\begin{split}
\epsilon^{\Delta}_{\dot{R}:A}(\rho_{RA},\lambda):&=\min\{ \epsilon~|~I_{\rm{max}}^{\epsilon, \Delta}(\dot{R}:A)_\rho \leq \lambda\} \\
&= \min\{ \Delta(\tilde{\rho}_{RA}, \rho_{RA})~|~\tilde{\rho}_{RA} \in \mc{S}(RA), (\exists \sigma_A \in \mc{S}(A)),\tilde{\rho}_{RA}\leq 2^\lambda \rho_R \otimes \sigma_A, \tilde{\rho}_{R}=\rho_R   \} ,\\
\epsilon^{\Delta}_{A|\dot{R}}(\rho_{RA},\lambda):&=\min\{ \epsilon~|~H_{\rm{min}}^{\epsilon, \Delta}(A|\dot{R})_\rho \geq \lambda\}  \\
&= \min\{ \Delta(\tilde{\rho}_{RA}, \rho_{RA})~|~\tilde{\rho}_{RA} \in \mc{S}_{\leq}(RA), \tilde{\rho}_{RA}\leq 2^{-\lambda} I_A \otimes \rho_R, \tilde{\rho}_{R}\leq \rho_R   \}.
\end{split}
\end{align} 
The reference~\cite{ABJT2020partially}  has derived the first-order asymptotics for $I_{\rm{max}}^{\epsilon, \Delta}(\dot{R}:A)_\rho$ and $H_{\rm{min}}^{\epsilon, \Delta}(A|\dot{R})_\rho$, i.e,
\begin{equation}
\label{equ:asyeq}
\begin{split}
\lim_{n \rightarrow \infty}\frac{1}{n} I_{\rm{max}}^{\epsilon, \Delta}(\dot{R^n}:A^n)_{\rho^{\ox n}} &=I(R:A)_\rho, \\
\lim_{n \rightarrow \infty}\frac{1}{n} H_{\rm{min}}^{\epsilon, \Delta}(A^n|\dot{R^n})_{\rho^{\ox n}} &=H(A|R)_\rho.
\end{split}
\end{equation}
Eq.~(\ref{equ:asyeq}) implies that 
when $r < I(R:A)_\rho$~($r>H(A|R)_\rho$), $\epsilon_{\dot{R^n}:A^n}^\Delta(\rho_{RA}^{\otimes n}, nr)$~($\epsilon_{A^n|\dot{R^n}}^\Delta(\rho_{RA}^{\otimes n}, nr)$) converges to $1$ exponentially fast. The exact rate of this exponential convergence is called the strong converse exponent for partially smoothing of the  mutual max-information~(partially smoothing of the conditional min-entropy).

In this paper, we establish the exact strong converse exponents of these two partially smoothed information measures for classical states and pure states.  Our main results are stated below.
\begin{theorem}
\label{thm:main1}
Let $\rho_{RA} \in \mc{S}(RA)$ be a classical state, i.e., $\rho_{RA}$ can be regarded as a bipartite probability distribution $\{p(x,a)\}_{(x,a)\in \mc{R} \times \mc{A}}$.  For any $r \in \mathbb{R}$, we have
\begin{align}
\label{equ:mainl}
\lim_{n \rightarrow \infty} \frac{-1}{n} \log(1-\epsilon^d_{A^n|\dot{R^n}}(\rho_{RA}^{\otimes n}, nr)) &= \sup_{0 \leq \alpha \leq 1}
(1-\alpha)\big\{r-\bar{H}_{\alpha}(A|R)_p\big\}, \\
\label{equ:clamutual}
\lim_{n \rightarrow \infty} \frac{-1}{n} \log(1-\epsilon^d_{\dot{R^n}:A^n}(\rho_{RA}^{\otimes n}, nr)) &= \sup_{0 \leq \alpha \leq 1}
(1-\alpha)\big\{I_\alpha(R:A)_p-r\big\}, \\
\label{equ:claconditional}
\lim_{n \rightarrow \infty} \frac{-1}{n} \log(1-\epsilon^P_{A^n|\dot{R^n}}(\rho_{RA}^{\otimes n}, nr)) &= \sup_{\frac{1}{2} \leq \alpha \leq 1}\inf_{t \in \mc{Q}(\mc{R})}
\big\{2D(t\|p)+\frac{1-\alpha}{\alpha}\big(r-\mathbb{E}_{x\sim t}H_\alpha(p(\cdot|x))\big)\big\}.
\end{align}
\end{theorem}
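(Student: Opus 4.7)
The statement of Theorem~\ref{thm:main1} comprises three asymptotic identities, and my plan handles them via three related but distinct strategies, all grounded in method-of-types together with R\'enyi-divergence characterisations.

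For the two trace-distance identities Eqs.~(\ref{equ:mainl}) and~(\ref{equ:clamutual}), the first step is to reduce each smoothing problem to an explicit single-letter optimisation in the classical setting. A water-filling argument (the classical specialisation of~\cite[Proposition~2]{AbdelhadiRenes2020second}) makes the optimal $\tilde{p}$ explicit: for the conditional min-entropy it takes the form $\tilde{p}(x,a)=\min\{p(x,a),2^{-\lambda}p(x)\}$, so that $1-\epsilon^{d}_{A|\dot{R}}(p,\lambda)$ reduces to $\sum_{x,a}\min\{p(x,a),2^{-\lambda}p(x)\}$ up to an affine correction; an analogous reduction after optimising $\sigma_A$ covers the mutual max-information case. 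The asymptotic analysis then becomes a classical large-deviation problem. For the achievability half of each identity the elementary inequality $\min(u,v)\leq u^{\alpha}v^{1-\alpha}$ for $\alpha\in[0,1]$ produces an upper bound of order $2^{-n(1-\alpha)\{r-\bar{H}_{\alpha}(A|R)\}}$ (respectively $2^{-n(1-\alpha)\{I_{\alpha}(R:A)-r\}}$) via the very definitions of the Petz R\'enyi conditional entropy and mutual information, and optimising over $\alpha$ yields the claimed supremum. The matching converse half follows either from~\cite[Lemma~2]{OCCB2024exponents} or from a direct application of Cram\'er's theorem to the i.i.d.\ sequence $-\log p(A_i|X_i)$ (respectively $\log\tfrac{p(X_i,A_i)}{p(X_i)\sigma^{*}(A_i)}$ for the optimiser $\sigma^{*}$), whose cumulant-generating function coincides with $(1-\alpha)\bar{H}_{\alpha}(A|R)$ (respectively $(1-\alpha)I_{\alpha}(R:A)$).

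For Eq.~(\ref{equ:claconditional}) I would switch to the fidelity $F(p,\tilde{p})=\sum_{x,a}\sqrt{p(x,a)\tilde{p}(x,a)}$, using $\epsilon^{P}=\sqrt{1-F^2}$ together with the elementary sandwich $F^2/2\leq 1-\epsilon^{P}\leq F^2$ to reduce the strong-converse exponent to $-\lim_n\tfrac{1}{n}\log F^2$. For the achievability direction (upper bound on $F$, i.e.\ lower bound on the exponent), I would apply the operator H\"older inequality of~\cite{WangWilde2019resource} recorded as Lemma~\ref{lem:hof}, exploiting the two constraints $\tilde{p}\leq 2^{-\lambda}p_R$ and $\sum\tilde{p}\leq 1$ to factor $\sqrt{\tilde{p}}$ into suitable powers and then grouping the outer sum over $x^n$ by type $t\in\mc{Q}(\mc{R})$; this yields an estimate of the form $F^2\leq\mathrm{poly}(n)\cdot\max_t 2^{-n[2D(t\|p)+\frac{1-\alpha}{\alpha}(r-\mathbb{E}_{x\sim t}H_{\alpha}(p(\cdot|x)))]}$ after optimising over $\alpha\in[1/2,1]$. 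For the converse direction (lower bound on $F$), I would construct $\tilde{p}^{(n)}$ type-by-type: within each $T_n^t$ the $A^n$-side smoothing is supplied by the optimal sub-normalised distribution from classical privacy amplification, whose strong converse exponent will be established as an independent intermediate theorem of this paper; the allocation of probability mass across types is then arranged so that the aggregate contribution to $F$ matches the type-wise upper bound.

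The principal technical hurdle is Eq.~(\ref{equ:claconditional}). The factor~$2$ in front of $D(t\|p)$ is delicate and does not come from any naive per-$x$ H\"older estimate; producing it requires treating the $A^n$-side smoothing via privacy amplification and only afterwards summing the contributions from different $R^n$-types, so that the type-probability weights enter $F$ in the correct power. Establishing the strong converse exponent of classical privacy amplification with the precise form needed here is the single heaviest ingredient, but once in hand it feeds directly into the type-decomposition just sketched.
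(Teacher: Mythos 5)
Your overall architecture tracks the paper's quite closely, and parts of it are fine: for Eq.~(\ref{equ:mainl}) the reduction via the classical case of~\cite[Proposition~2]{AbdelhadiRenes2020second} followed by a clipping construction and a Cram\'er analysis of $-\log p(A_i|X_i)$ is a legitimate self-contained substitute for the paper's citation of~\cite{SalzmannDatta2022total}; and your ``upper bound on $F$'' direction for Eq.~(\ref{equ:claconditional}) (per-$x^n$ application of Lemma~\ref{lem:hof} plus grouping by types) is exactly the paper's optimality proof\,---\,indeed, contrary to your closing remark, the factor $2$ in front of $D(t\|p)$ \emph{does} come from that naive per-$x^n$ H\"older estimate: the type probability enters the fidelity linearly (it multiplies both arguments), so squaring $F$ doubles it. However, there are two genuine gaps. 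First, for Eq.~(\ref{equ:clamutual}) the hard direction (upper bound on $1-\epsilon^d$) must cope with an \emph{arbitrary} catalytic state $\sigma_{A^n}$, which need not be i.i.d.; after your $\min(u,v)\le u^\alpha v^{1-\alpha}$ step you still need $\inf_{\sigma_{A^n}}D_\alpha(p^{\otimes n}\|p_R^{\otimes n}\otimes\sigma_{A^n})=nI_\alpha(R:A)$, which does not follow ``from the very definition'' of $I_\alpha$; the paper obtains it from the universal symmetric state (Lemma~\ref{lem:sym}) together with additivity (Proposition~\ref{prop:mainpro}(ix)), and classically one could invoke the Sibson identity, but some such argument must be supplied. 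Relatedly, your Cram\'er argument for the other direction fixes a single optimizer $\sigma^{*}$, whereas the minimizer in $I_\alpha$ depends on $\alpha$, so you still need the Sion minimax exchange $\inf_{\sigma}\sup_{\alpha}=\sup_{\alpha}\inf_{\sigma}$ that the paper performs.

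Second, and more seriously, your treatment of Eq.~(\ref{equ:claconditional}) leaves its heaviest component unproved and its interface unspecified. You defer the strong converse exponent of classical privacy amplification to ``an independent intermediate theorem'' without any proof sketch; in the paper this is Theorem~\ref{thm:privacy}, whose achievability is the bulk of the work (a variational rewriting of the exponent, a tilted state $\tau_{RA}$, Hayashi's exact randomness-extraction theorem~\cite{Hayashi2015precise}, restriction to typical types, Fannes--Audenaert and relative-entropy--fidelity estimates). Moreover, even granting that theorem, your bridge from a privacy-amplification protocol to a feasible smoothing is vague: PA produces hash functions and output distributions on $\mc{Z}_n$, not sub-normalized distributions $\tilde p_{R^nA^n}\le 2^{-nr}I_{A^n}\otimes p_R^{\otimes n}$, so ``the optimal sub-normalised distribution from classical privacy amplification, allocated type-by-type'' is not yet an argument. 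The paper closes this gap with the hashing monotonicity $H^{\epsilon,P}_{\rm{min}}(A^n|\dot R^n)_{\rho^{\otimes n}}\ge H^{\epsilon,P}_{\rm{min}}(Z_n|\dot R^n)_{\Phi_{f_n}(\rho^{\otimes n})}$ (Lemma~\ref{lem:monohash}), which immediately gives $\inf_{f_n}P(\Phi_{f_n}(\rho^{\otimes n}),\rho_R^{\otimes n}\otimes I_{Z_n}/|\mc{Z}_n|)\ge\epsilon^P_{A^n|\dot R^n}(\rho^{\otimes n},nr)$; you would need either this lemma or an explicit per-type construction with a verified fidelity bound. As it stands, the proof of the third identity is incomplete on its achievability side.
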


\begin{theorem}
\label{thm:main}
For any pure state $\ket{\rho_{RA}}=\sum\limits_{x \in \mc{X}} \sqrt{p(x)}\ket{x}_R \ox \ket{x}_A$, $r \in \mathbb{R}$ and $\Delta \in \{d, P\}$, we have
\begin{align}
\label{equ:main}
\lim_{n \rightarrow \infty} \frac{-1}{n} \log(1-\epsilon^\Delta_{A^n|\dot{R^n}}(\rho_{RA}^{\otimes n}, nr))&= \inf_{t \in \mc{Q}(\mc{X})} \big\{2D(t\|p)+|r+H(t)|^+  \big\}, \\
 \label{equ:puremutual}
\lim_{n \rightarrow \infty} \frac{-1}{n} \log(1-\epsilon_{\dot{R^n}:A^n}^\Delta(\rho_{RA}^{\otimes n}, nr))&= \sup_{\beta>1}
\frac{\beta-1}{\beta}\big\{2H_{\beta}(R)_\rho-r\big\},
\end{align}
\end{theorem}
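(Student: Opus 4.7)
My plan is to first establish the purified-distance versions of Eqs.~\ref{equ:main} and~\ref{equ:puremutual} and then deduce the trace-distance statements via the improved Fuchs--van de Graaf inequality (Lemma~\ref{lem:appen1} in the appendix). The key simplification in the pure-state setting is that $F^2(\rho^{\otimes n},\tilde\rho)=\bra{\rho^{\otimes n}}\tilde\rho\ket{\rho^{\otimes n}}$ for any candidate $\tilde\rho$, so the exponential rate of $1-\epsilon^P$ agrees, up to subexponential corrections, with that of this single overlap. Both smoothing problems thereby reduce to estimating $\bra{\rho^{\otimes n}}\tilde\rho\ket{\rho^{\otimes n}}$ subject to the defining operator inequalities on $\tilde\rho$.

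\textbf{Converse direction.} For Eq.~\ref{equ:puremutual}, I would take any feasible $\tilde\rho_{R^nA^n}\leq 2^{nr}\rho_R^{\otimes n}\otimes\sigma_{A^n}$ and use $\bra{\rho^{\otimes n}}\tilde\rho\ket{\rho^{\otimes n}}\leq 2^{nr}\bra{\rho^{\otimes n}}\rho_R^{\otimes n}\otimes\sigma_{A^n}\ket{\rho^{\otimes n}}$. Applying the operator H\"older inequality through Lemma~\ref{lem:hof} with parameter $\alpha=1/\beta\in(1/2,1)$ factors the overlap into a piece that produces $\tr\rho_R^\beta$ (and therefore $2H_\beta(R)_\rho$ on the $-\log/n$ scale) and an $\alpha$-norm of $\sigma_{A^n}$ that is controlled independently of the choice of $\sigma$ through the Schmidt structure; optimising over $\beta>1$ yields the claimed upper bound. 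For Eq.~\ref{equ:main}, the same H\"older step with the constraints $\tilde\rho\leq 2^{-nr}I_{A^n}\otimes\rho_R^{\otimes n}$ and $\tilde\rho_{R^n}\leq\rho_R^{\otimes n}$ must be combined with a method-of-types decomposition of $\rho_R^{\otimes n}$ (Eqs.~\ref{eq:numt}--\ref{eq:proset}): each type class contributes a separately optimised bound of the form $\min\{2^{-n(r+H(t))},1\}\cdot 2^{-2nD(t\|p)}$, whose $-\log/n$ limit recovers $\inf_t\{2D(t\|p)+|r+H(t)|^+\}$.

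\textbf{Achievability direction.} For each inequality I would construct an explicit near-optimal $\tilde\rho_{R^nA^n}$. For Eq.~\ref{equ:main}, given a target type $t\in\mc{T}_n^{\mc{X}}$ I would take a renormalised truncation $\ket{\tilde\rho_{R^nA^n}}\propto\sum_{x^n\in T_n^t}\sqrt{p(x^n)}\ket{x^n}_R\ket{x^n}_A$, adjusted to a sub-normalised variant ensuring $\tilde\rho_{R^n}\leq\rho_R^{\otimes n}$. This vector lies in a Schmidt subspace of dimension $|T_n^t|\leq 2^{nH(t)}$, so it satisfies the min-entropy constraint whenever $r\leq-H(t)$, and by Eq.~\ref{eq:proset} its $F^2$ overlap with $\rho^{\otimes n}$ is of order $2^{-2nD(t\|p)}$; infimising over $t$ reproduces the converse. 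For Eq.~\ref{equ:puremutual}, I would similarly pick a spectral projector of $\rho_R^{\otimes n}$ onto a suitable type class and build $\tilde\rho_{R^nA^n}$ from the induced Schmidt vectors, taking $\sigma_{A^n}$ to match the reduced state on $A^n$; the resulting overlap and normalisation calculation yields $\sup_{\beta>1}\frac{\beta-1}{\beta}\{2H_\beta(R)_\rho-r\}$ after optimising the type parameter, in line with the known strong converse exponent of privacy amplification for the classical distribution $\{p(x)\}$.

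\textbf{Trace-distance reduction and main obstacle.} The trace-distance case then follows from the purified-distance case through Lemma~\ref{lem:appen1}: the standard Fuchs--van de Graaf bounds $1-\sqrt{1-P^2}\leq d\leq P$ are insufficient on their own, since a priori the exponents of $1-d$ and $1-P$ could differ by a factor of two, but the improved inequality sharpens the lower side enough that the two exponential rates coincide. The most delicate step I anticipate is the converse for Eq.~\ref{equ:main}: the H\"older estimate has to be carried out type-by-type with the optimal R\'enyi parameter depending on the type, and the joint action of the two constraints on $\tilde\rho$ and $\tilde\rho_{R^n}$ must be carefully tracked to produce the positive-part structure $|r+H(t)|^+$ rather than a single clean R\'enyi formula as in Eq.~\ref{equ:puremutual}.
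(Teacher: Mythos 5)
Your plan has the right skeleton (purified distance first, reduction of $F^2$ to the overlap $\bra{\rho^{\ox n}}\tilde\rho\ket{\rho^{\ox n}}$, a type-by-type analysis for Eq.~(\ref{equ:main}), then trace distance via the improved Fuchs--van de Graaf bound, which incidentally is Lemma~\ref{lem:rela}, not Lemma~\ref{lem:appen1}), and your treatment of Eq.~(\ref{equ:main}) is essentially the paper's. But the optimality direction of Eq.~(\ref{equ:puremutual}) as you describe it would not reach the stated exponent. The first step $\bra{\rho^{\ox n}}\tilde\rho\ket{\rho^{\ox n}}\le 2^{nr}\bra{\rho^{\ox n}}\rho_R^{\ox n}\ox\sigma_{A^n}\ket{\rho^{\ox n}}$ throws away the normalisation of $\tilde\rho$ and only gives $2H_{\infty}(R)_\rho-r$; and the sharper route of applying Lemma~\ref{lem:hof} with reference $\tau=\rho_R^{\ox n}\ox\sigma_{A^n}$ (note the lemma needs the pairing $\frac{1}{\alpha}+\frac{1}{\beta}=2$, not $\alpha=1/\beta$) gives the exponent $\sup_{1/2<\alpha<1}\frac{1-\alpha}{\alpha}\{I^*_\alpha(R:A)_\rho-r\}$; since for pure states $\inf_{\sigma}D^*_\alpha(\rho_{RA}\|\rho_R\ox\sigma_A)=2H_{1/(2\alpha-1)}(R)_\rho$, this equals $\sup_{\gamma>1}\frac{\gamma-1}{\gamma+1}\{2H_\gamma(R)_\rho-r\}$, which is in general strictly smaller than the target $\sup_{\beta>1}\frac{\beta-1}{\beta}\{2H_\beta(R)_\rho-r\}$ (try $p=(0.9,0.1)$, $r=0.8$). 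The missing ideas are exactly what the paper puts into Proposition~\ref{propo:main}: pinch $R^n$ into type classes so that inside each class the reference marginal is the flat state $\Pi_t/|T_n^t|$ rather than $\rho_R^{\ox n}$, restrict to symmetric $\sigma_{A^n}$ and dominate it by the universal symmetric state of Lemma~\ref{lem:sym}, carry the $\log P(t)$ bookkeeping of Lemma~\ref{lem:li}, and finally use the pure-state duality $-H^*_\alpha(R|A)_\rho=H_\beta(R)_\rho$ with $\frac1\alpha+\frac1\beta=2$ (Corollary~\ref{cor:conver}); only this produces the $\frac{\beta-1}{\beta}$ prefactor and the factor $2$. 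Your assertion that the ``$\alpha$-norm of $\sigma_{A^n}$ is controlled independently of $\sigma$ through the Schmidt structure'' is not a substitute: $\sigma_{A^n}$ may be an arbitrary correlated state, and without the type flattening the optimisation over $\sigma_{A^n}$ provably caps you at the weaker exponent above.

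Two smaller gaps. (i) In the achievability of Eq.~(\ref{equ:main}) your truncated type-class vector is feasible only when $r\le -H(t)$; but in the strong converse regime $r>H(A|R)_\rho=-H(R)_\rho$ (in particular for every $r\ge 0$) the relevant types have $r+H(t)>0$, and you must additionally rescale the type-class state by $2^{-n(r+H(t))}$ (the paper's choice $\frac{2^{-nr}}{|T_n^t|}\psi_t$ in Section~\ref{sec:constong}) to generate the $|r+H(t)|^+$ term; as written your construction proves nothing for those $r$. (ii) For the achievability of Eq.~(\ref{equ:puremutual}) the relevant classical ingredient is the strong converse exponent of source compression, Eq.~(\ref{equ:strongcla}), not privacy amplification, and the feasibility of your truncated state\,---\,that it satisfies $\tilde\rho\le 2^{nr}\rho_R^{\ox n}\ox\sigma_{A^n}$ for some $\sigma_{A^n}$ while keeping the $R$-marginal exactly $\rho_R^{\ox n}$\,---\,needs an argument; the paper gets it for free by routing through a rank-$2^{nr/2}$ blind compression channel and Lemma~\ref{lem:opein}, whereas your sketch leaves this operator inequality unverified.
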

where we use $|x|^+$ to represent $\max\{x, 0\}$.

\begin{remark}
From Lemma~\ref{lem:rela} in Appendix, if $\rho_{RA}$ is a pure state, we have
\begin{equation}
\label{equ:exeui}
\begin{split}
&\epsilon^d_{A^n|\dot{R^n}}(\rho_{RA}^{\otimes n}, nr) \leq \epsilon^P_{A^n|\dot{R^n}}(\rho_{RA}^{\otimes n}, nr) \leq \sqrt{\epsilon^d_{A^n|\dot{R^n}}(\rho_{RA}^{\otimes n}, nr)}, \\
&\epsilon^d_{\dot{R^n}:A^n}(\rho_{RA}^{\otimes n}, nr) \leq \epsilon^P_{\dot{R^n}:A^n}(\rho_{RA}^{\otimes n}, nr) \leq \sqrt{\epsilon^d_{\dot{R^n}:A^n}(\rho_{RA}^{\otimes n}, nr)}. 
\end{split}
\end{equation}
Eq.~(\ref{equ:exeui}) implied that 
\begin{equation}
\begin{split}
1-\epsilon^d_{A^n|\dot{R^n}}(\rho_{RA}^{\otimes n}, nr) \geq 1-\epsilon^P_{A^n|\dot{R^n}}(\rho_{RA}^{\otimes n}, nr) \geq  \frac{1-\epsilon^d_{A^n|\dot{R^n}}(\rho_{RA}^{\otimes n}, nr)}{1+\sqrt{\epsilon^d_{A^n|\dot{R^n}}(\rho_{RA}^{\otimes n}, nr)}}, \\
1-\epsilon^d_{\dot{R^n}:A^n}(\rho_{RA}^{\otimes n}, nr) \geq 1-\epsilon^P_{\dot{R^n}:A^n}(\rho_{RA}^{\otimes n}, nr) \geq  \frac{1-\epsilon^d_{\dot{R^n}:A^n}(\rho_{RA}^{\otimes n}, nr)}{1+\sqrt{\epsilon^d_{\dot{R^n}:A^n}(\rho_{RA}^{\otimes n}, nr)}}.
\end{split}
\end{equation}
Hence, the strong converse exponents based on trace distance and those based on purified distance are the same. In the following, we only need to establish
Theorem~\ref{thm:main} for purified distance.
\end{remark}

\begin{remark}
The readers might conjecture that Eq.~(\ref{equ:mainl}) and Eq.~(\ref{equ:clamutual}) hold for any quantum states. However, we prove in Appendix that Eq.~(\ref{equ:mainl}) and Eq.~(\ref{equ:clamutual}) can not be converted to Eq.~(\ref{equ:main}) and Eq.~(\ref{equ:puremutual}) when $\rho_{RA}$ is a pure state, thus denying this conjecture.
\end{remark}

%*******************************************************************************

\section{Partially smoothed classical information measures based on trace distance}
\label{sec:classical}

In this section, we establish the strong converse exponents of partially smoothed information measures based on  trace distance for classical states, i.e., Eq.~(\ref{equ:mainl}) and Eq.~(\ref{equ:clamutual}) in Theorem~\ref{thm:main1}. 

\begin{proofof}[of Eq.~(\ref{equ:mainl}) and Eq.~(\ref{equ:clamutual})]
Firstly, we prove Eq.~(\ref{equ:mainl}). It has been proved in~\cite[Proposition 2]{AbdelhadiRenes2020second} that for any classical state $\rho_{RA}$, we have
\begin{equation}
\label{equ:equreal}
H_{\rm{min}}^{\epsilon, d}(A|\dot{R})_{\rho}=H_{\rm{min}}^{\epsilon, d}(A|R)_{\rho},
\end{equation}
where $H_{\rm{min}}^{\epsilon, d}(A|R)_{\rho}:=-\inf\limits_{\tilde{\rho}_{RA} \in \mc{S}_{\leq}(RA):d(\rho_{RA},\tilde{\rho}_{RA})\leq \epsilon} D_{\rm{max}}(\tilde{\rho}_{RA} \|I_A\ox \rho_R)$ is the standard smoothed conditional min-entropy. We can obtain from Eq.~(\ref{equ:equreal}) that
\begin{equation}
\label{equ:global}
\begin{split}
&\epsilon^{d}_{A|\dot{R}}(\rho_{RA},\lambda) \\
=&\min\{ \epsilon~|~H_{\rm{min}}^{\epsilon, \Delta}(A|\dot{R})_\rho \geq \lambda\}  \\
=&\min\{ \epsilon~|~H_{\rm{min}}^{\epsilon, \Delta}(A|R)_\rho \geq \lambda\}  \\
=&\min\{ d(\tilde{\rho}_{RA}, \rho_{RA})~|~\tilde{\rho}_{RA} \in \mc{S}_{\leq}(RA), \tilde{\rho}_{RA}\leq 2^{-\lambda} I_A \otimes \rho_R \}.
\end{split}
\end{equation}
Theorem~2 in~\cite{SalzmannDatta2022total} and Eq.~(\ref{equ:global}) give  
\begin{equation}
\lim_{n \rightarrow \infty} \frac{-1}{n} \log(1-\epsilon^d_{A^n|\dot{R^n}}(\rho_{RA}^{\otimes n}, nr)) =\sup_{0 \leq \alpha \leq 1}
(1-\alpha)\big\{r-\bar{H}_{\alpha}(A|R)_p\big\}.
\end{equation}

Now, we start the proof of Eq.~(\ref{equ:clamutual}). 
We first deal with $``\leq"$ part. We write $\rho_{RA}$ in the form of  classical-quantum states for making the proof easier to read, i.e., $\rho_{RA}=\sum\limits_{x\in \mc{R}}p(x) \proj{x}_R \ox \rho_{A}^x$,  where $\{\ket{x}\}$ is an orthonormal
basis of the underlying Hilbert space $\mc{H}_R$ and $\rho_A^x=\sum\limits_{a\in \mc{A}}p(a|x)\proj{a}$, $\forall~x \in \mc{R}$. We denote the set of quantum states in $\mc{S}(A)$ which have the same eigenvectors with $\{\rho_A^x\}$ as $\mc{S}^{\rho}_{\rm{cla}}(A)$. Then, for any $\sigma_A \in \mc{S}^{
\rho}_{\rm{cla}}(A)$, we have
\begin{equation}
\label{equ:clafirst}
\begin{split}
&\epsilon^d_{\dot{R^n}:A^n}(\rho^{\ox n}_{RA},nr) \\
\leq &\sum_{x^n} p^n(x^n) \inf_{\tilde{\rho}^{x^n} \in \mc{S}(A^n): \tilde{\rho}^{x^n} \leq 2^{nr}\sigma_A^{\ox n}} d(\tilde{\rho}^{x^n},\rho_{A^n}^{x^n})\\
=&\sum_{x^n} p^n(x^n) \tr(\rho_{A^n}^{x^n}-2^{nr}\sigma_A^{\ox n})_+ \\
=&\tr(\rho_{RA}^{\ox n}-2^{nr}\rho_R^{\ox n} \ox \sigma_A^{\ox n})_+ \\
\leq & \tr \rho^{\ox n}_{RA}\{ \rho^{\ox n}_{RA} \geq  2^{nr}\rho_R^{\ox n} \ox \sigma_A^{\ox n}  \},
\end{split}
\end{equation}
where the first equality is from~\cite[Lemma 2]{OCCB2024exponents}. Eq.~(\ref{equ:clafirst}) and~\cite[Remark 4.6]{HMOerror2008}  imply that
\begin{equation}
\label{equ:intermi}
\lim_{n \rightarrow \infty}\frac{-1}{n} \log (1-\epsilon^d_{\dot{R^n}:A^n}(\rho^{\ox n}_{RA},nr)) \leq \sup_{0 \leq \alpha \leq 1} (1-\alpha)\{D_{\alpha}(\rho_{RA}\|\rho_R \ox \sigma_A)-r   \}.
\end{equation}
 Noticing that Eq.~(\ref{equ:intermi}) holds for any $\sigma_A \in \mc{S}^{
 \rho}_{\rm{cla}}(A)$, we have
\begin{equation}
\begin{split}
\lim_{n \rightarrow \infty}\frac{-1}{n} \log (1-\epsilon^d_{\dot{R^n}:A^n}(\rho^{\ox n}_{RA},nr)) &\leq \inf_{\sigma_A \in \mc{S}_{\rm{cla}}(A)}\sup_{0 \leq \alpha \leq 1} (1-\alpha)\{D_{\alpha}(\rho_{RA}\|\rho_R \ox \sigma_A)-r \} \\
&=\sup_{0 \leq \alpha \leq 1} (1-\alpha)\{I_\alpha(R:A)_p-r  \},
\end{split}
\end{equation}
where in the second equality, we use Sion's minimax theorem, the convexity~(concavity) and lower semi-continuous (upper semi-continuous) that this theorem require can be found in Proposition~\ref{prop:mainpro}~(\romannumeral5)-(\romannumeral8).

Next, we turn to the proof of  $``\geq"$ part. 
Because $\rho^{\ox n}_{RA}$ is a symmetric state, it is easy to verify that the infimum in $\epsilon^d_{\dot{R^n}:A^n}(\rho^{\ox n}_{RA},nr)$ can be restricted to over the symmetric states.
Hence, we can evaluate $\epsilon^d_{\dot{R^n}:A^n}(\rho^{\ox n}_{RA},nr)$ as
\begin{equation}
\label{equ:claconve}
\begin{split}
&\epsilon^d_{\dot{R^n}:A^n}(\rho^{\ox n}_{RA},nr)\\
=&\min\{ d(\tilde{\rho}_{R^nA^n}, \rho^{\ox n}_{RA})~|~\tilde{\rho}_{R^nA^n} \in \mc{S}(R^nA^n), (\exists \sigma_{A^n} \in \mc{S}_{\rm{sym}}(A^n)),\tilde{\rho}_{R^nA^n}\leq 2^{nr} \rho^{\ox n}_{R} \otimes \sigma_{A^n}, \tilde{\rho}_{R^n}= \rho^{\ox n}_{R}   \} \\
\geq & \min_{\sigma_{A^n} \in \mc{S}_{\rm{sym}}(A^n)} \tr(\rho_{RA}^{\ox n}-2^{nr}\rho_{R}^{\ox n}\ox \sigma_{A^n})_+ \\
\geq& \tr(\rho_{RA}^{\ox n}-v_{n,|A|}2^{nr}\rho_{R}^{\ox n}\ox \sigma^u_{A^n})_+,
\end{split}
\end{equation}
where the last line is from Lemma~\ref{lem:sym}. From Eq.~(\ref{equ:claconve}) and Eq.~(\ref{equ:discri}), for any $\alpha \in [0,1]$, we have
\begin{equation}
\label{equ:lain}
\begin{split}
& 1-\epsilon^d_{\dot{R^n}:A^n}(\rho_{RA}^{\ox n},nr) \\
\leq & \tr \rho_{RA}^{\ox n} \{\rho_{RA}^{\ox n} \leq v_{n,|A|}2^{nr}\rho_{R}^{\ox n}\ox \sigma^u_{A^n} \}+v_{n,|A|}2^{nr}\tr \rho_{R}^{\ox n}\ox \sigma^u_{A^n}\{\rho_{RA}^{\ox n} >v_{n,|A|}2^{nr}\rho_{R}^{\ox n}\ox \sigma^u_{A^n}\} \\
\leq & v^{1-\alpha}_{n,|A|}2^{nr(1-\alpha)} \tr \big(\rho_{RA}^{\ox n}\big)^\alpha \big(\rho_{R}^{\ox n}\ox \sigma^u_{A^n} \big)^{1-\alpha}.
\end{split}
\end{equation}
 Eq.~(\ref{equ:lain}) gives 
\begin{equation}
\begin{split}
&\lim_{n \rightarrow \infty} \frac{-1}{n} \log (1-\epsilon^d_{\dot{R^n}:A^n}(\rho_{RA}^{\ox n},nr) ) \\
\geq & (1-\alpha)\{ \lim_{n \rightarrow \infty}\frac{D_\alpha(\rho_{RA}^{\ox n}\| \rho_{R}^{\ox n}\ox \sigma^u_{A^n} )}{n}-r  \} \\
\geq & (1-\alpha)\{ \lim_{n \rightarrow \infty} \inf_{\sigma_{A^n}\in \mathcal{S}(A^n)}\frac{D_\alpha(\rho_{RA}^{\ox n}\| \rho_{R}^{\ox n}\ox \sigma_{A^n} )}{n}-r  \} \\
=&(1-\alpha)\{ I_{\alpha}(R:A)_p-r  \},
\end{split}
\end{equation}
where the equality is from Proposition~\ref{prop:mainpro}~(\romannumeral9). The above formula holds for any $\alpha \in [0,1]$, hence we have
\begin{equation}
\lim_{n \rightarrow \infty} \frac{-1}{n} \log (1-\epsilon^d_{\dot{R^n}:A^n}(\rho_{RA}^{\ox n},nr) ) \geq \sup_{0 \leq \alpha \leq 1} (1-\alpha)\{ I_{\alpha}(R:A)_p-r  \}.
\end{equation}
\end{proofof}

%*******************************************************************************

\section{Partially smoothed classical conditional min-entropy based on purified distance}
\label{sec:donpuri}

In this section, our objective is to determine the strong converse exponent for partially smoothing of the conditional min-entropy based on purified distance for classical states, i.e., Eq.~(\ref{equ:claconditional}) in Theorem~\ref{thm:main1}. We begin this procedure by first establishing the strong converse exponent of classical privacy amplification which serves as a key tool to prove the achievability part of Eq.~(\ref{equ:claconditional}).

%*******************************************************************************

\subsection{Classical privacy amplification}
\label{subsec:privacy}

For a classical-quantum state
\[
\rho_{RA}=\sum_{a \in \mc{A}} q(a) \rho_{R}^a \ox \proj{a}_A,
\]
where the system $A$ is a classical random variable that is partially correlated with an adversary's system $R$. In the procedure of privacy amplification, we apply a hash function $f : \mc{A} \mapsto \mc{Z}$ to extract a random number $Z$, which  is expected to be uniformly distributed and independent of the adversary’s system $R$. The action of the hash function $f$ can be expressed as a quantum
channel
\[
\Phi_f : \omega_A \mapsto \sum_{a\in \mc{A}} \ket{f(a)}\bra{a} \omega_A \ket{a}\bra{f(a)}.
\]
So the resulting state is $\Phi_f(\rho_{RA})$, the size  and the performance of this protocol are characterized by $|\mc{Z}|$ and the purified distance between the resulting state and the ideal state $\rho_R\ox \frac{I_Z}{|\mc{Z}|}$, respectively.

The reference~\cite{DevetakWinter2005distillation} has proved that for  asymptotically perfect privacy amplification, a rate of randomness 
extraction equals to $H(A|R)_\rho$ is necessary and sufficient. For finer asymptotic analysis, including 
second-order expansion  based on purified distance~\cite{TomamichelHayashi2013hierarchy} and that based on trace distance~\cite{SGC2022optimal}, as well as error exponent~\cite{LYH2023tight}, have been determined later.

When the rate of randomness extraction is larger than $H(A|R)_\rho$, the strong converse property holds, i.e., the optimal performance among all protocols with size $2^{nr}$ converges to $1$ as $n$ goes to $\infty$. The 
exact exponent of this decay is called the strong converse exponent and is defined as
\[
E^{\rm{pa}}_{sc}(\rho_{RA},r):=\lim_{n \rightarrow \infty} \frac{-1}{n} \log (1-\inf_{f_n\in \mc{F}_n(r)}P(\Phi_{f_n}(\rho_{RA}^{\ox n}),\rho_R^{\ox n} \ox \frac{I_{Z^n}}{|\mc{Z}^n|})),
\]
where $\mc{F}_n(r)$ is the set of functions from $\mc{A}^{\times n}$ to $\mc{Z}_n=\{1, \ldots, 2^{nr}\}$.

The reference~\cite{SalzmannDatta2022total, SGC2022strong} derived some lower bounds for $E^{\rm{pa}}_{sc}(\rho_{RA},r)$. However, the exact expression for $E^{\rm{pa}}_{sc}(\rho_{RA},r)$ is still unknown. In this subsection, we establish the exact 
strong converse exponent when the adversary's system $R$ is classical. The result is stated as follows.
\begin{theorem}
\label{thm:privacy}
Let $\rho_{RA}$ be the classical state defined in Theorem~\ref{thm:main1} and $r \geq 0$. Then, we have
\begin{equation}
\label{equ:privacy}
E^{\rm{pa}}_{sc}(\rho_{RA},r)= \sup_{\frac{1}{2} \leq \alpha \leq 1}\inf_{t \in \mc{Q}(\mc{R})}
\big\{2D(t\|p)+\frac{1-\alpha}{\alpha}\big(r-\mathbb{E}_{x\sim t}H_\alpha(p(\cdot|x))\big)\big\}.
\end{equation}
\end{theorem}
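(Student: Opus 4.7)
The plan is to prove Theorem~\ref{thm:privacy} via a method-of-types decomposition along the classical reference register $R^n$. Writing $\rho_{RA} = \sum_x p(x) \proj{x}_R \ox \rho_A^x$ with $\rho_A^x = \sum_a p(a|x)\proj{a}$, and using that both $\Phi_{f_n}(\rho_{RA}^{\ox n})$ and $\rho_R^{\ox n} \ox \tau_{Z^n}$ are classical on $R^n$ with matching marginal $\rho_R^{\ox n}$, the fidelity splits as
\[
F\bigl(\Phi_{f_n}(\rho_{RA}^{\ox n}),\, \rho_R^{\ox n} \ox \tau_{Z^n}\bigr) = \sum_{x^n} p^n(x^n)\, F\bigl(\Phi_{f_n}(\rho_{A^n}^{x^n}),\, \tau_{Z^n}\bigr).
\]
Since $\rho_{A^n}^{x^n}=\bigotimes_i\rho_A^{x_i}$ depends on $x^n$ only through its type, grouping by $t \in \mc{T}_n^{\mc{R}}$ and using $\sum_{x^n \in T_n^t} p^n(x^n) = 2^{-nD(t\|p)+o(n)}$ from Eq.~(\ref{eq:proset}) reduces the analysis to an iid conditional privacy-amplification subproblem for each type $t$, with per-letter R\'enyi entropy $\mathbb{E}_{x\sim t} H_\alpha(p(\cdot|x))$.

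For the converse direction $E^{\rm{pa}}_{sc}(\rho_{RA},r) \geq \text{RHS}$, the key step is to show, for every $\alpha \in [\frac{1}{2},1]$, every type $t$, and every hash $f_n$, a bound of the form
\[
-\log F^2\bigl(\Phi_{f_n}(\rho_{A^n}^{x^n}),\tau_{Z^n}\bigr) \geq \frac{1-\alpha}{\alpha}\bigl(r - \mathbb{E}_{x\sim t} H_\alpha(p(\cdot|x))\bigr)n - o(n), \quad x^n \in T_n^t.
\]
This follows by applying the data-processing inequality of $D_\alpha^*$ combined with the fidelity identity $D_{1/2}^*(\mu\|\nu) = -\log F^2(\mu,\nu)$ and the operator H\"older inequality of Lemma~\ref{lem:hof}, which interpolates between $\alpha=1/2$ and general $\alpha$. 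Combined with the type weight contributing $2D(t\|p)$ from the square-rooted probability of $T_n^t$ inside the fidelity, this yields $E^{\rm{pa}}_{sc} \geq 2D(t\|p) + \frac{1-\alpha}{\alpha}(r - \mathbb{E}_{x\sim t} H_\alpha(p(\cdot|x)))$, and optimizing over $t$ and then $\alpha$ delivers the RHS.

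For the achievability direction $E^{\rm{pa}}_{sc}(\rho_{RA},r) \leq \text{RHS}$, I would exhibit a type-adapted hash function $f_n$ built, on each $T_n^t$-slice of $A^n$, from a 2-universal family into $Z_n = \{1,\ldots,2^{nr}\}$. The classical leftover hash lemma in its fidelity form bounds the within-type fidelity to uniform by $2^{-\frac{n}{2}\cdot\frac{1-\alpha}{\alpha}(r - \mathbb{E}_{x\sim t}H_\alpha(p(\cdot|x)))+o(n)}$ for every $\alpha \in [\frac{1}{2},1]$. Summing over types with their $2^{-nD(t\|p)}$ weights, and invoking Sion's minimax theorem (the functional is convex in $t$ and concave in $\alpha$ on $[\frac{1}{2},1]$) to interchange $\sup_\alpha$ and $\inf_t$, gives the matching upper bound.

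The main obstacle is the converse: the fidelity itself only controls $D_{1/2}^*$, whereas the target exponent involves $\sup_{\alpha \in [1/2,1]}$. Bridging this requires carefully applying Lemma~\ref{lem:hof} to trade the $\alpha = 1/2$ fidelity-R\'enyi identity for a general-$\alpha$ bound, and choosing the reference state (for instance a product-of-conditional states or a symmetric universal state in the spirit of Lemma~\ref{lem:sym}) so that the resulting sandwiched R\'enyi divergence cleanly factorizes into a $D(t\|p)$ type-weight piece and a conditional-R\'enyi-entropy piece per letter.
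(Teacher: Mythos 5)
Your converse (optimality) sketch is essentially the paper's argument and is sound: decompose the fidelity over $x^n$ using classicality of $R^n$, apply the operator H\"older bound of Lemma~\ref{lem:hof} together with data processing to reduce each term to $2^{\frac{1-\alpha}{2\alpha}(-D_\alpha(\rho_{A^n}^{x^n}\|I_A^{\ox n})-nr)}$, and group by the type of $x^n$ so that the squared type probability contributes $2D(t\|p)$; no symmetric universal state is needed since the reference is just $I_{Z_n}/|\mc{Z}_n|$.

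The achievability half, however, has a genuine gap, and it is exactly where the difficulty of the theorem lies. You invoke a ``classical leftover hash lemma in its fidelity form'' to claim that a 2-universal, type-adapted hash attains within-type fidelity $2^{-\frac{n}{2}\cdot\frac{1-\alpha}{\alpha}(r-\mathbb{E}_{x\sim t}H_\alpha(p(\cdot|x)))+o(n)}$ for every $\alpha\in[\tfrac12,1]$. The leftover hash lemma controls the distance from uniform when the extraction rate is \emph{below} the relevant entropy; here $r>H(A|R)_p$, and what is needed is a \emph{lower} bound on the best achievable fidelity in the over-extraction regime, which is not a known corollary of 2-universal hashing and is essentially the content of the theorem (the paper notes that prior work only gave one-sided bounds on $E^{\rm{pa}}_{sc}$). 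Moreover, the types $t$ in the target formula are types of $x^n$ on $\mc{R}$, which the hash $f_n:\mc{A}^{\times n}\to\mc{Z}_n$ cannot see, so a hash ``adapted to each $T_n^t$-slice'' is either ill-defined or refers to types on $\mc{A}$, in which case its link to $\mathbb{E}_{x\sim t}H_\alpha(p(\cdot|x))$ is not established. Finally, your direct appeal to Sion's minimax theorem presupposes concavity in $\alpha$ of $\frac{1-\alpha}{\alpha}\big(r-\mathbb{E}_{x\sim t}H_\alpha(p(\cdot|x))\big)$, which you do not verify. The paper circumvents all of this by first proving the variational identity of Proposition~\ref{prop:var} (rewriting the exponent as $\inf_t\inf_{\{\tau_A^x\}}\{2D(t\|p)+\mathbb{E}_{x\sim t}D(\tau_A^x\|\rho_A^x)+|r+\mathbb{E}_{x\sim t}D(\tau_A^x\|I_A)|^+\}$, where the $\alpha$-dependence becomes linear in $s=\frac{1-\alpha}{\alpha}$ so Sion applies), and then achieving each $(t,\{\tau_A^x\})$ term by a change-of-measure argument: run the protocol guaranteed by Hayashi's relative-entropy achievability for the tilted state $\tau_{RA}=\sum_x t(x)\proj{x}\ox\tau_A^x$ at rate below $H(A|R)_\tau$, transfer the guarantee back to $\rho_{RA}^{\ox n}$ via $-\log F^2(\rho,\sigma)\le D(\tau\|\rho)+D(\tau\|\sigma)$ (Lemma~\ref{lem:fidelity-re}) and a typical-type argument (yielding the $2D(t\|p)+\mathbb{E}_{x\sim t}D(\tau_A^x\|\rho_A^x)$ cost), and pad the output when $r\ge H(A|R)_\tau$ to pick up the $|r-H(A|R)_\tau|^+$ term. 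Without an argument of this kind (or a genuinely new fidelity lower bound for over-extraction), your achievability does not go through.
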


In the following, We accomplish the proof of Theorem~\ref{thm:privacy} by establishing the optimality part and the achievability part, respectively.

\begin{proofof}[of the optimality part]
As we did in the proof of Eq.~(\ref{equ:clamutual}), we write $\rho_{RA}$ in the form of classical-quantum states~($\rho_{RA}=\sum\limits_{x \in \mc{R}}p(x)\proj{x}_R\ox \rho_A^x$) in the rest of this section for convenience of reading the proofs.

For any $f_n \in \mc{F}_n(r)$ and $\frac{1}{2}<\alpha<1$, we can upper bound $F(\Phi_{f_n}(\rho^{\ox n}_{RA}), \rho_{R}^{\ox n} \ox \frac{I_{Z_n}}{|\mc{Z}_n|}) $ as follows.
\begin{equation}
\label{equ:priupp}
\begin{split}
&F(\Phi_{f_n}(\rho^{\ox n}_{RA}), \rho_{R}^{\ox n} \ox \frac{I_{Z_n}}{|\mc{Z}_n|}) \\
= &\sum_{x^n \in\mc{R}^{\times n}} p^n(x^n) F(\Phi_{f_n}(\rho_{A^n}^{x^n}), \frac{I_{Z_n}}{|\mc{Z}_n|}) \\
\leq &\sum_{x^n \in\mc{R}^{\times n}} p^n(x^n) 2^{\frac{1-\alpha}{2\alpha}(-D_\alpha(\Phi_{f_n}(\rho_{A^n}^{x^n})\| \frac{I_{Z_n}}{|\mc{Z}_n|})+D_\beta( \frac{I_{Z_n}}{|\mc{Z}_n|}\| \frac{I_{Z_n}}{|\mc{Z}_n|}))}\\
\leq &\sum_{x^n \in\mc{R}^{\times n}} p^n(x^n) 2^{\frac{1-\alpha}{2\alpha}(-D_\alpha(\rho_{A^n}^{x^n} \|	I_{A}^{\ox n})-nr)}  \\
=&\sum_{t \in \mc{T}_n^{\mc{R}}} (\sum_{x^n \in T_n^t}p^n(x^n))2^{\frac{1-\alpha}{2\alpha}(-n\mathbb{E}_{x \sim t} D_\alpha(\rho_{A}^x\| I_{A})-nr)} \\
\leq &(n+1)^{|R|} \sup_{t \in \mc{T}_n^\mc{R}} 2^{-nD(t\|p)} 2^{\frac{1-\alpha}{2\alpha}(-n\mathbb{E}_{x \sim t}D_\alpha(\rho_{A}^x\| I_{A})-nr)},
\end{split}
\end{equation}
where the first inequality is due to Lemma~\ref{lem:hof} in Appendix, the second inequality comes from Proposition~\ref{prop:mainpro}~(\romannumeral10) and the last inequality is from Eq.~(\ref{eq:proset}). Eq.~(\ref{equ:priupp}) implies that
\begin{equation}
\label{equ:prifire}
\begin{split}
&E^{\rm{pa}}_{sc}(\rho_{RA},r)\\
=&\lim_{n \rightarrow \infty} \frac{-1}{n} \log (1-\inf_{f_n\in \mc{F}_n(r)}P(\Phi_{f_n}(\rho_{RA}^{\ox n}),\rho_R^{\ox n} \ox \frac{I_{Z_n}}{|\mc{Z}_n|})) \\
\geq &\lim_{n\rightarrow \infty} \frac{-1}{n} \log \sup_{f_n \in \mc{F}_n(r)} F^2(\Phi_{f_n}(\rho^{\ox n}_{RA}), \rho_{R}^{\ox n} \ox \frac{I_{Z_n}}{|\mc{Z}_n|}) \\
\geq &\inf_{t \in \mc{Q}(\mc{R})}
\big\{2D(t\|p)+\frac{1-\alpha}{\alpha}(\mathbb{E}_{x\sim t}D_\alpha(\rho_A^x \|I_A)+r)\big\} \\
=&\inf_{t \in \mc{Q}(\mc{R})} \big\{2D(t\|p)+\frac{1-\alpha}{\alpha}\big(r-\mathbb{E}_{x\sim t}H_\alpha(p(\cdot|x))\big)\big\}.
\end{split}
\end{equation}
Noticing that Eq.~(\ref{equ:prifire}) holds for any $\frac{1}{2}<\alpha<1$, we complete the proof.
\end{proofof}

Before we prove the achievability part, we establish a variational expression for the formula at the right hand side of Eq.~(\ref{equ:privacy}). 
\begin{proposition}
\label{prop:var}
For $\rho_{RA}$ and $r$ stated in Theorem~\ref{thm:privacy}, it holds that
\begin{equation}
\begin{split}
 &\sup_{\frac{1}{2} \leq \alpha \leq 1}\inf_{t \in \mc{Q}(\mc{R})}\big\{2D(t\|p)+\frac{1-\alpha}{\alpha}\big(r-\mathbb{E}_{x\sim t}H_\alpha(p(\cdot|x))\big)\big\}\\
 =&\inf_{t \in \mc{Q}(\mc{R})}\inf_{\{\tau_A^x\}\in \mc{F}}
\big\{2D(t\|p)+\mathbb{E}_{x \sim t} D(\tau_A^x\|\rho_A^x)+|r+\mathbb{E}_{x \sim t}D(\tau_A^x\|I_A)|^+ \big\},
\end{split}
\end{equation}
where $\mc{F}:=\{\{\tau_A^x\}_{x \in \mc{R}}~|~\tau_A^x \in \mc{S}_{\rho_{A}^x},~\forall x \in \mc{R}\}$.
\end{proposition}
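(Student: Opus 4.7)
The plan is to start from the variational expression for the Petz Rényi divergence in Proposition~\ref{prop:mainpro}(xi), turn the $\alpha$-sup into a sup over a scalar multiplier in $[0,1]$, then apply Sion's minimax to swap sup and inf, obtaining the $|\,\cdot\,|^+$ form.

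First, since $I_A$ commutes with the classical $p(\cdot|x)$, Proposition~\ref{prop:mainpro}(xi) gives, for every $\alpha\in(0,1)$ and every $x\in\mc{R}$,
\begin{equation*}
-H_\alpha(p(\cdot|x))
=D_\alpha(p(\cdot|x)\|I_A)
=\min_{\tau_A^x\in\mc{S}_{\rho_A^x}}
\Bigl\{D(\tau_A^x\|I_A)+\tfrac{\alpha}{1-\alpha}D(\tau_A^x\|\rho_A^x)\Bigr\}.
\end{equation*}
Averaging over $x\sim t$ and pulling the (attained) minima out of the decoupled expectation, and then multiplying by $\tfrac{1-\alpha}{\alpha}>0$, I get
\begin{equation*}
\tfrac{1-\alpha}{\alpha}\bigl(r-\mathbb{E}_{x\sim t}H_\alpha(p(\cdot|x))\bigr)
=\min_{\{\tau_A^x\}\in\mc{F}}\Bigl\{\mathbb{E}_{x\sim t}D(\tau_A^x\|\rho_A^x)+\tfrac{1-\alpha}{\alpha}\bigl(r+\mathbb{E}_{x\sim t}D(\tau_A^x\|I_A)\bigr)\Bigr\}.
\end{equation*}
Substituting this into the LHS of the proposition, and setting $\mu:=\tfrac{1-\alpha}{\alpha}$ (so $\mu$ ranges over $[0,1]$ as $\alpha$ ranges over $[\tfrac12,1]$), the LHS becomes
\begin{equation*}
\sup_{\mu\in[0,1]}\ \inf_{t,\{\tau_A^x\}}\Bigl\{2D(t\|p)+\mathbb{E}_{x\sim t}D(\tau_A^x\|\rho_A^x)+\mu\bigl(r+\mathbb{E}_{x\sim t}D(\tau_A^x\|I_A)\bigr)\Bigr\}.
\end{equation*}

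Next I would reparametrize by the joint distribution $q_{XA}(x,a):=t(x)\tau_A^x(a)\in\mc{Q}(\mc{R}\times\mc{A})$. The chain rule for relative entropy yields
\begin{equation*}
2D(t\|p)+\mathbb{E}_{x\sim t}D(\tau_A^x\|\rho_A^x)=D(q_X\|p_R)+D(q_{XA}\|p_{RA}),
\end{equation*}
and $\mathbb{E}_{x\sim t}D(\tau_A^x\|I_A)=-H(A|X)_q$. In the variable $q$, the integrand
\begin{equation*}
\phi(q,\mu):=D(q_X\|p_R)+D(q_{XA}\|p_{RA})+\mu\bigl(r-H(A|X)_q\bigr)
\end{equation*}
is convex and lower semicontinuous in $q$ on the compact convex set $\mc{Q}(\mc{R}\times\mc{A})$ (the two relative-entropy terms are convex and lsc, and concavity of the classical conditional entropy makes $-H(A|X)_q$ convex in $q$), while $\phi(q,\cdot)$ is linear, hence continuous and concave, on the compact convex set $[0,1]$. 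Sion's minimax theorem therefore gives
\begin{equation*}
\sup_{\mu\in[0,1]}\inf_{q}\phi(q,\mu)=\inf_{q}\sup_{\mu\in[0,1]}\phi(q,\mu)=\inf_{q}\Bigl\{D(q_X\|p_R)+D(q_{XA}\|p_{RA})+|r-H(A|X)_q|^+\Bigr\}.
\end{equation*}
Translating this back through $q_{XA}(x,a)=t(x)\tau_A^x(a)$ produces exactly the RHS of the proposition.

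The substantive points are: (i) justifying that the minimum in the Rényi variational expression is attained and may be exchanged with $\mathbb{E}_{x\sim t}$ (trivial because the $\tau_A^x$'s are not coupled across $x$); (ii) verifying concavity of $H(A|X)_q$ in the joint distribution, which follows from the standard mixing-variable argument $H(A|X)_q\ge H(A|XZ)$ applied to a convex combination; and (iii) checking the hypotheses of Sion's minimax on the simplex. Step (ii) is the only place where one has to pause, but it is classical. Everything else is algebraic manipulation together with the chain rule.
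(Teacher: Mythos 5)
Your proof is correct, and its first half coincides with the paper's: both start from the variational expression of Proposition~\ref{prop:mainpro}~(\romannumeral11) for $D_\alpha(p(\cdot|x)\|I_A)$, pull the decoupled minima through $\mathbb{E}_{x\sim t}$, and substitute $s=\frac{1-\alpha}{\alpha}$ (your $\mu$). Where you genuinely diverge is the minimax step: the paper applies Sion's theorem directly in the variables $(t,\{\tau_A^x\})$ versus $s$, whereas you first reparametrize by the joint distribution $q_{XA}(x,a)=t(x)\tau_A^x(a)$, rewrite the objective via the chain rule as $D(q_X\|p)+D(q_{XA}\|p_{RA})+\mu(r-H(A|X)_q)$, and only then invoke Sion on $(q,\mu)$. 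This buys a transparent check of the hypotheses: in the original variables the term $\sum_x t(x)D(\tau_A^x\|\rho_A^x)$ is linear in $t$ and convex in each $\tau_A^x$ but not obviously jointly (quasi-)convex in $(t,\{\tau_A^x\})$, since a product of a linear and a convex factor need not be convex, so your change of variables supplies a convexity justification that the paper leaves implicit; in exchange, the paper's route avoids the translation back and forth between $(t,\{\tau_A^x\})$ and $q_{XA}$ and the attendant support bookkeeping (distributions $q$ with $\supp(q_{XA})\not\subseteq\supp(p_{RA})$ give value $+\infty$ and should be excluded, e.g.\ by restricting the infimum to the corresponding closed face of the simplex, so that Sion is applied to a real-valued, lower semicontinuous function -- a caveat equally present, and equally suppressed, in the paper). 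Two small points you gloss over are harmless: the variational formula is stated only for $\alpha\in(0,1)$, so the endpoint $\alpha=1$ (your $\mu=0$) needs the one-line remark that there both sides reduce to $\inf_{t}2D(t\|p)=0$ (the paper instead passes to the open interval $\frac{1}{2}<\alpha<1$ first); and the concavity of $H(A|X)_q$ in the joint distribution is indeed the standard mixing argument you cite.
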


\begin{proof}
By direct calculation, we have
\begin{equation}
\begin{split}
 &\sup_{\frac{1}{2} \leq \alpha \leq 1}\inf_{t \in \mc{Q}(\mc{R})}\big\{2D(t\|p)+\frac{1-\alpha}{\alpha}\big(r-\mathbb{E}_{x\sim t}H_\alpha(p(\cdot|x))\big)\big\}\\
 =&\sup_{\frac{1}{2} \leq \alpha \leq 1}\inf_{t \in \mc{Q}(\mc{R})}\big\{2D(t\|p)+\frac{1-\alpha}{\alpha}(\mathbb{E}_{x \sim t}D_\alpha(\rho_A^x \|I_A)+r)\big\} \\
=&\sup_{\frac{1}{2} < \alpha < 1}\inf_{t \in \mc{Q}(\mc{R})}\big\{2D(t\|p)+\frac{1-\alpha}{\alpha}(\mathbb{E}_{x \sim t}D_\alpha(\rho_A^x \|I_A)+r)\big\} \\
=&\sup_{\frac{1}{2} < \alpha < 1}\inf_{t \in \mc{Q}(\mc{R})}\inf_{\{\tau_A^x\}\in \mc{F}} \big\{2D(t\|p)+\frac{1-\alpha}{\alpha}(\mathbb{E}_{x \sim t}D(\tau_A^x \|I_A)+\frac{\alpha}{1-\alpha}\mathbb{E}_{x \sim t}D(\tau_A^x \|\rho_A^x)+r)\big\} \\
=&\sup_{\frac{1}{2} < \alpha < 1}\inf_{t \in \mc{Q}(\mc{R})}\inf_{\{\tau_A^x\}\in \mc{F}} \big\{2D(t\|p)+\mathbb{E}_{x \sim t}D(\tau_A^x \|\rho_A^x)+\frac{1-\alpha}{\alpha}(\mathbb{E}_{x \sim t}D(\tau_A^x \|I_A)+r)\big\} \\
=&\sup_{0 < s < 1}\inf_{t \in \mc{Q}(\mc{R})}\inf_{\{\tau_A^x\}\in \mc{F}} \big\{2D(t\|p)+\mathbb{E}_{x \sim t}D(\tau_A^x \|\rho_A^x)+s(\mathbb{E}_{x \sim t}D(\tau_A^x \|I_A)+r)\big\} \\
=&\inf_{t \in \mc{Q}(\mc{R})}\inf_{\{\tau_A^x\}\in \mc{F}}\sup_{0 < s < 1} \big\{2D(t\|p)+\mathbb{E}_{x \sim t}D(\tau_A^x \|\rho_A^x)+s(\mathbb{E}_{x \sim t}D(\tau_A^x \|I_A)+r)\big\} \\
=&\inf_{t \in \mc{Q}(\mc{R})}\inf_{\{\tau_A^x\}\in \mc{F}}\big\{2D(t\|p)+\mathbb{E}_{x \sim t}D(\tau_A^x\|\rho_A^x)+|r+\mathbb{E}_{x \sim t}D(\tau_A^x\|I_A)|^+ \big\},
\end{split}
\end{equation}
where the third line follows from Proposition~\ref{prop:mainpro}~(\romannumeral11) and in the sixth line, we use Sion's minimax theorem.
\end{proof}

With the above variational expression in hand, we can begin the proof of the achievability part.

\begin{proofof}[of the achievability part]
By proposition~\ref{prop:var}, we only need to prove that for any $t\in \mc{Q}(\mc{R})$ and $\{\tau_A^x\}_{x \in \mc{R}} \in \mc{F}$,
\begin{equation}
\label{equ:prva}
    E^{\rm{pa}}_{sc}(\rho_{RA},r) \leq 2D(t\|p)+\mathbb{E}_{x\sim t} D(\tau_A^x\|\rho_A^x)+|r+\mathbb{E}_{x\sim t}D(\tau_A^x\|I_A)|^+ .
\end{equation}
We divide the proof of Eq.~(\ref{equ:prva}) into two cases according to whether $r$ is larger or lower than $-\mathbb{E}_{x\sim t}D(\tau_A^x\|I_A)$.

\textbf{Case 1:}  $r<-\mathbb{E}_{x\sim t}D(\tau_A^x\|I_A)$;

Because $r<-\mathbb{E}_{x\sim t}D(\tau_A^x\|I_A)$, $r$ is an achievable rate of randomness extraction from $\tau_{RA}:=\sum\limits_{x \in \mc{R}}t(x) \proj{x}_R \ox \tau_A^x$~\cite{DevetakWinter2005distillation}.
Hence, \cite[Theorem~1]{Hayashi2015precise} tells us that there exists a sequence of hash functions $\{f_n~: \mc{A}^{\times n} \rightarrow \mc{Z}_n=\{1,\ldots,2^{nr}\}\}_{n \in \mathbb{N}}$ such that 
\begin{equation}
\label{equ:f3}
\lim_{n \rightarrow \infty}D(\Phi_{f_n}(\tau_{RA}^{\ox n}), \tau_R^{\ox n} \ox \frac{I_{Z_n}}{|\mc{Z}_n|})=\lim_{n \rightarrow \infty}\sum_{x^n \in  \mc{R}^{\times n}}t^n(x^n) D(\Phi_{f_n}(\tau_{A^n}^{x^n})\| \frac{I_{Z_n}}{|\mc{Z}_n|})=0.
\end{equation}
Eq.~(\ref{equ:f3}) implies that 
\begin{equation}
\label{equ:l4}
\begin{split}
&1\\
=&\lim_{n \rightarrow \infty} 2^{-\sum\limits_{x^n \in  \mc{R}^{\times n}}t^n(x^n) D(\Phi_{f_n}(\tau_{A^n}^{x^n})\| \frac{I_{Z_n}}{|\mc{Z}_n|})} \\
=&\lim_{n \rightarrow \infty} 2^{-\sum\limits_{s \in \mc{T}_n^{\mc{R}}} t^n(T_n^s) \frac{1}{|T_n^s|}\sum\limits_{x^n \in  T_n^s} D(\Phi_{f_n}(\tau_{A^n}^{x^n})\| \frac{I_{Z_n}}{|\mc{Z}_n|})} \\
\leq &\lim_{n \rightarrow \infty}\sum\limits_{s \in \mc{T}_n^{\mc{R}}} t^n(T_n^s) 2^{-\frac{1}{|T_n^s|}\sum\limits_{x^n \in  T_n^s} D(\Phi_{f_n}(\tau_{A^n}^{x^n})\| \frac{I_{Z_n}}{|\mc{Z}_n|})} \\
\leq &1,
\end{split}
\end{equation}
where $ t^n(T_n^s):=\sum\limits_{x^n \in T_n^s}t^n(x^n)$.

For $\delta>0$, we denote the set of typical types that have maximum deviation $\frac{\delta}{|\mc{R}|}$ from the distribution $t$ as $S_{typical}^{t,\delta}$, i.e., 
\[
S_{typical}^{t,\delta}:=\{s~|~s\in \mc{T}_n^{\mc{R}},~\forall x \in \mc{R},~|s(x)-t(x)|\leq \frac{\delta}{|\mc{R}|}~\text{if}~t(x)>0~\text{else}~s(x)=0\}.
\]
Then, from Eq.~(\ref{equ:l4}), we have
\begin{equation}
\label{equ:l5}
\begin{split}
&1 \\
\geq &\lim_{n \rightarrow \infty}\sup_{s \in S_{typical}^{t,\delta}}  2^{-\frac{1}{|T_n^s|}\sum\limits_{x^n \in  T_n^s} D(\Phi_{f_n}(\tau_{A^n}^{x^n})\| \frac{I_{Z_n}}{|\mc{Z}_n|})}\\
\geq &\lim_{n \rightarrow \infty}\sum\limits_{s \in S_{typical}^{t,\delta}} t^n(T_n^s) 2^{-\frac{1}{|T_n^s|}\sum\limits_{x^n \in  T_n^s} D(\Phi_{f_n}(\tau_{A^n}^{x^n})\| \frac{I_{Z_n}}{|\mc{Z}_n|})} \\
\geq &\lim_{n \rightarrow \infty}\big(\sum\limits_{s \in \mc{T}_n^{\mc{R}}} t^n(T_n^s) 2^{-\frac{1}{|T_n^s|}\sum\limits_{x^n \in  T_n^s} D(\Phi_{f_n}(\tau_{A^n}^{x^n})\| \frac{I_{Z_n}}{|\mc{Z}_n|})} -\sum_{s \notin S_{typical}^{t,\delta}}t^n(T_n^s) \big) \\
=&1,
\end{split}
\end{equation}
where the last line is because that $\lim\limits_{n \rightarrow \infty}\sum\limits_{s \notin S_{typical}^{t,\delta}}t^n(T_n^s)=0$.

Next,  we  evaluate $F(\Phi_{f_n}(\rho_{RA}^{\ox n}), \rho_R^{\ox n} \ox \frac{I_{Z_n}}{|\mc{Z}_n|})$ as 
\begin{equation}
\label{equ:evp1}
\begin{split}
F(\Phi_{f_n}(\rho_{RA}^{\ox n}), \rho_R^{\ox n} \ox \frac{I_{Z_n}}{|\mc{Z}_n|}) 
=&\sum_{x^n \in  \mc{R}^{\times n}}p^n(x^n) F(\Phi_{f_n}(\rho_{A^n}^{x^n}), \frac{I_{Z_n}}{|\mc{Z}_n|}) \\
\geq & \sum_{x^n \in  \mc{R}^{\times n}}p^n(x^n) 2^{-\frac{1}{2}\{D(\Phi_{f_n}(\tau_{A^n}^{x^n})\|\Phi_{f_n}(\rho_{A^n}^{x^n}))+D(\Phi_{f_n}(\tau_{A^n}^{x^n})\| \frac{I_{Z_n}}{|\mc{Z}_n|})\}}\\
\geq & \sum_{x^n \in  \mc{R}^{\times n}}p^n(x^n) 2^{-\frac{1}{2}\{D(\tau_{A^n}^{x^n}\|\rho_{A^n}^{x^n})+D(\Phi_{f_n}(\tau_{A^n}^{x^n})\| \frac{I_{Z_n}}{|\mc{Z}_n|})\}} \\
\geq &\sum_{s \in S_{typical}^{t,\delta}}\sum_{x^n \in T_n^s}p^n(x^n) 2^{-\frac{1}{2}\{D(\tau_{A^n}^{x^n}\|\rho_{A^n}^{x^n})+D(\Phi_{f_n}(\tau_{A^n}^{x^n})\| \frac{I_{Z_n}}{|\mc{Z}_n|})\}} \\
= &\sum_{s \in S_{typical}^{t,\delta}} p^n(T_n^s) 2^{-\frac{1}{2}n\mathbb{E}_{x \sim s}D(\tau_{A}^{x}\|\rho_{A}^{x})} (\frac{1}{|T_n^s|}\sum_{x^n \in T_n^s}2^{-\frac{1}{2}D(\Phi_{f_n}(\tau_{A^n}^{x^n})\| \frac{I_{Z_n}}{|\mc{Z}_n|})})\\
\geq &\sum_{s \in S_{typical}^{t,\delta}} p^n(T_n^s) 2^{-\frac{1}{2}n\mathbb{E}_{x \sim s}D(\tau_{A}^{x}\|\rho_{A}^{x})} 2^{-\frac{1}{2|T_n^s|}\sum\limits_{x^n \in T_n^s} D(\Phi_{f_n}(\tau_{A^n}^{x^n})\| \frac{I_{Z_n}}{|\mc{Z}_n|})},
\end{split}
\end{equation}
where $p^n(T_n^s):=\sum\limits_{x^n \in T_n^s}p^n(x^n)$, the first inequality comes from Lemma~\ref{lem:fidelity-re} in Appendix, the second inequality is due to the data processing inequality of relative entropy and the last inequality is because that the exponential function $2^x$ is convex in $x$. 

We let $p_{\rm{min}}$ be the smallest element in the probability distribution $\{p(x)\}_{x \in \mc{R}}$. Then 
$p^n(T_n^s)$ and $\mathbb{E}_{x \sim s}D(\tau_A^x\|\rho_A^x)$ in the last line of Eq.~(\ref{equ:evp1}) can be bounded as 
\begin{equation}
\label{equ:epv2}
\begin{split}
 &p^n(T_n^s) \\
\geq &(n+1)^{-|R|}2^{-nD(s\|p)} \\
=&(n+1)^{-|R|}2^{-nD(s\|p)+nD(t\|p)} 2^{-nD(t\|p)} \\
=&(n+1)^{-|R|}2^{-n(H(t)-H(s)+\tr t\log p-\tr s \log p)} 2^{-nD(t\|p)} \\
\geq&(n+1)^{-|R|}2^{-n(\delta\log|R|+h(\delta)-2\delta \log p_{\rm{min}})}2^{-nD(t\|p)},
\end{split}
\end{equation}
where the first inequality is due to Eq.~(\ref{eq:proset}) and the last inequality follows from Fannes-Audenaert inequality and H\"{o}lder inequality,  and
\begin{equation}
\label{equ:evp3}
\begin{split}
\mathbb{E}_{x \sim s}D(\tau_A^x\|\rho_A^x)&=\mathbb{E}_{x\sim t} D(\tau_A^x\|\rho_A^x)+\mathbb{E}_{x\sim s}D(\tau_A^x\|\rho_A^x)-\mathbb{E}_{x\sim t}D(\tau_A^x\|\rho_A^x) \\
&\leq \mathbb{E}_{x \sim t}D(\tau_A^x\|\rho_A^x)+2\delta \sup_{x\in \mc{R}} D(\tau_A^x\|\rho_A^x).
\end{split}
\end{equation}
Eq.~(\ref{equ:evp1}), Eq.~(\ref{equ:epv2}) and Eq.~(\ref{equ:evp3}) give
\begin{equation}
\label{equ:finalev}
\begin{split}
&F(\Phi_{f_n}(\rho_{RA}^{\ox n}), \rho_R^{\ox n} \ox \frac{I_{Z_n}}{|\mc{Z}_n|})\\
\geq &(n+1)^{-|R|}2^{-n(\delta\log|R|+h(\delta)-2\delta \log p_{min}+\delta\sup\limits_{x \in \mc{R}}D(\tau_A^x\|\rho_A^x))}2^{-nD(t\|p)-\frac{1}{2}n\mathbb{E}_{x \sim t}D(\tau_A^x\|\rho_A^x)} \\
&\times\sup_{s \in S_{typical}^{t,\delta}} 2^{-\frac{1}{2|T_n^s|}\sum\limits_{x^n \in T_n^s} D(\Phi_{f_n}(\tau_{A^n}^{x^n})\| \frac{I_{Z_n}}{|\mc{Z}_n|})}.
\end{split}
\end{equation}
From Eq.~(\ref{equ:l5}) and Eq.~(\ref{equ:finalev}) and the definition of $E^{\rm{pa}}_{sc}(\rho_{RA},r)$, we can obtain
\begin{equation}
\label{equ:l9}
\begin{split}
&E^{\rm{pa}}_{sc}(\rho_{RA},r) \\
\leq &\lim_{n \rightarrow \infty} \frac{-1}{n} \log F^2(\Phi_{f_n}(\rho_{RA}^{\ox n}), \rho_R^{\ox n} \ox \frac{I_{Z_n}}{|\mc{Z}_n|}) \\
\leq &2D(t\|p)+\mathbb{E}_{x \sim t}D(\tau_A^x\|\rho_A^x)+2(\delta\log|R|+h(\delta)-2\delta \log p_{min}+\delta\sup_{x \in \mc{R}}D(\tau_A^x\|\rho_A^x)) \\
=&2D(t\|p)+\mathbb{E}_{x \sim t} D(\tau_A^x\|\rho_A^x)+|r+\mathbb{E}_tD(\tau_A^x\|I_A)|^+ \\
&+2(\delta\log|R|+h(\delta)-2\delta \log p_{min}+\delta\sup_{x \in \mc{R}}D(\tau_A^x\|\rho_A^x)).
\end{split}
\end{equation}
Because Eq.~(\ref{equ:l9}) holds for any $\delta>0$, by letting $\delta \rightarrow 0$, we complete the proof of 
\textbf{Case 1}.

\textbf{Case 2:}  $r\geq -\mathbb{E}_{x\sim t}D(\tau_A^x\|I_A)$;

For $\tau_{RA}=\sum\limits_{x \in \mc{R}}t(x) \proj{x}_R \ox \tau_A^x$, we let $r'=H(A|R)_{\tau}-\delta'$, where $\delta'>0$ is a constant. From the proof of \textbf{Case 1}, we know that there exists a sequence of hash functions
$\{f'_n~: \mc{A}^{\times n} \rightarrow \mc{Z'}_n=\{1,\ldots,2^{nr'}\}\}_{n \in \mathbb{N}}$ such that 
\begin{equation}
\label{equ:ot1}
\lim_{n \rightarrow \infty} \frac{-1}{n} \log F^2(\Phi_{f'_n}(\rho_{RA}^{\otimes n }),\rho_{R}^{\otimes n}\otimes\frac{I_{Z'_n}}{|\mathcal{Z'}_n|})
\leq 2D(t\|p)+\mathbb{E}_{x\sim t} D(\tau_A^x\|\rho_A^x).
\end{equation}
Then, We convert $\{f'_n\}_{n\in \mathbb{N}}$ to a sequence of
hash functions $\{f_n~: \mc{A}^{\times n} \rightarrow \mc{Z}_n=\{1,\ldots,2^{nr}\}\}_{n \in \mathbb{N}}$, by extending the sizes of the extracted randomness but otherwise letting $f_n=f'_n$. Then, we have
\begin{equation}
\label{equ:ot2}
\begin{split}
&F(\Phi_{f_n}(\rho_{RA}^{\otimes n }),\rho_{R}^{\otimes n}\otimes\frac{I_{Z_n}}{|\mathcal{Z}_n|}) \\
=& F(\Phi_{f'_n}(\rho_{RA}^{\otimes n }),\rho_{R}^{\otimes n}\otimes\frac{I_{Z'_n}}{|\mathcal{Z}_n|}) \\
=&\sqrt{\frac{|\mathcal{Z'}_n|}{|\mathcal{Z}_n|}}F(\Phi_{f'_n}(\rho_{RA}^{\otimes n }),\rho_{R}^{\otimes n}\otimes\frac{I_{Z'_n}}{|\mathcal{Z'}_n|})
\end{split}
\end{equation}
Eq.~(\ref{equ:ot1}) and Eq.~(\ref{equ:ot2}) imply that
\begin{equation}
\label{equ:ot3}
\begin{split}
&E^{\rm{pa}}_{sc}(\rho_{RA},r) \\
\leq &\lim_{n \rightarrow \infty} \frac{-1}{n} \log F^2(\Phi_{f_n}(\rho_{RA}^{\otimes n }),\rho_{R}^{\otimes n}\otimes\frac{I_{Z_n}}{|\mathcal{Z}_n|}) \\
\leq &2D(t\|p)+\mathbb{E}_{x \sim t} D(\tau_A^x\|\rho_A^x)+r-H(A|R)_\tau+\delta'\\
=&2D(t\|p)+\mathbb{E}_{x \sim t} D(\tau_A^x\|\rho_A^x)+|r-H(A|R)_\tau|^+ +\delta'
\end{split}
\end{equation}
Noticing that Eq.~(\ref{equ:ot3}) holds for any $\delta'>0$, let $\delta' \rightarrow 0$ and
we are done.
\end{proofof}

%*******************************************************************************

\subsection{Achievability}

In this subsection, we use the strong converse exponent of classical privacy amplification to derive the achievability part of Eq.~(\ref{equ:claconditional}).

\begin{proof}
If $r<0$, it is obvious that all terms in Eq.~(\ref{equ:claconditional}) equal to $0$ and the assertion holds trivially. Hence, for the rest we assume that $r \geq 0$.

For any function $f_n~:~\mc{A}^{\times n}\rightarrow \mc{Z}_n=\{1,\ldots,2^{nr}\}$, we can obtain from Lemma~\ref{lem:monohash} in Appendix that
\begin{equation}
\label{equ:hato}
 H^{\epsilon,P}_{\rm{min}}(A^n|\dot{R}^n)_{\rho^{\ox n}} \geq  H^{\epsilon,P}_{\rm{min}}(Z_n|\dot{R}^n)_{\Phi_{f_n}(\rho^{\ox n})}.
\end{equation}
Eq.~(\ref{equ:hato}) implies that 
\begin{equation}
\label{equ:hotore}
 \epsilon^P_{Z^n|\dot{R}^n}(\Phi_{f_n}(\rho_{RA}^{\ox n}), nr) \geq \epsilon^P_{A^n|\dot{R}^n}(\rho_{RA}^{\ox n},  nr)
\end{equation}
It is easy to see that
\begin{equation}
\label{equ:hato1}
P(\Phi_{f_n}(\rho_{RA}^{\ox n}), \rho_R^{\ox n} \ox \frac{I_{Z_n}}{|\mc{Z}_n|}) \geq \epsilon^P_{Z^n|\dot{R}^n}(\Phi_{f_n}(\rho_{RA}^{\ox n}), nr).
\end{equation}
Eq.~(\ref{equ:hotore}) and Eq.~(\ref{equ:hato1}) give 
\begin{equation}
\label{equ:hatofianl}
\inf_{f_n \in \mc{F}_n(r)}P(\Phi_{f_n}(\rho_{RA}^{\ox n}, \rho_R^{\ox n} \ox \frac{I_{Z_n}}{|\mc{Z}_n|})
\geq \epsilon^P_{A^n|\dot{R}^n}(\rho_{RA}^{\ox n},  nr).
\end{equation}
The achievability part follows from Theorem~\ref{thm:privacy} together with Eq.~(\ref{equ:hatofianl}).
\end{proof}

%*******************************************************************************

\subsection{Optimality}

In this subsection, we prove the optimality part of Eq.~(\ref{equ:claconditional}).

\begin{proof}
Noticing that $\rho_{RA}$ is classical, the minimization in $\epsilon^P_{A^n|\dot{R}^n}(\rho_{RA}^{\ox n},nr)$ can be restricted to over the classical sub-normalized states. Then for any classical sub-normalized state $\tilde{\rho}_{R^nA^n}=\sum\limits_{x^n \in \mc{R}^{\times n}}\proj{x^n}_{R^n} \ox \tilde{\rho}^{x^n}_{A^n}$ which satisfies
\begin{equation}
\label{equ:rescon}
    \tilde{\rho}_{R^nA^n} \leq 2^{-nr} I_{A}^{\ox n} \ox \rho_{R}^{\ox n}, ~\tilde{\rho}_{R^n} \leq \rho^{\ox n}_{R},
\end{equation}
we can evaluate $F( \rho_{RA}^{\ox n},  \tilde{\rho}_{R^nA^n})$ as

\begin{equation}
\label{equ:oprev}
\begin{split}
F(\rho_{RA}^{\ox n}, \tilde{\rho}_{R^nA^n})&=\sum_{x^n \in \mc{R}^{\times n}} p^n(x^n)F(\rho_{A^n}^{x^n}, \frac{\tilde{\rho}^{x^n}_{A^n}}{p^n(x^n)}) \\
&\leq \sum_{x^n \in \mc{R}^{\times n}} p^n(x^n) 2^{\frac{1-\alpha}{2\alpha}(-D_\alpha(\rho_{A^n}^{x^n}\|I_A^{\ox n})+D_\beta(\frac{\tilde{\rho}^{x^n}_{A^n}}{p^n(x^n)}\| I_A^{\ox n})+\frac{1}{\beta-1}\log \frac{\tr \tilde{\rho}^{x^n}_{A^n}}{p^n(x^n)})} \\
&\leq \sum_{x^n \in \mc{R}^{\times n}} p^n(x^n) 2^{\frac{1-\alpha}{2\alpha}(-D_\alpha(\rho_{A^n}^{x^n}\|I_A^{\ox n})-nr)} \\
&=\sum_{t \in \mc{T}_n^{\mc{R}}} (\sum_{x^n \in T_n^t}p^n(x^n))2^{\frac{1-\alpha}{2\alpha}(-n\mathbb{E}_{x \sim t} D_\alpha(\rho_A^x\|I_A)-nr)} \\
&\leq (n+1)^{|R|} \sup_{t \in \mc{T}_n^{\mc{R}}}2^{-nD(t\|p)}2^{\frac{1-\alpha}{2\alpha}(-n\mathbb{E}_{x\sim t} D_\alpha(\rho_A^x\|I_A)-nr)},
\end{split}
\end{equation}
where the first inequality follows from Lemma~\ref{lem:hof} in Appendix, the second inequality is because that by Eq.~(\ref{equ:rescon}), $\frac{\tilde{\rho}^{x^n}_{A^n}}{p^n(x^n)} \leq 2^{-nr} I_A^{\ox n}$, $\tr \tilde{\rho}^{x^n}_{A^n} \leq p^n(x^n)$, $\forall~x^n \in \mc{R}^{\times n}$ and the last inequality comes from Eq.~(\ref{eq:typenumber}) and Eq.~(\ref{eq:proset}).

Eq.~(\ref{equ:oprev}) gives 
\begin{equation}
\label{equ:conoptre}
\begin{split}
&\lim_{n\rightarrow \infty} \frac{-1}{n} \log (1-\epsilon^P_{A^n|\dot{R}^n}(\rho^{\ox n}_{RA},nr)) \\
\geq &\lim_{n\rightarrow \infty} \big(\inf_{t \in \mc{T}_n^{\mc{R}}} \{2D(t\|p)+\frac{1-\alpha}{\alpha}(\mathbb{E}_{x \sim t} D_\alpha(\rho_A^x\|I_A)+r)\}-2|R|\frac{\log (n+1)}{n} \big)\\
=& \inf_{t \in \mc{Q}({\mc{R}})} \{2D(t\|p)+\frac{1-\alpha}{\alpha}(\mathbb{E}_{x \sim t}D_\alpha(\rho_A^x\|I_A)+r)\}\\
=& \inf_{t \in \mc{Q}({\mc{R}})}\big\{2D(t\|p)+\frac{1-\alpha}{\alpha}\big(r-\mathbb{E}_{x\sim t}H_\alpha(p(\cdot|x))\big)\big\}.
\end{split}
\end{equation}
Because Eq.~(\ref{equ:conoptre}) holds for any $\frac{1}{2}<\alpha <1$, we complete the proof.
\end{proof}

%*******************************************************************************

\section{Partially smoothed conditional min-entropy for pure states}
\label{sec:constong}

In this section, we establish the strong converse exponent for  partially smoothing of the conditional min-entropy for pure states, i.e., Eq.~(\ref{equ:main}). We divide the process into several steps. At first, we give a sequence of quantities with the same exponential decay rate as $\{1-\epsilon_{A^n|\dot{R^n}}(\rho_{RA}^{\ox n},nr)\}_{n \in \mathbb{N}}$.

\begin{proposition}
\label{prop:pin}
For pure state $\rho_{RA}$  stated in Theorem~\ref{thm:main}, $r \in \mathbb{R}$
and $n \in \mathbb{N}$, we define $\Pi_t:=\sum\limits_{x^n \in T_n^t} \proj{x^n}_{R^n}$ and  $P(t):=\sum\limits_{x^n \in T_n^t}  p^n(x^n)$ for any $t \in \mc{T}_n^\mc{X}$. Then, we have
\begin{equation}
\label{equ:coneequi1}
\begin{split}
&\frac{1}{\sqrt{(n+1)^{|R|}}}\sum_{t \in \mc{T}_n^\mc{X}}P(t) \sup\{F(\sigma_{R^nA^n}^t,\frac{\Pi_t\rho_{RA}^{\ox n} \Pi_t}{P(t)})~|~\sigma_{R^nA^n}^t \in \mc{F}_t \}  \\
\leq &\sup\{F(\sigma_{R^nA^n},\rho_{RA}^{\ox n})~|~\sigma_{R^nA^n}\in \mc{S}_{\leq}(R^nA^n):\sigma_{R^nA^n}\leq 2^{-nr}I_{A}^{\ox n}\ox \rho_{R}^{\ox n}, \sigma_{R^n} \leq \rho_{R}^{\ox n}\} \\
\leq & \sum_{t \in \mc{T}_n^\mc{X}}P(t) \sup\{F(\sigma_{R^nA^n}^t,\frac{\Pi_t\rho_{RA}^{\ox n} \Pi_t}{P(t)})~|~\sigma_{R^nA^n}^t \in \mc{F}_t \}  ,
\end{split}
\end{equation}
where $\mc{F}_t:=\{\sigma_{R^nA^n}^t~|~ \sigma_{R^nA^n}^t \in \mc{S}_{\leq}(R^nA^n), \sigma_{R^nA^n}^t \leq 2^{-nr}I_{A}^{\ox n}\ox \frac{\Pi_t}{|T_n^t|}, \sigma_{R^n}^t \leq \frac{\Pi_t}{|T_n^t|} \}$, $\forall t \in \mc{T}_n^\mc{X}$.
\end{proposition}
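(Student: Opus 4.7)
The plan is to exploit the pure-state structure together with the type decomposition on the $R^n$ side. Writing $\ket{\rho_{RA}}^{\otimes n}=\sum_{x^n}\sqrt{p^n(x^n)}\ket{x^n}_{R^n}\ket{x^n}_{A^n}$, all coefficients within a single type class $T_n^t$ coincide, so $p^n(x^n)=P(t)/|T_n^t|$ there. Consequently $\rho_R^{\otimes n}=\sum_t P(t)\Pi_t/|T_n^t|$, and setting $\ket{\rho^t}:=(\Pi_t\ox I_{A^n})\ket{\rho_{RA}}^{\otimes n}/\sqrt{P(t)}$ yields unit vectors that are mutually orthogonal on the $R^n$ side and satisfy $\Pi_t\rho_{RA}^{\otimes n}\Pi_t/P(t)=\proj{\rho^t}$. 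In particular $\ket{\rho_{RA}}^{\otimes n}=\sum_t\sqrt{P(t)}\ket{\rho^t}$, and for every $\omega \in \mc{S}_{\leq}(R^nA^n)$ one has the purity identity $F^2(\omega,\Pi_t\rho_{RA}^{\otimes n}\Pi_t/P(t))=\bra{\rho^t}\omega\ket{\rho^t}$.

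For the second (upper) inequality, take any admissible $\sigma_{R^nA^n}$ and set $\tilde\sigma^t:=(\Pi_t\ox I_{A^n})\sigma_{R^nA^n}(\Pi_t\ox I_{A^n})/P(t)$. Sandwiching both defining constraints of $\sigma_{R^nA^n}$ between $\Pi_t\ox I_{A^n}$ and dividing by $P(t)$ shows $\tilde\sigma^t\in\mc{F}_t$. Expanding the fidelity gives
\beq
F^2(\sigma_{R^nA^n},\rho_{RA}^{\otimes n})=\bra{\rho_{RA}^{\otimes n}}\sigma_{R^nA^n}\ket{\rho_{RA}^{\otimes n}}=\sum_{t,t'}\sqrt{P(t)P(t')}\,\bra{\rho^t}\sigma_{R^nA^n}\ket{\rho^{t'}},
\eeq
and since $\sigma_{R^nA^n}\geq 0$ the Gram matrix $M_{t,t'}:=\bra{\rho^t}\sigma_{R^nA^n}\ket{\rho^{t'}}$ is positive semi-definite, so that $|M_{t,t'}|\leq\sqrt{M_{t,t}M_{t',t'}}$ and hence the right-hand side is at most $\bigl(\sum_t\sqrt{P(t)\,M_{t,t}}\bigr)^2$. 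Substituting $M_{t,t}=P(t)\,F^2(\tilde\sigma^t,\Pi_t\rho_{RA}^{\otimes n}\Pi_t/P(t))$ and taking suprema yields the desired upper bound.

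For the first (lower) inequality, given any family $\{\sigma^t\in\mc{F}_t\}_{t\in\mc{T}_n^{\mc{X}}}$, form the mixture $\sigma_{R^nA^n}:=\sum_t P(t)\,\sigma^t$; summing the defining bounds of $\mc{F}_t$ against $\rho_R^{\otimes n}=\sum_t P(t)\Pi_t/|T_n^t|$ verifies $\sigma_{R^nA^n}\leq 2^{-nr}I_A^{\ox n}\ox\rho_R^{\otimes n}$ and $\sigma_{R^n}\leq\rho_R^{\otimes n}$, so that $\sigma_{R^nA^n}$ is admissible. Because each $\sigma^t$ is supported on the type-$t$ subspace of $R^n$, all cross terms vanish in $\bra{\rho_{RA}^{\otimes n}}\sigma_{R^nA^n}\ket{\rho_{RA}^{\otimes n}}$ and one finds
\beq
F^2(\sigma_{R^nA^n},\rho_{RA}^{\otimes n})=\sum_t P(t)^2\,F^2(\sigma^t,\Pi_t\rho_{RA}^{\otimes n}\Pi_t/P(t)).
\eeq
A Cauchy--Schwarz step together with the type-count bound $|\mc{T}_n^\mc{X}|\leq (n+1)^{|R|}$ then produces $\sum_t P(t)\,F(\sigma^t,\Pi_t\rho_{RA}^{\otimes n}\Pi_t/P(t))\leq\sqrt{(n+1)^{|R|}}\,F(\sigma_{R^nA^n},\rho_{RA}^{\otimes n})$, and independently optimizing each $\sigma^t$ finishes the argument. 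The one delicate step is the upper bound, where controlling the off-diagonal terms $\bra{\rho^t}\sigma_{R^nA^n}\ket{\rho^{t'}}$ requires the positivity-based Cauchy--Schwarz estimate on the Gram matrix; everything else reduces to bookkeeping on the type projections and the pure-state identity $F^2(\omega,\proj{\psi})=\bra{\psi}\omega\ket{\psi}$.
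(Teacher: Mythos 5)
Your proof is correct, and it takes a genuinely different route from the paper's. The paper proves the upper (second) inequality by monotonicity of the fidelity under the pinching channel $\mc{E}_n(X)=\sum_t \Pi_t X\Pi_t$ followed by block additivity of the fidelity, and the lower (first) inequality by block additivity plus Lemma~\ref{lem:appen1}, which supplies the $\sqrt{(n+1)^{|R|}}$ loss; purity of $\rho_{RA}$ is never used there. You instead exploit purity throughout: the identity $F^2(\omega,\proj{\psi})=\bra{\psi}\omega\ket{\psi}$ reduces all fidelities to quadratic forms in $\ket{\rho_{RA}}^{\ox n}=\sum_t\sqrt{P(t)}\ket{\rho^t}$, the upper bound follows from positivity of the Gram matrix $M_{t,t'}=\bra{\rho^t}\sigma\ket{\rho^{t'}}$ (so $|M_{t,t'}|\le\sqrt{M_{tt}M_{t't'}}$ replaces data processing), and the lower bound follows from the exact block identity $F^2(\sum_t P(t)\sigma^t,\rho_{RA}^{\ox n})=\sum_t P(t)^2F^2(\sigma^t,\Pi_t\rho_{RA}^{\ox n}\Pi_t/P(t))$ together with an ordinary Cauchy--Schwarz over the at most $(n+1)^{|R|}$ types, which replaces Lemma~\ref{lem:appen1}. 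The bookkeeping steps (that $\Pi_t\sigma\Pi_t/P(t)\in\mc{F}_t$ for admissible $\sigma$, and that $\sum_t P(t)\sigma^t$ is admissible for $\sigma^t\in\mc{F}_t$, using $\rho_R^{\ox n}=\sum_t P(t)\Pi_t/|T_n^t|$) match the paper's. What each approach buys: yours is more elementary and self-contained, needing neither the data-processing inequality for fidelity nor the auxiliary lemma, and it yields the same polynomial factor; the paper's argument, by contrast, does not rely on purity of the global state and would carry over verbatim to mixed $\rho_{RA}$ (with $\Pi_t$ defined from the eigenbasis of $\rho_R$), whereas your exact block identity genuinely requires the second argument of the fidelity to be pure.
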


\begin{proof}
We denote by $\mc{E}_n$  the pinching channel $\mc{E}_n(X)=\sum\limits_{t \in \mc{T}_n^{\mc{X}}} \Pi_t X \Pi_t$. 
On the one hand, for any $\sigma_{R^nA^n}$ which satisfies the restriction in Eq.~(\ref{equ:coneequi1}) and $t \in \mc{T}_n^\mc{X}$, we have
\begin{equation}
\label{equ:conr}
\begin{split}
\frac{\Pi_t\sigma_{R^nA^n}\Pi_t}{P(t)}&\leq 2^{-nr}I_{A}^{\ox n}\ox \frac{\Pi_t}{|T_n^t|}, \\
\frac{\Pi_t\sigma_{R^n}\Pi_t}{P(t)} &\leq \frac{\Pi_t}{|T_n^t|}.
\end{split}
\end{equation}
On the other hand, we can get from the data processing inequality of the fidelity function that
\begin{equation}
\label{equ:sec}
\begin{split}
&F(\sigma_{R^nA^n},\rho_{RA}^{\ox n}) \\
\leq &F(\mc{E}_n(\sigma_{R^nA^n}),\mc{E}_n(\rho_{RA}^{\ox n}))\\
=&\sum_{t \in \mc{T}_n^\mc{X}} P(t)F(\frac{\Pi_t\sigma_{R^nA^n}\Pi_t}{P(t)},\frac{\Pi_t\rho_{RA}^{\ox n} \Pi_t}{P(t)}).
\end{split}
\end{equation}
Eq.~(\ref{equ:conr}) and Eq.~(\ref{equ:sec}) give the second inequality in Eq.~(\ref{equ:coneequi1}). For the first inequality in Eq.~(\ref{equ:coneequi1}), noticing that for any $\{\sigma_{R^nA^n}^t~|~\sigma_{R^nA^n}^t \in \mc{F}_t\}_{t \in \mc{T}_n^\mc{X}}$, we have
\begin{equation}
\label{equ:one}
\begin{split}
&\sum_{t \in \mc{T}_n^\mc{X}}P(t)F(\sigma_{R^nA^n}^t,\frac{\Pi_t\rho_{RA}^{\ox n} \Pi_t}{P(t)})\\
=&F(\sum_{t \in \mc{T}_n^\mc{X}}P(t)\sigma_{R^nA^n}^t, \sum_{t \in \mc{T}_n^\mc{X}}\Pi_t\rho_{RA}^{\ox n} \Pi_t) \\
\leq &\sqrt{(n+1)^{|R|}} F(\sum_{t \in \mc{T}_n^\mc{X}}P(t)\sigma_{R^nA^n}^t, \rho_{RA}^{\ox n}),
\end{split}
\end{equation}
where the inequality is due to Lemma~\ref{lem:appen1} in Appendix. Because $\sum\limits_{t \in \mc{T}_n^\mc{X}}P(t)\sigma_{A^nR^n}^t \leq 2^{-nr}I_{A}^{\ox n}\ox \rho_{R}^{\ox n}$ and $\sum\limits_{t \in \mc{T}_n^\mc{X}}P(t)\sigma_{R^n}^t \leq \rho_{R}^{\ox n}$, Eq.~(\ref{equ:one}) implies that the first inequality in Eq.~(\ref{equ:coneequi1}) holds.
\end{proof}

Next, we establish an equivalent expression for the sequence of quantities
\begin{align}
\left\{\sup\left\{F\left(\sigma_{R^nA^n}^t,\frac{\Pi_t\rho_{RA}^{\ox n} \Pi_t}{P(t)}\right)~\middle|~\sigma_{R^nA^n}^t \in \mc{F}_t \right\} \right\}_{t \in \mc{T}_n^\mc{X}}.
\end{align}

\begin{proposition}
\label{prop:equi}
For $\rho_{RA}$, $r$, $\{\Pi_t\}_{t \in \mc{T}_n^\mc{X}}$, $\{P(t)\}_{t \in \mc{T}_n^\mc{X}}$ and $\{\mc{F}_t\}_{t \in \mc{T}_n^\mc{X}}$ defined in Proposition~\ref{prop:pin}, we have
\begin{equation}
\label{equ:equcon}
\sup\{F(\sigma_{R^nA^n}^t,\frac{\Pi_t\rho_{RA}^{\ox n} \Pi_t}{P(t)})~|~\sigma_{R^nA^n}^t \in \mc{F}_t \}=\sup\{F(\tau_{R^n}^t,\psi_t)~|~ \tau_{R^n}^t \in \mc{G}_t  \},~\forall t \in \mc{T}_n^\mc{X},
\end{equation}
where $\mc{G}_t:=\{\tau_{R^n}^t~|~\tau_{R^n}^t \in \mc{S}_{\leq}(R^n), \tau_{R^n}^t \leq 2^{-nr}\frac{\Pi_t}{|T_n^t|}, \sum\limits_{x^n \in T^t_n} \proj{x^n}\tau_{R^n}^t \proj{x^n} \leq \frac{\Pi_t}{|T_n^t|}\}$ and
$\ket{\psi_t}:=\sum\limits_{x^n \in T_n^t} \frac{\ket{x^n}_{R^n}}{\sqrt{|T_n^t|}}$.
\end{proposition}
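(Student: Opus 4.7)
The plan is to recognize that, because $\rho_{RA}$ is pure with Schmidt basis $\{\ket{x}_R\ket{x}_A\}$, the state $\frac{\Pi_t\rho_{RA}^{\ox n}\Pi_t}{P(t)}$ is itself pure. Indeed, since $p^n(x^n)$ is the same for every $x^n\in T_n^t$, one computes $\Pi_t\ket{\rho}^{\ox n}_{RA}=\sqrt{P(t)/|T_n^t|}\sum_{x^n\in T_n^t}\ket{x^n}_{R^n}\ket{x^n}_{A^n}$, so that $\Pi_t\rho_{RA}^{\ox n}\Pi_t/P(t)=\proj{\phi_t}$ with $\ket{\phi_t}:=|T_n^t|^{-1/2}\sum_{x^n\in T_n^t}\ket{x^n}_{R^n}\ket{x^n}_{A^n}$. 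The fidelity with a pure state then collapses to $F(\sigma,\proj{\phi_t})=\sqrt{\bra{\phi_t}\sigma\ket{\phi_t}}$, and analogously $F(\tau,\proj{\psi_t})=\sqrt{\bra{\psi_t}\tau\ket{\psi_t}}$.

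The central tool is the partial isometry $V:R^n\to R^nA^n$ given by $V=\sum_{x^n\in T_n^t}\ket{x^n}_{R^n}\ket{x^n}_{A^n}\bra{x^n}$, which satisfies $V^\dagger V=\Pi_t$, $VV^\dagger\leq I_A^{\ox n}\ox\Pi_t$, and $V\ket{\psi_t}=\ket{\phi_t}$. Thus $\bra{\phi_t}V\tau V^\dagger\ket{\phi_t}=\bra{\psi_t}\tau\ket{\psi_t}$ and $\bra{\phi_t}\sigma\ket{\phi_t}=\bra{\psi_t}V^\dagger\sigma V\ket{\psi_t}$ for any feasible $\tau$ and $\sigma$, so fidelities are preserved in both directions. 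The proof then reduces to showing that $V\cdot V^\dagger$ maps $\mc{G}_t$ into $\mc{F}_t$ and that $V^\dagger\cdot V$ maps $\mc{F}_t$ into $\mc{G}_t$.

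For the direction ``$\leq$'' in \eqref{equ:equcon}, given $\tau_{R^n}^t\in\mc{G}_t$ I would set $\sigma_{R^nA^n}^t:=V\tau_{R^n}^tV^\dagger$ and verify the three conditions in $\mc{F}_t$. Sub-normalization follows from $\tr V\tau V^\dagger=\tr\tau\,\Pi_t\leq 1$, the operator inequality $V\tau V^\dagger\leq 2^{-nr}VV^\dagger/|T_n^t|\leq 2^{-nr}I_A^{\ox n}\ox\Pi_t/|T_n^t|$ uses the constraint $\tau\leq 2^{-nr}\Pi_t/|T_n^t|$ together with $VV^\dagger\leq I_A^{\ox n}\ox\Pi_t$, and a direct computation of the partial trace yields $\tr_{A^n}(V\tau V^\dagger)=\sum_{x^n\in T_n^t}\proj{x^n}\tau\proj{x^n}\leq\Pi_t/|T_n^t|$, which is exactly the marginal constraint encoded in $\mc{G}_t$.

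Conversely, for ``$\geq$'' I would set $\tau_{R^n}^t:=V^\dagger\sigma_{R^nA^n}^tV$ and verify membership in $\mc{G}_t$. Using $V^\dagger(I_A^{\ox n}\ox\Pi_t)V=V^\dagger V=\Pi_t$, the bound $\sigma\leq 2^{-nr}I_A^{\ox n}\ox\Pi_t/|T_n^t|$ transfers to $\tau\leq 2^{-nr}\Pi_t/|T_n^t|$. The key remaining observation, which is the subtlest step, is that $\bra{x^n}\tau\ket{x^n}=\bra{x^n}_{R^n}\bra{x^n}_{A^n}\sigma\ket{x^n}_{R^n}\ket{x^n}_{A^n}$ is bounded by one summand of $\bra{x^n}_{R^n}\sigma_{R^n}\ket{x^n}_{R^n}$, which in turn is $\leq 1/|T_n^t|$ by the marginal constraint $\sigma_{R^n}\leq\Pi_t/|T_n^t|$; summing the diagonal then gives $\sum_{x^n\in T_n^t}\proj{x^n}\tau\proj{x^n}\leq\Pi_t/|T_n^t|$. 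The main conceptual point — and the only nontrivial obstacle — is recognizing that the asymmetric-looking constraint in $\mc{G}_t$ is precisely the pullback through $V$ of the symmetric marginal constraint in $\mc{F}_t$; once the isometry is written down, every remaining step is a one-line verification.
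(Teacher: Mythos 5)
Your proof is correct and follows essentially the same route as the paper: the paper's map $V_t=\sum_{x^n\in T_n^t}\proj{x^n}_{R^n}\ox\ket{x^n}_{A^n}$ is exactly your partial isometry $V$, and both arguments transfer feasible operators back and forth via $V\cdot V^{\dagger}$ and $V^{\dagger}\cdot V$, preserving fidelity because the pinched state $\Pi_t\rho_{RA}^{\ox n}\Pi_t/P(t)$ is pure and equals $V\proj{\psi_t}V^{\dagger}$. The only blemish is cosmetic: your direction labels are swapped (constructing $\sigma=V\tau V^{\dagger}$ from $\tau\in\mc{G}_t$ shows the left-hand supremum is at least the right-hand one, and the pull-back $\tau=V^{\dagger}\sigma V$ gives the reverse), but since you establish both inclusions the claimed equality follows.
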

\begin{proof}
We let $Q_t:=\sum\limits_{x^n \in T_n^t} \proj{x^n}_{R^n}\ox\proj{x^n}_{A^n}$. For any $\sigma_{R^nA^n}^t \in \mc{F}_t $, we have
\begin{equation}
\label{equ:2con}
\begin{split}
&Q_t \sigma_{R^nA^n}^t Q_t=\sum_{x^n,x'^n \in T_n^t} a_{x^n,x'^n} \ket{x^n}\bra{x'^n}_{R^n}\ox \ket{x^n}\bra{x'^n}_{A^n} \leq 2^{-nr}\frac{1}{|T_n^t|} \sum_{x^n \in T_n^t}\proj{x^n}_{R^n} \ox \proj{x^n}_{A^n},\\
&\tr_{A^n}(Q_t \sigma_{R^nA^n}^t Q_t)=\sum_{x^n \in T_n^t} a_{x^n,x^n} \proj{x^n}_{R^n} \leq \sum_{x^n \in T_n^t} \proj{x^n}_{R^n} \sigma_{R^n}^t \proj{x^n}_{R^n}\leq \frac{\Pi_t}{|T_n^t|},
\end{split}
\end{equation}
where $a_{x^n,x'^n}=\bra{x^n}_{A^n}\ox \bra{x^n}_{R^n} \sigma_{A^nR^n}^t \ket{x'^n}_{A^n}\ox \ket{x'^n}_{R^n} $. We denote the operator $\sum\limits_{x^n,x'^n \in T_n^t} a_{x^n,x'^n}\ket{x^n}\bra{x'^n}_{R^n}$ and $\sum\limits_{x^n \in T_n^t} \proj{x^n}_{R^n} \ox \ket{x^n}_{A^n}$ as $\tau_{R^n}^t$ and $V_t$, respectively. We can easily get from Eq.~(\ref{equ:2con}) that
\begin{equation}
\label{equ:equi0}
\begin{split}
&\tau^t_{R^n}=V_t^* Q_t \sigma_{R^nA^n}^t Q_t V_t \leq 2^{-nr} \frac{\Pi_t}{|T_n^t|} \\
&\sum_{x^n \in T_n^t} \proj{x^n}_{R^n} \tau_{R^n}^t \proj{x^n}_{R^n}=a_{x^n,x^n} \proj{x^n}_{R^n} \leq \frac{\Pi_t}{|T_n^t|}.
\end{split}
\end{equation}
Hence $\tau_{R^n}^t \in \mc{G}_t$ and 
\begin{equation}
\label{equ:equi1}
\begin{split}
&F(\tau^t_{R^n},\psi_t) \\
=&F(V_t(\tau^t_{R^n})V_t^*,V_t(\psi_t)V_t^*) \\
=&F(Q_t \sigma_{R^nA^n}^t Q_t,\frac{\Pi_t\rho_{RA}^{\ox n} \Pi_t}{P(t)})\\
=&F( \sigma_{R^nA^n}^t ,\frac{\Pi_t\rho_{RA}^{\ox n} \Pi_t}{P(t)}).
\end{split}
\end{equation}
In turn, for any $\tau_{R^n}^t=\sum\limits_{x^n, x'^n \in T_n^t} b_{x^n,x'^n}\ket{x^n}\bra{x'^n}_{R^n} \in \mc{G}_t$, we have
\begin{equation}
\label{equ:equi2}
\begin{split}
&V_t \tau_{R^n}^t V_t^* \leq 2^{-nr} V_t \frac{\Pi_t}{|T_n^t|} V_t^* \leq 2^{-nr}I_{A}^{\ox n}\ox \frac{\Pi_t}{|T_n^t|}, \\
&\tr_{A^n}(V_t \tau_{R^n}^t V_t^*)=\sum_{x^n \in T_n^t} \proj{x^n}_{R^n} \tau_{R^n}^t \proj{x^n}_{R^n} \leq \frac{\Pi_t}{|T_n^t|},\\
&F(V_t \tau_{R^n}^t V_t^*,\frac{\Pi_t\rho_{RA}^{\ox n} \Pi_t}{P(t)})=F(V_t \tau_{R^n}^t V_t^*,V_t(\psi_t)V_t^*)=F(\tau_{R^n}^t ,\psi_t ).
\end{split}
\end{equation}
Eq.~(\ref{equ:equi0}), Eq.~(\ref{equ:equi1}) together with Eq.~(\ref{equ:equi2}) imply that Eq.~(\ref{equ:equcon}) holds. 
\end{proof}

With the above results in hand, we can start the proof of Eq.~(\ref{equ:main}) now. 

\begin{proofof}[of Eq.~(\ref{equ:main})]
First of all, we prove ``$\geq$'' part. From Proposition~\ref{prop:pin} and Proposition~\ref{prop:equi}, we have
\begin{equation}
\label{equ:convcon}
\begin{split}
&\sup\{F(\sigma_{R^nA^n},\rho_{RA}^{\ox n})~|~\sigma_{R^nA^n}\in \mc{S}_{\leq}(R^nA^n):\sigma_{R^nA^n}\leq 2^{-nr}I_{A}^{\ox n}\ox \rho_{R}^{\ox n}, \sigma_{R^n} \leq \rho_{R}^{\ox n}\} \\
\leq & \sum_{t \in \mc{T}_n^\mc{X}}P(t) \sup\{F(\tau_{A^n}^t,\psi_t)~|~ \tau_{A^n}^t \in \mc{G}_t  \}  \\
\leq &(n+1)^{|R|} \sup_{t \in \mc{T}_n^\mc{X}}2^{-nD(t\|p)} \sup_{\tau_{A^n}^t:\tau_{A^n}^t \in \mc{G}_t} F(\tau_{A^n}^t,\psi_t) \\
\leq &(n+1)^{|R|} \sup_{t \in \mc{T}_n^\mc{X}} 2^{-nD(t\|p)} \sup_{\tau_{A^n}^t:\tau_{A^n}^t \in \mc{G}_t} 2^{\frac{1-\alpha}{2\alpha}(-D^*_{\alpha}(\psi_t \| \frac{\Pi_t}{|T_n^t|})+D^*_\beta(\tau_{A^n}^t\|\frac{\Pi_t}{|T_n^t|})+\frac{1}{\beta-1}\log\tr \tau_{A^n}^t)} \\
\leq &(n+1)^{|R|} \sup_{t \in \mc{T}_n^\mc{X}} 2^{-nD(t\|p)}  2^{\frac{1-\alpha}{2\alpha}(-D^*_{\alpha}(\psi_t \| \frac{\Pi_t}{|T_n^t|})-nr)} \\
=&(n+1)^{|R|} \sup_{t \in \mc{T}_n^\mc{X}} 2^{-nD(t\|p)}  2^{\frac{1-\alpha}{2\alpha}(-\log|T_n^t|-nr)} \\
\leq &(n+1)^{\frac{3|R|}{2}} \sup_{t \in \mc{T}_n^\mc{X}} 2^{-nD(t\|p)}  2^{\frac{1-\alpha}{2\alpha}(-H(t)-nr)},
\end{split}
\end{equation}
where the second inequality comes from Eq.~(\ref{eq:typenumber}) and Eq.~(\ref{eq:proset}), the third inequality is from Lemma~\ref{lem:hof} in Appendix, the fourth inequality follows from Proposition~\ref{prop:mainpro}~(\romannumeral1) and $\tr \tau_{A^n}^t \leq 1,~\forall~\tau_{A^n}^t \in \mc{G}_t$ and the last inequality is due to Eq.~(\ref{eq:numt}). 

Eq.~(\ref{equ:convcon}) implies that
\begin{equation}
\label{equ:licon}
\begin{split}
&\lim_{n \rightarrow \infty}\frac{-1}{n} \log \sup\{F^2(\sigma_{R^nA^n},\rho_{RA}^{\ox n})~|~\sigma_{R^nA^n}\in \mc{S}_{\leq}(R^nA^n):\sigma_{R^nA^n}\leq 2^{-nr}I_{A}^{\ox n}\ox \rho_{R}^{\ox n}, \sigma_{R^n} \leq \rho_{R}^{\ox n}\} \\
\geq &\lim_{n \rightarrow \infty} (\inf_{t \in \mc{T}_n^\mc{X}} \big\{2D(t\|p)+\frac{1-\alpha}{\alpha}(H(t)+nr)  \big\}-3|R| \frac{\log(n+1)}{n})\\
=&\inf_{t \in \mc{Q}(\mc{X})} \big\{2D(t\|p)+\frac{1-\alpha}{\alpha}(H(t)+nr)\big\}.
\end{split}
\end{equation}
Because Eq.~(\ref{equ:licon}) holds for any $\frac{1}{2}<\alpha<1$, we have
\begin{equation}
\begin{split}
&\lim_{n \rightarrow \infty}\frac{-1}{n} \big(1-\epsilon^P_{A^n|\dot{R^n}}(\rho_{RA}^{\ox n},nr)\big) \\
\geq &\lim_{n \rightarrow \infty}\frac{-1}{n} \log \sup\{F^2(\sigma_{R^nA^n},\rho_{RA}^{\ox n})~|~\sigma_{R^nA^n}\in \mc{S}_{\leq}(R^nA^n):\sigma_{R^nA^n}\leq 2^{-nr}I_{A}^{\ox n}\ox \rho_{R}^{\ox n}, \sigma_{R^n} \leq \rho_{R}^{\ox n}\} \\
\geq &\sup_{\frac{1}{2}<\alpha<1} \inf_{t \in \mc{Q}(\mc{X})} \big\{2D(t\|p)+\frac{1-\alpha}{\alpha}(H(t)+nr)\big\} \\
=&\sup_{0<s<1} \inf_{t \in \mc{Q}(\mc{X})} \big\{2D(t\|p)+s(H(t)+nr) \big\} \\
=&\inf_{t \in \mc{Q}(\mc{X})} \sup_{0<s<1} \big\{2D(t\|p)+s(H(t)+nr) \big\} \\
=&\inf_{t \in \mc{Q}(\mc{X})} \big\{2D(t\|p)+|H(t)+nr|^+ \big\}.
\end{split}
\end{equation}

Next, we turn to the proof of ``$\leq$'' part. For $t \in \mc{T}_n^\mc{X}$ such that $\frac{2^{-nr}}{|T_n^t|} \geq 1$, we have $\psi_t \in \mc{G}_t$. Hence,
\begin{equation}
\label{equ:lowbo1}
\sup\{F(\sigma_{A^n}^t,\psi_t)~|~ \sigma_{A^n}^t \in \mc{G}_t  \} \geq F(\psi_t,\psi_t)=1=\min\{1,\sqrt{\frac{2^{-nr}}{|T_n^t|}}\}.
\end{equation}
For $t \in \mc{T}_n^\mc{X}$ which satisfies $\frac{2^{-nr}}{|T_n^t|} <1$,  $\frac{2^{-nr}}{|T_n^t|}\psi_t \in \mc{G}_t$. In this case, we also have
\begin{equation}
\label{equ:lowbo2}
\sup\{F(\sigma_{A^n}^t,\psi_t)~|~ \sigma_{A^n}^t \in \mc{G}_t  \} \geq F(\psi_t,\frac{2^{-nr}}{|T_n^t|}\psi_t)=\sqrt{\frac{2^{-nr}}{|T_n^t|}}=\min\{1,\sqrt{\frac{2^{-nr}}{|T_n^t|}}\}.
\end{equation}
Eq.~(\ref{equ:lowbo1}) and Eq.~(\ref{equ:lowbo2}) imply that for any $t \in \mc{T}_n^\mc{X}$, it holds that
\begin{equation}
\label{equ:lowbo}
\sup\{F(\sigma_{A^n}^t,\psi_t)~|~ \sigma_{A^n}^t \in \mc{G}_t  \} \geq \min\{1,\sqrt{\frac{2^{-nr}}{|T_n^t|}}\}.
\end{equation}
From Eq.~(\ref{equ:lowbo}), Proposition~\ref{prop:pin} and Proposition~\ref{prop:equi}, we can obtain
\begin{equation}
\label{equ:achiv}
\begin{split}
&\sup\{F(\sigma_{R^nA^n},\rho_{RA}^{\ox n})~|~\sigma_{R^nA^n}\in \mc{S}_{\leq}(R^nA^n):\sigma_{R^nA^n}\leq 2^{-nr}I_{A}^{\ox n}\ox \rho_{R}^{\ox n}, \sigma_{R^n} \leq \rho_{R}^{\ox n}\} \\
\geq &\frac{1}{\sqrt{(n+1)^{|R|}}}\sup_{t \in \mc{T}_n^\mc{X}} P(t)\min\{1,\sqrt{\frac{2^{-nr}}{|T_n^t|}}\} \\
\geq &(n+1)^{-\frac{3|R|}{2}}\sup_{t \in \mc{T}_n^\mc{X}} 2^{-nD(t\|p)}\min\{1,\sqrt{\frac{2^{-nr}}{2^{nH(t)}}}\},
\end{split}
\end{equation}
where the second inequality is from Eq.~(\ref{eq:numt}) and Eq.~(\ref{eq:proset}). Eq.~(\ref{equ:achiv}) gives 
\begin{equation}
\begin{split}
&\lim_{n \rightarrow \infty}\frac{-1}{n} \big(1-\epsilon^P_{A^n|\dot{R^n}}(\rho_{RA}^{\ox n},nr)\big) \\
\leq &\lim_{n \rightarrow \infty}\frac{-1}{n} \log \frac{1}{2}\sup\{F^2(\sigma_{R^nA^n},\rho_{RA}^{\ox n})~|~\sigma_{R^nA^n}\in \mc{S}_{\leq}(R^nA^n):\sigma_{R^nA^n}\leq 2^{-nr}I_{A}^{\ox n}\ox \rho_{R}^{\ox n}, \sigma_{R^n} \leq \rho_{R}^{\ox n}\}\\
\leq & \lim_{n \rightarrow \infty} (\inf_{t \in \mc{T}_n^\mc{X}} \big\{2D(t\|p)+\max\{0,r+H(t)\}  \big\}+3|R|\frac{\log(n+1)}{n})\\
= &\inf_{t \in \mc{Q}(\mc{X})} \big\{2D(t\|p)+|H(t)+r|^+ \big\}.
\end{split}
\end{equation}
We are done.
\end{proofof}

%*******************************************************************************

\section{Partially smoothed mutual max-information for pure states}
\label{sec:strmutualpure}

In this section, we determine the strong converse exponent for partially smoothing of the mutual max-information for pure states, i.e., Eq.~(\ref{equ:puremutual}). We establish the optimality part and the achievability part in the following subsections, respectively.

%*******************************************************************************

\subsection{Optimality}

We first establish a lower bound of the strong converse exponent for partially smoothing of the mutual max-information for any quantum state $\rho_{RA}$.
The optimality part will follow from this lower bound directly.

\begin{proposition}
\label{propo:main}
For $\rho_{RA} \in \mc{S}(RA)$, $r \in \mathbb{R}$ and $\Delta \in \{d, P\}$, we have
\begin{equation}
\label{equ:genebound1}
\lim_{n \rightarrow \infty} \frac{-1}{n} \log(1-\epsilon^\Delta_{\dot{R^n}:A^n}(\rho_{RA}^{\otimes n}, nr)) \geq \sup_{\frac{1}{2}<\alpha<1} \frac{1-\alpha}{\alpha}\big\{H_{\beta}(R)_\rho-H^*_{\alpha}(R|A)_\rho-r \big\},
\end{equation}
where $\frac{1}{\alpha}+\frac{1}{\beta}=2$.
\end{proposition}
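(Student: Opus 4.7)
The plan is to upper-bound the fidelity $F(\rho_{RA}^{\otimes n},\tilde{\rho}_{R^n A^n})$ uniformly over feasible $\tilde{\rho}_{R^n A^n}$ via Lemma~\ref{lem:hof}, and then translate this into a lower bound on $-\tfrac{1}{n}\log(1-\epsilon^\Delta)$. For the purified distance, the inequality $1-\epsilon^P\leq F_{\max}^2$ (following from $P=\sqrt{1-F^2}$ and $1-\sqrt{1-x}\leq x$ on $[0,1]$) makes the reduction immediate; for trace distance, Fuchs--van de Graaf gives $1-\epsilon^d\leq F_{\max}$. To eliminate the $\tilde{\rho}$-dependence of the witness $\sigma_{A^n}$ in $\tilde{\rho}\leq 2^{nr}\rho_R^{\otimes n}\otimes\sigma_{A^n}$, I would first pre-symmetrize $\tilde{\rho}$ under the symmetric group (the fidelity only improves under such symmetrization), so that $\sigma_{A^n}$ may also be taken permutation-symmetric, and then invoke Lemma~\ref{lem:sym} to replace it by the universal symmetric state $\sigma_{A^n}^u$, paying only a polynomial factor $v_{n,|A|}$ subexponential in $n$.

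For fixed $\alpha\in(\tfrac{1}{2},1)$ with conjugate $\beta$ satisfying $\tfrac{1}{\alpha}+\tfrac{1}{\beta}=2$, I would then apply Lemma~\ref{lem:hof} with reference $\tau=I_{R^n}\otimes\sigma_{A^n}^u$, yielding
\[
F(\rho_{RA}^{\otimes n},\tilde{\rho})\leq 2^{\frac{1-\alpha}{2\alpha}\left[-D^*_\alpha(\rho_{RA}^{\otimes n}\|\tau)+D^*_\beta(\tilde{\rho}\|\tau)+\tfrac{1}{\beta-1}\log\tr\tilde{\rho}\right]}.
\]
The first divergence is controlled by additivity of the sandwiched conditional entropy combined with the universal-state argument: swapping $\sigma_{A^n}^u$ for the minimizer changes the divergence by at most $\log v_{n,|A|}$, so $-D^*_\alpha(\rho_{RA}^{\otimes n}\|\tau)\leq nH^*_\alpha(R|A)_\rho+O(\log n)$. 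For the second, the feasibility inequality $\tilde{\rho}\leq 2^{nr}v_{n,|A|}\,\rho_R^{\otimes n}\otimes\sigma_{A^n}^u$ combined with the sandwich identity $(I_R\otimes\sigma)^{(1-\beta)/(2\beta)}(\rho_R\otimes\sigma)(I_R\otimes\sigma)^{(1-\beta)/(2\beta)}=\rho_R\otimes\sigma^{1/\beta}$ and Weyl's monotonicity of eigenvalues yields $D^*_\beta(\tilde{\rho}\|\tau)\leq nr-nH_\beta(R)_\rho+O(\log n)$. Substituting both estimates into the fidelity bound, taking $-\tfrac{1}{n}\log F_{\max}^2$, and supremizing over $\alpha$ delivers the proposition.

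The main technical obstacle is matching the $r$-coefficient across the two divergence bounds. A naive application of Weyl-monotonicity on $D^*_\beta$ produces a stray factor of $\beta/(\beta-1)$ on $nr$, which would corrupt the final exponent to something weaker than the claim. The anticipated resolution is to combine Lemma~\ref{lem:hof}'s built-in trace-correction term $\tfrac{1}{\beta-1}\log\tr\tilde{\rho}$ (vanishing since $\tr\tilde{\rho}=1$) with the marginal constraint $\tilde{\rho}_{R^n}=\rho_R^{\otimes n}$, which anchors the divergence via data processing under $\tr_{A^n}$ to $D^*_\beta(\rho_R^{\otimes n}\|I_{R^n})=-nH_\beta(R)_\rho$. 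Threading these inequalities together so that the coefficient of $r$ reduces to exactly $1$ is the main bookkeeping challenge; once settled, the limit $n\to\infty$ eliminates all $O(\log n/n)$ corrections arising from $v_{n,|A|}$, and the bound follows.
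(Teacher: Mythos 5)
There is a genuine gap at the central step, and you have in fact located it yourself: after applying Lemma~\ref{lem:hof} globally with reference $\tau=I_{R^n}\ox\sigma^u_{A^n}$, the only tool you offer for bounding $D^*_\beta(\tilde{\rho}\|\tau)$ is the operator inequality $\tilde{\rho}\leq 2^{nr}v_{n,|A|}\,\rho_R^{\ox n}\ox\sigma^u_{A^n}$ together with Weyl monotonicity, and this unavoidably yields $D^*_\beta(\tilde{\rho}\|\tau)\leq\frac{\beta}{\beta-1}\,nr-nH_\beta(R)_\rho+O(\log n)$, with the stray $\frac{\beta}{\beta-1}$ on $r$. Neither of your proposed rescues closes this: the correction term $\frac{1}{\beta-1}\log\tr\tilde{\rho}$ in Lemma~\ref{lem:hof} is $\leq 0$ for (sub)normalized $\tilde{\rho}$ and cannot absorb an excess of order $\frac{1}{\beta-1}nr$, and data processing under $\tr_{A^n}$ gives $D^*_\beta(\tilde{\rho}\|I_{R^n}\ox\sigma^u_{A^n})\geq D^*_\beta(\rho_R^{\ox n}\|I_{R^n})=-nH_\beta(R)_\rho$, i.e.\ a \emph{lower} bound on the very quantity you need to bound from \emph{above}, so it only shows your target inequality would be tight, not that it holds. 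What you would actually need is the sharpened statement $D^*_\beta(\tilde{\rho}\|I_{R^n}\ox\sigma^u_{A^n})\leq nr-nH_\beta(R)_\rho+o(n)$ for all feasible $\tilde{\rho}$, and nothing in the cited lemmas delivers it; proving it is essentially the whole difficulty of the proposition.

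The paper's proof supplies exactly the missing mechanism: before invoking Lemma~\ref{lem:hof}, it pinches with respect to the type projections of $\rho_R^{\ox n}$, so that on each block the reference marginal is flat, $\Pi_t\rho_R^{\ox n}\Pi_t=P(t)\frac{\Pi_t}{|T_n^t|}$. Within a block the feasibility constraint becomes $\frac{\Pi_t\tilde{\rho}\Pi_t}{P(t)}\leq 2^{nr}\frac{\Pi_t}{|T_n^t|}\ox\sigma_{A^n}$, and monotonicity of $D^*_\beta$ in the second argument (Proposition~\ref{prop:mainpro}~(i)) then gives $D^*_\beta\big(\frac{\Pi_t\tilde{\rho}\Pi_t}{P(t)}\big\|\frac{\Pi_t}{|T_n^t|}\ox\sigma_{A^n}\big)\leq nr$ with coefficient exactly $1$, because in the flat case the $D_{\max}$-type constraint transfers to $D^*_\beta$ without the $\frac{\beta}{\beta-1}$ blow-up. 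The non-flatness of $\rho_R$ is not lost: it sits in the weights $P(t)$ of the block decomposition of the fidelity, and Lemma~\ref{lem:li} shows that $\inf_t\{\frac{\alpha}{\alpha-1}\log P(t)+\log|T_n^t|\}$ and $\inf_t\{\frac{\alpha}{\alpha-1}\log P(t)+D^*_\alpha(\cdot\|\Pi_t\ox\sigma^u_{A^n})\}$ converge per copy to $H_\beta(R)_\rho$ and $-H^*_\alpha(R|A)_\rho$ respectively, which is where the conjugate-order R\'enyi entropy actually comes from. Your outer reductions (Fuchs--van de Graaf, symmetrization and the universal symmetric state, the bound $-D^*_\alpha(\rho_{RA}^{\ox n}\|I\ox\sigma^u_{A^n})\leq nH^*_\alpha(R|A)_\rho+O(\log n)$) are fine and parallel the paper, but without the type-pinching step (or an independent proof of the sharpened one-shot bound above) the argument does not go through.
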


\begin{proof}
By Fuchs–van de Graaf inequality, we have that for any $n \in \mathbb{N}$,
\begin{equation}
\label{equ:dPre}
\epsilon^d_{\dot{R^n}:A^n}(\rho_{RA}^{\otimes n}, nr) \leq \epsilon^P_{\dot{R^n}:A^n}(\rho_{RA}^{\otimes n}, nr).
\end{equation}
Eq.~(\ref{equ:dPre}) implies that we only need to establish Eq.~(\ref{equ:genebound1}) for the purified distance in the following.

For $\rho_R$ with spectral decomposition $\rho_R=\sum\limits_{x \in \mc{X}} p(x)\proj{x}$, we can write $\rho_R^{\ox n}$ as
\begin{equation}
\rho_{R}^{\ox n}=\sum_{t \in \mathcal{T}_n^{\mc{X}}} P(t) \frac{\Pi_t}{|T_n^t|},
\end{equation}
where  $P(t)=\sum\limits_{x^n \in T_n^t} p^n(x^n)$, $\Pi_t=\sum\limits_{x^n \in T_n^t} \proj{x^n}$.

We let $\mc{E}_n$ be the pinching channel $\mc{E}_n(X):=\sum\limits_{t \in \mc{T}_n^{\mc{X}}} \Pi_t X \Pi_t$. For any $\sigma_{A^n} \in \mc{S}({A^n})$, $\tilde{\rho}_{R^nA^n} \in \mc{S}({R^nA^n})$ satisfy 
\begin{align}
\tilde{\rho}_{R^n}&=\rho_{R}^{\ox n}, \label{1}\\
\tilde{\rho}_{R^nA^n} &\leq 2^{nr} \rho_{R}^{\ox n} \ox \sigma_{A^n} \label{2}
\end{align}
and $\frac{1}{2}<\alpha<1$, $F^2(\tilde{\rho}_{R^nA^n}, \rho_{RA}^{\ox n})$ can be upper bounded as 
\begin{equation}
\label{equ:ev}
\begin{split}
&F^2(\tilde{\rho}_{R^nA^n}, \rho_{RA}^{\ox n}) \\
\leq &F^2(\mc{E}_n(\tilde{\rho}_{R^nA^n}), \mc{E}_n(\rho_{RA}^{\ox n})) \\
=& \big(\sum_{t \in \mathcal{T}_n^{\mc{X}}} F( \Pi_t\tilde{\rho}_{R^nA^n} \Pi_t, \Pi_t \rho_{RA}^{\ox n} \Pi_t)\big)^2\\
=&\big(\sum_{t \in \mathcal{T}_n^{\mc{X}}} P(t) F(\frac{\Pi_t\tilde{\rho}_{R^nA^n} \Pi_t}{P(t)}, \frac{\Pi_t \rho_{RA}^{\ox n} \Pi_t}{P(t)})\big)^2 \\
\leq & \left( \sum_{t \in \mathcal{T}_n^{\mc{X}}} P(t) 2^{\frac{1-\alpha}{2\alpha}\big\{-D^*_{\alpha}( \frac{\Pi_t \rho_{RA}^{\ox n} \Pi_t}{P(t)} \| \frac{\Pi_t}{|T_n^t|} \ox \sigma_{A^n})+D^*_{\beta}(\frac{\Pi_t\tilde{\rho}_{R^nA^n} \Pi_t}{P(t)}\| \frac{\Pi_t}{|T_n^t|} \ox \sigma_{A^n})\big\}} \right)^2,
\end{split}
\end{equation}
where the last inequality comes from Lemma~\ref{lem:hof} in Appendix.

In the following, we evaluate $D^*_\alpha(\cdot\|\cdot)$ and $D^*_{\beta}(\cdot\|\cdot)$ in Eq.~(\ref{equ:ev}), respectively. On the one hand, because $\tilde{\rho}_{R^nA^n}$ satisfies Eq.~(\ref{2}), $\frac{\Pi_t\tilde{\rho}_{R^nA^n} \Pi_t}{P(t)} \leq 2^{nr}\frac{\Pi_t}{|T_n^t|} \ox \sigma_{A^n}$. Furthermore, we get from Proposition~\ref{prop:mainpro} (\romannumeral1) that
\begin{equation}
\label{equ:1ev}
D^*_{\beta}(\frac{\Pi_t\tilde{\rho}_{R^nA^n} \Pi_t}{P(t)}\| \frac{\Pi_t}{|T_n^t|} \ox \sigma_{A^n}) \leq nr.
\end{equation}
On the other hand, 
\begin{equation}
\label{equ:2ev}
\begin{split}
&-D^*_{\alpha}( \frac{\Pi_t \rho_{RA}^{\ox n} \Pi_t}{P(t)} \| \frac{\Pi_t}{|T_n^t|} \ox \sigma_{A^n}) \\
=&-D^*_{\alpha}( \frac{\Pi_t \rho_{RA}^{\ox n} \Pi_t}{P(t)} \| \Pi_t \ox \sigma_{A^n})-\log|T_n^t| \\
\leq &-\inf_{\tau_{A^n} \in \mc{S}(A^n)}D^*_{\alpha}( \frac{\Pi_t \rho_{RA}^{\ox n} \Pi_t}{P(t)} \| \Pi_t \ox \tau_{A^n})-\log|T_n^t| \\
=&-\inf_{\tau_{A^n} \in \mc{S}_{\rm{sym}}(A^n)}D^*_{\alpha}( \frac{\Pi_t \rho_{RA}^{\ox n} \Pi_t}{P(t)} \| \Pi_t \ox \tau_{A^n})-\log|T_n^t| \\
\leq &-D^*_{\alpha}( \frac{\Pi_t \rho_{RA}^{\ox n} \Pi_t}{P(t)} \| \Pi_t \ox v_{n,|A|}\sigma_{A^n}^u)-\log|T_n^t| \\
= &-D^*_{\alpha}( \frac{\Pi_t \rho_{RA}^{\ox n} \Pi_t}{P(t)} \| \Pi_t \ox \sigma_{A^n}^u)+\log v_{n,|A|}-\log|T_n^t|,
\end{split}
\end{equation}
where the second equality is because that $\frac{\Pi_t \rho_{RA}^{\ox n} \Pi_t}{P(t)}$ is a symmetric state, we can restrict the infimum to over the symmetric states and the second inequality is from Lemma~\ref{lem:sym}.  Eq.~(\ref{equ:ev}), Eq.~(\ref{equ:1ev}) and Eq.~(\ref{equ:2ev}) give
\begin{equation}
\label{equ:new}
\begin{split}
&F^2(\tilde{\rho}_{R^nA^n}, \rho_{RA}^{\ox n}) \\
\leq & \big( \sum_{t \in \mathcal{T}_n^{\mc{X}}} P(t) 2^{\frac{1-\alpha}{2\alpha}\big\{-D^*_{\alpha}( \frac{\Pi_t \rho_{RA}^{\ox n} \Pi_t}{P(t)} \| \Pi_t \ox \sigma_{A^n}^u)+\log v_{n,|A|}-\log|T_n^t|+nr\big\}} \big)^2 \\
\leq & (n+1)^{2|R|} \max_{t \in \mathcal{T}_n^{\mc{X}}}  P(t)^2 2^{\frac{1-\alpha}{\alpha}\big\{-D^*_{\alpha}( \frac{\Pi_t \rho_{RA}^{\ox n} \Pi_t}{P(t)} \| \Pi_t \ox \sigma_{A^n}^u)+\log v_{n,|A|}-\log|T_n^t|+nr\big\}},
\end{split}
\end{equation}
where the last line is due to Eq.~(\ref{eq:typenumber}). Because Eq.~(\ref{equ:new}) holds for any $\sigma_{A^n}$ and $\tilde{\rho}_{R^nA^n}$ that  satisfy Eq.~(\ref{1}) and Eq.~(\ref{2}), according to the definition of $\epsilon^P_{\dot{R^n}:A^n}(\rho_{RA}^{\otimes n}, nr)$, we have 
\begin{equation}
\label{equ:last}
\begin{split}
& \frac{-1}{n} \log(1-\epsilon^P_{\dot{R^n}:A^n}(\rho_{RA}^{\otimes n}, nr)) \\
\geq &\frac{-1}{n} \log \sup_{\sigma_{A^n} \in \mc{S}(A^n)}\sup_{\tilde{\rho}_{R^nA^n}:\tilde{\rho}_{R^nA^n} \leq 2^{nr} \rho_R^{\ox n}\ox \sigma_{A^n}, \tilde{\rho}_{R^n}=\rho_{R}^{\ox n}} F^2(\tilde{\rho}_{R^nA^n}, \rho_{RA}^{\ox n})
\\
\geq &\frac{1}{n}\inf_{t \in \mc{T}_n^{\mc{X}}} \big\{-2 \log P(t)+\frac{1-\alpha}{\alpha}\big(D^*_{\alpha}( \frac{\Pi_t \rho_{RA}^{\ox n} \Pi_t}{P(t)} \| \Pi_t \ox \sigma_{A^n}^u)-\log v_{n,|A|}+\log|T_n^t|-nr  \big)   \big\}-\frac{2|R|\log(n+1)}{n} \\
\geq &\frac{1}{n}\inf_{t \in \mc{T}_n^{\mc{X}}} \big\{-2 \log P(t)+\frac{1-\alpha}{\alpha}\big(D^*_{\alpha}( \frac{\Pi_t \rho_{RA}^{\ox n} \Pi_t}{P(t)} \| \Pi_t \ox \sigma_{A^n}^u)+\log|T_n^t|-nr  \big)   \big\}-\frac{\log v_{n,|A|}}{n}-\frac{2|R|\log(n+1)}{n} \\
\geq &\frac{1}{n} \inf_{t \in \mc{T}_n^{\mc{X}}} \frac{1-\alpha}{\alpha} \big\{\frac{\alpha}{\alpha-1} \log P(t)+D^*_{\alpha}( \frac{\Pi_t \rho_{RA}^{\ox n} \Pi_t}{P(t)} \| \Pi_t \ox \sigma_{A^n}^u) \big\}+
\frac{1}{n} \inf_{t \in \mc{T}_n^{\mc{X}}} \frac{1-\alpha}{\alpha} \big\{\frac{\alpha}{\alpha-1} \log P(t)+\log|T_n^t|\big\}\\
&-\frac{1-\alpha}{\alpha}r-\frac{\log v_{n,|A|}}{n}-\frac{2|R|\log(n+1)}{n}.
\end{split}
\end{equation}
Eq.~(\ref{equ:last}) and Lemma~\ref{lem:li} in Appendix implies that 
\begin{equation}
\label{equ:new2}
\lim_{n \rightarrow \infty} \frac{-1}{n} \log(1-\epsilon^P(\rho_{RA}^{\otimes n}, nr)) \geq \frac{1-\alpha}{\alpha}\big\{H_{\beta}(R)_\rho-H^*_{\alpha}(R|A)_\rho-r  \big\}.
\end{equation}
Noticing that Eq.~(\ref{equ:new2}) holds for any $\alpha \in (\frac{1}{2},1)$, we have
\begin{equation}
\lim_{n \rightarrow \infty} \frac{-1}{n} \log (1-\epsilon^P(\rho_{RA}^{\otimes n}, nr)) \geq \sup_{\frac{1}{2}<\alpha<1}\frac{1-\alpha}{\alpha}\big\{H_{\beta}(R)_\rho-H^*_{\alpha}(R|A)_\rho-r  \big\}.
\end{equation}
\end{proof}

From Proposition~\ref{propo:main} and the duality relation of the sandwiched conditional R\'enyi  entropy~(Proposition~\ref{prop:mainpro}~(\romannumeral3)), we can obtain the following corollary.
\begin{corollary}
\label{cor:conver}
For pure state $\rho_{RA} \in \mc{S}(RA)$, $r \in \mathbb{R}$ and $\Delta \in \{d, P\}$, we have
\begin{equation}
\label{equ:genebound}
\lim_{n \rightarrow \infty} \frac{-1}{n} \log(1-\epsilon^\Delta_{\dot{R^n}:A^n}(\rho_{RA}^{\otimes n}, nr)) \geq \sup_{\beta>1} \frac{\beta-1}{\beta}\big\{2H_{\beta}(R)_\rho-r \big\}.
\end{equation}
\end{corollary}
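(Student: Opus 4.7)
The plan is to specialize the general lower bound from Proposition~\ref{propo:main} to the pure-state case by invoking the duality relation of the sandwiched conditional R\'enyi entropy (Proposition~\ref{prop:mainpro}(iii)). For any $\rho_{RA}$, Proposition~\ref{propo:main} yields
\[
\lim_{n\to\infty}\frac{-1}{n}\log(1-\epsilon^{\Delta}_{\dot{R^n}:A^n}(\rho_{RA}^{\otimes n},nr)) \;\geq\; \sup_{\frac{1}{2}<\alpha<1} \frac{1-\alpha}{\alpha}\{H_\beta(R)_\rho - H^*_\alpha(R|A)_\rho - r\},
\]
with $\frac{1}{\alpha}+\frac{1}{\beta}=2$, so the task reduces to simplifying $H^*_\alpha(R|A)_\rho$ under the purity assumption and rewriting the supremum.

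First, I would apply Proposition~\ref{prop:mainpro}(iii) with $\tau_R = I_R$, formally viewing the pure state $\rho_{RA}$ as $\rho_{RAC}$ with a trivial one-dimensional $C$. The duality then collapses the infimum over $\sigma_A$ to a single term in the (trivial) variable $\sigma_C$, giving $\inf_{\sigma_A} D^*_\alpha(\rho_{RA}\|I_R\otimes\sigma_A) = -D^*_\beta(\rho_R\|I_R)$. A direct computation from Definition~\ref{definition:sand} yields $D^*_\beta(\rho_R\|I_R) = \frac{1}{\beta-1}\log\tr\rho_R^\beta = -H_\beta(R)_\rho$, and hence $H^*_\alpha(R|A)_\rho = -H_\beta(R)_\rho$, so that the bracketed quantity becomes exactly $2H_\beta(R)_\rho - r$.

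Next, under the constraint $\frac{1}{\alpha}+\frac{1}{\beta}=2$ one verifies the algebraic identity $\frac{1-\alpha}{\alpha} = \frac{\beta-1}{\beta}$, and the map $\alpha\mapsto \beta = \frac{\alpha}{2\alpha-1}$ is a continuous bijection from $(1/2,1)$ onto $(1,\infty)$. Substituting both identities into the lower bound of Proposition~\ref{propo:main} and reparametrising the supremum in terms of $\beta$ directly produces $\sup_{\beta>1}\frac{\beta-1}{\beta}\{2H_\beta(R)_\rho - r\}$, which is the claimed inequality.

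The only minor subtlety is the appearance of $\tau_R^{-1} = I_R^{-1}$ in the duality formula: if $\rho_R$ lacks full support on $R$, one restricts to $\supp(\rho_R)$, on which $I_R^{-1}$ is simply the projector onto that support, and the value of $D^*_\beta(\rho_R\|\cdot)$ is unaffected. Consequently there is no substantive obstacle; the corollary is essentially an algebraic consequence of the two cited results.
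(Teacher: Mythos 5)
Your proposal is correct and follows the same route as the paper: the paper also obtains the corollary by combining Proposition~\ref{propo:main} with the duality relation in Proposition~\ref{prop:mainpro}~(iii), which for a pure $\rho_{RA}$ gives $H^*_\alpha(R|A)_\rho=-H_\beta(R)_\rho$ and hence the bound $\sup_{\beta>1}\frac{\beta-1}{\beta}\{2H_\beta(R)_\rho-r\}$ after the reparametrisation $\frac{1-\alpha}{\alpha}=\frac{\beta-1}{\beta}$. Your explicit handling of the trivial purifying system and the support issue for $I_R^{-1}$ just spells out details the paper leaves implicit.
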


%*******************************************************************************

\subsection{Achievability}

In this section, we derive the achievability part of Eq.~(\ref{equ:puremutual}).  In order to do this, we need to establish a relation between the partially smoothed mutual max-information and quantum data compression. Firstly, we introduce classical data compression and quantum data compression, respectively.

\emph{(a) Classical data compression} Let $\mc{X}$ be a finite alphabet set. In the procedure of classical data compression, a classical information source first produces a sequence of signals $x^n \in \mc{X}^{\times n}$. These signals are chosen from a probability 
distribution $p \in  \mc{Q}(\mc{X})$ independently. Then the sender applies an encoder function $f:\mc{X}^{\times n} \rightarrow \{0,1\}^{\times m}$ to encode $x^n$ into a binary string and sends it to the receiver. Finally, the receiver recovers the original sequence $x^n$ by a decoder function $g: \{0,1\}^{\times m} \rightarrow \mc{X}^{\times n}$.
The 2-tuple $\mc{C}_n\equiv (f,g)$ is called a $n$-length  classical data compression scheme for information source $p$. The size of $\mc{C}_n$ is $|\mc{C}_n|:=2^m$ and the success probability of $\mc{C}_n$ is given by 
\begin{equation}
\label{equ:cladef}
P_s(\mc{C}_n,p):=\sum_{x^n \in G}  p^n(x^n),
\end{equation}
where $G:=\{x^n: g\circ f(x^n)=x^n  \}.$ Shannon has  proved that to achieve asymptotically perfect data compression in which $P_s(\mc{C}_n,p) \rightarrow 1$, the compression rate $r$ must be larger than the Shannon entropy $H(p)$~\cite{Shannon1948math}. 
When the compression rate $r$ is lower than $H(p)$, the optimal success probability among all $n$-length  data compression schemes with size $2^{nr}$ goes to $0$ inevitably  , as $n \rightarrow \infty$. The exact exponential rate 
of this decay is called the strong converse exponent of classical data compression and is formally defined as
\begin{equation}
E_{sc}^{c}(p,r):=\lim_{n \rightarrow \infty} \frac{-1}{n} \log \max_{C_n:|C_n|=2^{nr}} P_s(\mc{C}_n,p).
\end{equation}
$E_{sc}^{c}(p,r)$ has been determined in~\cite{CsiszarKorner2011information} , i.e.,
\begin{equation}
\label{equ:strongcla}
E_{sc}^{c}(p,r)=\sup_{\beta >1}\frac{\beta-1}{\beta}\big\{H_{\beta}(p)-r\big\}.
\end{equation}

\emph{(b) Blind and visible quantum data compression} A quantum information source consists of an ensemble of pure states $\mathcal{Z}=\{p_i, \psi_i\}$, where $\{\psi_i\} \in \mc{S}(A)$ are pure states and $\{p_i\}$ is a probability 
distribution. The corresponding quantum state $\rho_A=\sum\limits_i p_i \psi_i$ is called the source state. According to the ways of encoding, quantum data compression can be divided into blind and visible.

In blind quantum data compression, the encoder knows the source state $\rho_A$ but has no knowledge about these $\{\psi_i\}$. The encoder and decoder maps are two quantum channels $\mc{E}_{A \rightarrow C}$ and $\mc{D}_{C \rightarrow A}$,
respectively. The size of such a blind compression protocol is the dimension of the compressed space $|C|$ and the performance for this protocol is characterized by the entanglement fidelity $F^2(\mc{D}_{C \rightarrow A} \circ \mc{E}_{A \rightarrow C}(\rho_{RA}), \rho_{RA})$, where $\rho_{RA}$ is a purification of $\rho_A$. 

In visible quantum data compression, the encoder knows these $\{\psi_i\}$ and the probability distribution $\{p_i\}$. In this case, the encoder can be an arbitrary map $\mc{V}:\{\psi_i\} \rightarrow \mc{S}(C)$~($\mc{V}$ even can be non-linear). The decoder is still a quantum channel $\mc{D}_{C \rightarrow A}$. The size and the performance for this visible compression protocol are $|C|$ and the ensemble average fidelity $\sum\limits_{i} p_i \tr \big(\mc{D}_{C \rightarrow A}\circ\mc{V}(\psi_i)\psi_i  \big)$.

The optimal performance among all blind and visible compression protocols with a fixed size $2^\lambda$  are denoted by $\epsilon^b(\rho_A, \lambda)$ and  $\epsilon^v(\rho_A, \lambda)$, i.e.,
\begin{align}
\begin{split}
\epsilon^b(\rho_A, \lambda):&=\max_{(\mc{D}_{C \rightarrow A},\mc{E}_{A \rightarrow C}): |C|=2^\lambda} F^2(\mc{D}_{C \rightarrow A} \circ \mc{E}_{A \rightarrow C}(\rho_{RA}), \rho_{RA}), \\
\epsilon^v(\rho_A, \lambda):&=\max_{(\mc{D}_{C \rightarrow A},\mc{V}): |C|=2^\lambda} \sum\limits_{i} p_i \tr \big(\mc{D}_{C \rightarrow A}\circ\mc{V}(\psi_i)\psi_i  \big).
\end{split}
\end{align}

The paper~\cite{schumacher1995quantum, BFJS1996general} has proved that a compression rate larger than the Von Neumann entropy $H(A)_\rho$ is necessary and sufficient for achieving asymptotically perfect blind and visible quantum data compression.
When $r<H(A)_\rho$, $\epsilon^b(\rho^{\ox n}_A, nr)$ and $\epsilon^v(\rho^{\ox n}_A, nr)$ converge to $0$ exponentially fast, and the strong converse exponents for blind and visible quantum data compression  are the exact rates of these exponential convergence, i.e.,
\begin{align}
\begin{split}
E_{sc}^{b}(\rho_A,r):=\lim_{n \rightarrow \infty} \frac{-1}{n} \log \epsilon^b(\rho^{\ox n}_A, nr), \\
E_{sc}^{v}(\rho_A,r):=\lim_{n \rightarrow \infty} \frac{-1}{n} \log \epsilon^v(\rho^{\ox n}_A, nr).
\end{split}
\end{align}
From Eq.~(\ref{equ:strongcla}), we can obtain the following result for $E_{sc}^{b}(\rho_A,r)$.

\begin{proposition}
\label{thm:qdata}
For any quantum information source $\mc{Z}=\{p_i,\psi_i\}$  with source state $\rho_{A}=\sum\limits_{x \in \mc{X}} p(x)\proj{x}$ and $r\geq 0$, we have
\begin{equation}
\label{equ:re}
E_{sc}^{b}(\rho_A,r) \leq \sup_{\beta >1}\frac{\beta-1}{\beta}\big\{2H_{\beta}(A)_\rho-2r\big\}.
\end{equation}
\end{proposition}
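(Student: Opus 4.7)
The plan is to construct a blind quantum data compression protocol from any classical data compression scheme for the eigenvalue distribution $p$ and to transfer the known strong converse exponent of classical compression (Eq.~(\ref{equ:strongcla})) to the quantum setting. Since we are proving an upper bound on $E_{sc}^{b}(\rho_A,r)$, we need a lower bound on the optimal entanglement fidelity $\epsilon^{b}(\rho_A^{\ox n}, nr)$, and it suffices to exhibit one good scheme.

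First, I would fix an optimal classical compression scheme $\mc{C}_n = (f,g)$ of size $2^{nr}$ achieving the maximum success probability $P_s(\mc{C}_n, p)$, and let $G := \{x^n \in \mc{X}^{\times n} : g\circ f(x^n) = x^n\}$ be its correctly-decoded set. Since $g\circ f$ is deterministic, distinct elements of $G$ are mapped to distinct codewords, so $|G| \leq 2^{nr}$. Let $P_G := \sum_{x^n \in G}\proj{x^n}_{A^n}$ denote the projection onto the subspace $\mc{H}_T$ spanned by these correctly-decoded sequences, and let $W : \mc{H}_T \hookrightarrow \mc{C}$ be an isometry into a compressed space $\mc{C}$ of dimension $2^{nr}$.

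Next, I would define the blind compression protocol: the encoder $\mc{E}_{A^n \to \mc{C}}(\omega) := W P_G \omega P_G W^\dagger + \tr[(I - P_G)\omega]\, \tau_0$ for some fixed state $\tau_0$ on $\mc{C}$, and the decoder $\mc{D}_{\mc{C} \to A^n}(\tau) := W^\dagger \tau W + \tr[(I_{\mc{C}} - WW^\dagger)\tau]\, \sigma_0$ for some fixed state $\sigma_0$ on $A^n$. Both maps are easily verified to be CPTP. Using $W^\dagger W = I_{\mc{H}_T}$ and $(I_{\mc{C}}-WW^\dagger)W = 0$, the composition simplifies to $\mc{D}\circ\mc{E}(\omega) = P_G \omega P_G + \tr[(I-P_G)\omega]\,\sigma''$ for some fixed state $\sigma''$. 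The key calculation is then the entanglement fidelity relative to the canonical purification $\ket{\rho^{\ox n}}_{R^nA^n} = \sum_{x^n}\sqrt{p^n(x^n)}\ket{x^n}_{R^n}\ket{x^n}_{A^n}$:
\begin{equation}
F^2(\mc{D}\circ\mc{E}(\rho_{RA}^{\ox n}), \rho_{RA}^{\ox n})
= \bra{\rho^{\ox n}}(I_{R^n}\ox P_G)\proj{\rho^{\ox n}}(I_{R^n}\ox P_G)\ket{\rho^{\ox n}} + (\text{non-negative term})
\geq \Big(\sum_{x^n \in G}p^n(x^n)\Big)^{2} = P_s(\mc{C}_n, p)^2.
\end{equation}

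Combining these observations gives $\epsilon^{b}(\rho_A^{\ox n}, nr) \geq P_s(\mc{C}_n, p)^2$ for the optimal classical scheme, hence
\begin{equation}
E_{sc}^{b}(\rho_A, r) = \lim_{n\to\infty}\frac{-1}{n}\log \epsilon^{b}(\rho_A^{\ox n}, nr)
\leq 2\lim_{n\to\infty}\frac{-1}{n}\log \max_{\mc{C}_n: |\mc{C}_n|=2^{nr}} P_s(\mc{C}_n, p) = 2 E_{sc}^{c}(p, r).
\end{equation}
Invoking Eq.~(\ref{equ:strongcla}) and noting that $H_\beta(p) = H_\beta(A)_\rho$ because $\rho_A$ is diagonal in $\{\ket{x}\}$, this yields exactly Eq.~(\ref{equ:re}). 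The main subtlety is ensuring that the CPTP extensions of the encoder and decoder do not spoil the entanglement fidelity, which is resolved above by noting that the correction terms contribute non-negatively; the rest is routine bookkeeping.
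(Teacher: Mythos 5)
Your proposal is correct and follows essentially the same route as the paper: build a blind scheme by projecting onto the correctly-decoded set of a good classical compression code, lower-bound the entanglement fidelity by $\big(\sum_{x^n\in G}p^n(x^n)\big)^2$, and import the classical strong converse exponent from Eq.~(\ref{equ:strongcla}). The only differences are cosmetic (the paper pads the decoded set to size exactly $2^{nr}$ and takes the decoder to be the identity on that subspace, while you keep $|G|\leq 2^{nr}$ and write out the isometry and CPTP completion explicitly).
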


\begin{proof}
For the classical information source $p:=\{p(x)\}_{x\in \mc{X}}$ and the classical compression rate $r$, we set the sequence of $n$-length classical data compression schemes which achieve the strong converse exponent in Eq.~(\ref{equ:strongcla}) as $\{(f_n, g_n)\}_{n \in \mathbb{N}}$ and $S_n':=\{x^n~|~g_n \circ f_n(x^n)=x^n \}$. It is obvious that $|S_n'| \leq 2^{nr}$, so we can 
add some signals into $S_n'$ to get a new set $S_n$ with $|S_n|=2^{nr}$. Now, we let $\Pi_n:=\sum\limits_{x^n \in S_n} \proj{x^n}$ and construct an encoder channel $\mc{E}_n$ as $\mc{E}_n(X)= \Pi_n X \Pi_n+ (\tr(I-\Pi_n)X) \frac{\Pi_n}{2^{nr}}$, a decoder channel $\mc{D}_n$ as the identity channel. Then, for such a quantum data compression protocol $(\mc{E}_n,\mc{D}_n)$, we can bound the performance of it as 
\begin{equation}
\label{equ:con}
\begin{split}
&F^2(\mc{D}_n \circ \mc{E}_n(\rho_{RA}^{\ox n}), \rho_{RA}^{\ox n})\\
\geq & F^2(\Pi_n \rho_{RA}^{\ox n} \Pi_n, \rho_{RA}^{\ox n})\\
=& (\tr \Pi_n \rho_{A}^{\ox n})^2\\
\geq &(\sum_{x^n \in S_n' } p^n(x^n))^2.
\end{split}
\end{equation}
Eq.~(\ref{equ:strongcla}) and Eq.~(\ref{equ:con}) imply that 
\begin{equation} 
E_{sc}^{b}(\rho_A,r) \leq \sup_{\beta >1}\frac{\beta-1}{\beta}\big\{2H_{\beta}(A)_\rho-2r\big\}.
\end{equation}
\end{proof}
With the above tools in hand, we can prove the achievability part of Eq.~(\ref{equ:puremutual}). 

\begin{proofof}[of the achievability part of Eq.~(\ref{equ:puremutual})]
 By Lemma~\ref{lem:opein} in Appendix, we have that for any quantum channel $\mc{E}_{A \rightarrow C}$ and 
$\mc{D}_{C \rightarrow A}$, it holds that

\begin{equation}
\label{equ:dap}
\mc{D}_{C \rightarrow A} \circ \mc{E}_{A \rightarrow C}(\rho_{RA})\leq  |C|^2 \rho_R \ox \mc{D}_{C \rightarrow A}(\frac{I_C}{|C|}).
\end{equation}
Eq.~(\ref{equ:dap}) implies that 
\begin{equation}
\label{equ:relation}
\epsilon^b(\rho_A, r)\leq \sup_{\sigma_A \in \mc{S}(A)} \sup_{\tilde{\rho}_{RA}:\tilde{\rho}_{RA}\leq 2^{2r} \rho_R \ox \sigma_A}F^2(\tilde{\rho}_{RA},\rho_{RA}).
\end{equation}
From Eq.~(\ref{equ:re}) and Eq.~(\ref{equ:relation}), we can get
\begin{equation}
\begin{split}
&\lim_{n \rightarrow \infty} \frac{-1}{n} \log \big(1-\epsilon^P_{\dot{R^n}:A^n}(\rho_{RA}^{\ox n},nr)\big)  \\
\leq &\lim_{n \rightarrow \infty} \frac{-1}{n} \log \sup_{\sigma_{A^n} \in \mc{S}(A^n)} \sup_{\tilde{\rho}_{R^nA^n}:\tilde{\rho}_{R^nA^n}\leq 2^{nr} \rho^{\ox n}_R \ox \sigma_{A^n}}\frac{1}{2}F^2(\tilde{\rho}_{R^nA^n},\rho^{\ox n}_{RA}) \\
\leq&\lim_{n \rightarrow \infty} \frac{-1}{n} \log \frac{1}{2}\epsilon^b(\rho_{A}^{\ox n},\frac{1}{2}nr)  \\
=& \sup_{\beta >1}\frac{\beta-1}{\beta}\big\{2H_{\beta}(R)_\rho-r\big\},
\end{split}
\end{equation}
where in the last line, we use the fact $H_\beta(R)_\rho=H_\beta(A)_\rho$ when $\rho_{RA}$ is a pure state.
\end{proofof}

%*******************************************************************************

\section{Applications}
\label{sec:application}

\subsection{Quantum data compression}

In \cite{Hayashi2002exponents}, Hayashi studied the strong converse exponent of quantum data compression. He established the exact strong converse exponent of  visible quantum data compression. However, the exact strong converse exponent of blind quantum data compression is still unknown. In this section, we use the results in the past sections to determine the strong converse exponent of blind quantum data compression, thus filling the gap in previous work.  Our main result is as follows.

\begin{theorem}
\label{thm:datacom}
For any quantum information source $\mc{Z}=\{p_i, \psi_i\}$ with source state  $\rho_{A}$ and $r\geq 0$, we have
\begin{align}
E_{sc}^{b}(\rho_A,r)& = \sup_{\beta >1}\frac{\beta-1}{\beta}\big\{2H_{\beta}(A)_\rho-2r\big\} \label{equ:final}.
\end{align}
\end{theorem}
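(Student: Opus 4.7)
The inequality $E_{sc}^b(\rho_A,r) \leq \sup_{\beta>1}\frac{\beta-1}{\beta}\{2H_\beta(A)_\rho - 2r\}$ has already been supplied by Proposition~\ref{thm:qdata} through the reduction to classical compression. What remains is the matching converse, and the plan is to run the reduction used in Section~\ref{sec:strmutualpure} in reverse: there, Eq.~(\ref{equ:relation}) used Proposition~\ref{thm:qdata} to produce an achievability bound for the partially smoothed mutual max-information, while here I will use the already-established Eq.~(\ref{equ:puremutual}) to produce a converse bound for $E_{sc}^b(\rho_A,r)$.

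Fix a purification $\rho_{RA}$ of $\rho_A$ and pick an arbitrary blind compression pair $(\mc{E}_{A\to C}, \mc{D}_{C\to A})$ with $|C|=2^\lambda$. The decisive observation is that $\tilde{\rho}_{RA}:=(\mc{D}\circ\mc{E})(\rho_{RA})$ already satisfies the partial-smoothing constraints at parameter $2\lambda$. First, since $\mc{D}\circ\mc{E}$ acts only on $A$, one has $\tilde{\rho}_R = \rho_R$ exactly (not merely approximately). Second, Lemma~\ref{lem:opein}---already invoked in Eq.~(\ref{equ:dap})---gives $\tilde{\rho}_{RA} \leq |C|^2 \rho_R \otimes \mc{D}(I_C/|C|) = 2^{2\lambda} \rho_R \otimes \sigma_A$ with $\sigma_A := \mc{D}(I_C/|C|) \in \mc{S}(A)$. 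Hence $\tilde{\rho}_{RA}$ lies in the feasible set defining $\epsilon^P_{\dot{R}:A}(\rho_{RA}, 2\lambda)$, so $\epsilon^P_{\dot{R}:A}(\rho_{RA}, 2\lambda) \leq P(\tilde{\rho}_{RA},\rho_{RA}) = \sqrt{1-F^2(\tilde{\rho}_{RA},\rho_{RA})}$. Squaring and maximizing over $(\mc{E},\mc{D})$ yields
\begin{equation*}
\epsilon^b(\rho_A,\lambda) \leq 1 - \left(\epsilon^P_{\dot{R}:A}(\rho_{RA},2\lambda)\right)^2 \leq 2\left(1 - \epsilon^P_{\dot{R}:A}(\rho_{RA},2\lambda)\right).
\end{equation*}

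Specializing to $\rho_A^{\otimes n}$ with $\lambda = nr$, taking $\frac{-1}{n}\log$, and letting $n\to\infty$ gives
\begin{equation*}
E_{sc}^b(\rho_A,r) \geq \lim_{n\to\infty} \frac{-1}{n}\log\left(1 - \epsilon^P_{\dot{R^n}:A^n}(\rho_{RA}^{\otimes n},2nr)\right) = \sup_{\beta>1}\frac{\beta-1}{\beta}\left\{2H_\beta(R)_\rho - 2r\right\},
\end{equation*}
where the last equality is Eq.~(\ref{equ:puremutual}) applied to the pure state $\rho_{RA}$ at rate $2r$. Since $H_\beta(R)_\rho = H_\beta(A)_\rho$ for the pure state $\rho_{RA}$, this exactly matches the upper bound from Proposition~\ref{thm:qdata}, closing Eq.~(\ref{equ:final}).

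There is no serious obstacle beyond what has already been built, since the heavy asymptotic work is absorbed into Theorem~\ref{thm:main}. The only substantive point worth flagging is the doubling $\lambda \mapsto 2\lambda$ (equivalently $|C| \mapsto |C|^2$) inherited from Lemma~\ref{lem:opein}: this is precisely what produces the prefactor $2$ in $2H_\beta(A)_\rho - 2r$ and ensures that the converse bound derived via partial smoothing aligns exactly with the direct bound obtained from classical data compression in Proposition~\ref{thm:qdata}.
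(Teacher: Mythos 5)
Your proposal is correct and follows essentially the same route as the paper: achievability from Proposition~\ref{thm:qdata}, and optimality by converting a blind compression protocol into a feasible point for the partially smoothed mutual max-information via Lemma~\ref{lem:opein} (the paper's Eq.~(\ref{equ:dap})/(\ref{equ:relation})) and then invoking the already-established converse exponent (Corollary~\ref{cor:conver}, equivalently the relevant direction of Eq.~(\ref{equ:puremutual})) at rate $2r$. Your explicit observation that $(\mc{D}\circ\mc{E})(\rho_{RA})$ preserves the $R$-marginal exactly, so that it lies in the constrained feasible set defining $\epsilon^P_{\dot{R}:A}$, is a slightly cleaner phrasing of the same argument the paper compresses into one line.
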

\begin{proof}
The achievability part of Eq.~(\ref{equ:final}) has been proved in Proposition~\ref{thm:qdata}. The optimality part can be derived directly from Eq.~(\ref{equ:relation}) and Corollary~\ref{cor:conver}.
\end{proof}

\begin{remark}
In the proof of Theorem~\ref{thm:datacom}, we mainly prove the optimality part. This part also can be 
deduced from \cite[Lemma~3.5]{LWD2016strong} in which the authors consider the more general scenario:
quantum state merging.
\end{remark}

%*******************************************************************************

\subsection{Intrinsic randomness}

Intrinsic randomness problem is a basic information task in which we aim to use a function $f: \mc{X} \rightarrow \mc{Z}$ to extract as much uniform random number as possible from a given information source distributed as
$p \in \mc{Q}(\mc{X})$. This problem can be regarded as a special case of privacy amplification in which there is no adversary. Hence, as introduced in subsection~\ref{subsec:privacy}, the probability distribution $p$ can be 
regarded as a quantum state $p=\sum\limits_{x \in \mc{X}}p(x)\proj{x}$ and the action of the function $f$ can be expressed as a quantum channel
\[
\Phi_f : p \mapsto \sum_{x\in \mc{X}} \ket{f(x)}\bra{x} p \ket{x}\bra{f(x)}.
\]
The performance of the protocol is characterized by the trace distance between the final state $\Phi_f(p)$ and the ideal state $\frac{I_Z}{|\mc{Z}|}$ and the size of the protocol is $|\mc{Z}|$.

In asymptotic setting where an arbitrary large number of copies of $p$ is available, Vembu and Verd\'u~\cite{VembVerdu2002generating} proved that the suprema of achievable uniform random number generation rates equals to $H(p)$.  In~\cite{Hayashi2008second} and~\cite{Hayashi2013tight}, Hayashi  established the second-order asymptotic and the error exponent of this problem, respectively.

When the uniform random number generation rate is larger than $H(p)$, the optimal performance among all protocols with size $2^{nr}$ converge to $1$ as $n$ tends to infinity. The corresponding exponential decay rate is called 
the strong converse exponent of intrinsic randomness and is formally defined as
\[
E^{\rm{ir}}_{sc}(p,r):=\lim_{n \rightarrow \infty} \frac{-1}{n} \log (1-\inf_{f_n\in \mc{F}_n(r)}d(\Phi_{f_n}(p^{\ox n}),\frac{I_{Z^n}}{|\mc{Z}^n|})),
\]
where $\mc{F}_n(r)$ is the set of functions from $\mc{X}^{\times n}$ to $\mc{Z}_n=\{1, \ldots, 2^{nr}\}$.

In this subsection, we use the strong converse exponent for partially smoothing of the conditional min-entropy to derive the exact expression of $E^{\rm{ir}}_{sc}(p,r)$. Our main result is the following theorem.

\begin{theorem}
\label{thm:pritra}
Let $p \in \mathcal{Q}(\mc{X})$ and $r \geq 0$, it holds that
\begin{equation}
\label{equ:priclatra}
E^{\rm{ir}}_{sc}(p,r)=\sup_{0 \leq \alpha \leq 1} (1-\alpha)\big\{r-H_\alpha(p) \big\}.
\end{equation}
\end{theorem}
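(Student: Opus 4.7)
The plan is to derive Theorem~\ref{thm:pritra} as the trivial-reference specialization of Eq.~(\ref{equ:mainl}) in Theorem~\ref{thm:main1}. Identifying $p\in\mc{Q}(\mc{X})$ with a classical state on $A$ and taking $R$ to be one-dimensional collapses $\bar H_\alpha(A|R)_p$ to $H_\alpha(p)$, so Eq.~(\ref{equ:mainl}) specializes to
\[
\lim_{n\rightarrow\infty} \frac{-1}{n}\log\bigl(1-\epsilon^d_{A^n}(p^{\ox n},nr)\bigr)=\sup_{0\leq\alpha\leq 1}(1-\alpha)\{r-H_\alpha(p)\}.
\]
The task thus reduces to matching $E^{\rm{ir}}_{sc}(p,r)$ to this exponent up to sub-exponential factors in $n$.

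For the optimality bound I would proceed through a short two-step comparison. For any hash function $f_n$, the uniform distribution $I_{Z_n}/|\mc{Z}_n|$ is itself feasible in the minimization defining $\epsilon^d_{Z_n}(\Phi_{f_n}(p^{\ox n}),nr)$, giving $d(\Phi_{f_n}(p^{\ox n}),I_{Z_n}/|\mc{Z}_n|)\geq\epsilon^d_{Z_n}(\Phi_{f_n}(p^{\ox n}),nr)$. The trace-distance analogue of Lemma~\ref{lem:monohash} (exactly the monotonicity-under-hashing inequality already exploited in Section~\ref{sec:donpuri}) then yields $\epsilon^d_{Z_n}(\Phi_{f_n}(p^{\ox n}),nr)\geq\epsilon^d_{A^n}(p^{\ox n},nr)$. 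Taking the infimum over $f_n$ implies $E^{\rm{ir}}_{sc}(p,r)\geq\sup_{0\leq\alpha\leq 1}(1-\alpha)\{r-H_\alpha(p)\}$.

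For the achievability bound I would construct a near-optimal hash function via the method of types. Partition $\mc{X}^{\times n}$ into type classes $T_n^t$; within each class all sequences share a common probability $p_t=2^{-n(H(t)+D(t\|p))}$. For each ``big'' type ($p_t>2^{-nr}$) assign each sequence to its own bin, and for each ``small'' type ($p_t\leq 2^{-nr}$) spread the $|T_n^t|$ sequences as evenly as possible across roughly $\lfloor P(t)\cdot 2^{nr}\rfloor$ bins. Using the type-counting inequalities (\ref{eq:typenumber}), (\ref{eq:numt}) and~(\ref{eq:proset}), a direct computation should give
\[
d\bigl(\Phi_{f_n}(p^{\ox n}),\,I_{Z_n}/|\mc{Z}_n|\bigr)\leq\epsilon^d_{A^n}(p^{\ox n},nr)+\mathrm{poly}(n)\cdot 2^{-nr};
\]
because $\sup_{0\leq\alpha\leq 1}(1-\alpha)\{r-H_\alpha(p)\}<r$ for every non-degenerate $p$ (with the point-mass case checked in one line), this additive correction is exponentially negligible at the rate of interest, and the matching upper bound $E^{\rm{ir}}_{sc}(p,r)\leq\sup_{0\leq\alpha\leq 1}(1-\alpha)\{r-H_\alpha(p)\}$ follows.

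The main obstacle I expect is controlling the per-type bin-packing rounding error for ``borderline'' small types whose $p_t$ sits just below $2^{-nr}$, where a single sequence nearly saturates one bin and perfect packing need not be possible. The saving grace is that there are only polynomially many types, so the rounding losses accumulate only as a polynomial prefactor in front of the single-exponential $2^{-nr}$, which is absorbed by the strict gap between the exponent and $r$.
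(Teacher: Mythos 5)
Your reduction and your optimality argument coincide with the paper's: monotonicity under hashing (the trace-distance part of Lemma~\ref{lem:monohash}, with trivial side information), feasibility of the uniform distribution in the minimization defining $\epsilon^d_{Z_n}$, and Theorem~\ref{thm:main1} specialized to a one-dimensional reference. The genuine gap is in the achievability step: the central estimate $d(\Phi_{f_n}(p^{\ox n}),I_{Z_n}/|\mc{Z}_n|)\leq\epsilon^d(p^{\ox n},nr)+\mathrm{poly}(n)\,2^{-nr}$ is false for your construction. Take a ``borderline'' small type with per-sequence probability $p_t:=2^{-n(H(t)+D(t\|p))}=\tfrac{2}{3}2^{-nr}$. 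Spreading the $|T_n^t|$ sequences evenly over $m_t\approx P(t)2^{nr}=\tfrac{2}{3}|T_n^t|$ bins places two sequences in about $|T_n^t|/3$ bins; each such bin carries mass $\tfrac{4}{3}2^{-nr}$ and contributes an excess $\tfrac{1}{3}2^{-nr}$ to the total-variation distance beyond what the ideal truncation $\min\{p^n(\cdot),2^{-nr}\}$ loses, so this single type already costs about $P(t)/6\approx\tfrac{1}{6}2^{-nD(t\|p)}$. Since such a type can have $D(t\|p)$ strictly below $r$ (as small as the target exponent itself when $t$ approaches the optimizer), this loss is exponentially larger than $\mathrm{poly}(n)\,2^{-nr}$ and of the same exponential order as $1-\epsilon^d$; the fact that there are only polynomially many types bounds the number of type classes, not the exponentially many overfull bins inside one borderline class, so the ``saving grace'' you invoke does not close the argument as written.

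The conclusion is salvageable, but by a multiplicative rather than additive accounting: in every overfull bin the retained mass $2^{-nr}$ is at least the lost excess (which is at most $p_t\leq 2^{-nr}$), so your hash retains at least $\tfrac{1}{2}\sum_{x^n}\min\{p^n(x^n),2^{-nr}\}$, i.e.\ $1-d\geq\tfrac{1}{2}\bigl(1-\epsilon^d(p^{\ox n},nr)\bigr)$ up to the $\mathrm{poly}(n)$ overshoot in the total bin budget, and this constant-factor comparison, combined with Theorem~\ref{thm:main1} for the exponent of $1-\epsilon^d$, yields the claimed upper bound on $E^{\rm{ir}}_{sc}(p,r)$ without any appeal to $E<r$. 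The paper sidesteps the packing issue entirely: it first rewrites the exponent via Proposition~\ref{prop:pritracla1}, then for each \emph{single} type class $t$ builds a hash adapted only to that class (three cases according to $|T_n^t|$ versus $2^{nr}$ and $r$ versus $H(t)+D(t\|p)$), proving $1-d\geq(n+1)^{-|\mc{X}|}2^{-n(D(t\|p)+|r-H(t)-D(t\|p)|^{+})}$ and optimizing over $t$; all other types are dumped into one residual bin, so no simultaneous bin-packing is needed. Either repair works, but as stated your error estimate, and hence the step it supports, does not.
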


We start the proof of Theorem~\ref{thm:pritra} with the establishment of the optimality part.

\begin{proofof}[of the optimality part]
Lemma~\ref{lem:monohash} in Appendix implies that for any function $f_n~:~\mc{X}^{\times n}\rightarrow \mc{Z}_n=\{1,\ldots,2^{nr}\}$, we have
\begin{equation}
\label{equ:hatotra}
 H^{\epsilon,d}_{\rm{min}}(X^n)_{p^{\ox n}} \geq  H^{\epsilon,d}_{\rm{min}}(Z_n)_{\Phi_{f_n}(p^{\ox n})}.
\end{equation}
Eq.~(\ref{equ:hatotra}) gives 
\begin{equation}
\label{equ:hotoretra}
 \epsilon^d_{Z_n|\dot{\emptyset}}(\Phi_{f_n}(p^{\ox n}), nr) \geq \epsilon^d_{X^n|\dot{\emptyset}}(p^{\ox n},  nr),
\end{equation}
where $\emptyset$ denotes the trivial system. It is obvious that
\begin{equation}
\label{equ:hato1tra}
d(\Phi_{f_n}(p^{\ox n}),\frac{I_{Z_n}}{|\mc{Z}_n|}) \geq \epsilon^d_{Z^n|\dot{\emptyset}}(\Phi_{f_n}(p^{\ox n}), nr).
\end{equation}
From Eq.~(\ref{equ:hotoretra}) and Eq.~(\ref{equ:hato1tra}), we can obtain
\begin{equation}
\label{equ:hatofianltra}
\inf_{f_n \in \mc{F}_n(r)}d(\Phi_{f_n}(p^{\ox n}, \frac{I_{Z_n}}{|\mc{Z}_n|})
\geq \epsilon^d_{X^n|\dot{\emptyset}}(p^{\ox n},  nr).
\end{equation}
The optimality part follows from Theorem~\ref{thm:main1} together with Eq.~(\ref{equ:hatofianltra}).

\end{proofof}
Before we move to the proof of the achievability part, we first establish an equivalent variational expression of the strong converse exponent. 

\begin{proposition}
\label{prop:pritracla1}
For $p \in \mc{Q}(\mc{X})$ and $r \geq 0$, we have
\begin{equation}
\sup_{0 \leq \alpha \leq 1} (1-\alpha)\big\{r-H_\alpha(p) \big\}=\inf_{q \in \mc{Q}(\mc{X})} \big\{D(q\|p)+|r-H(q)-D(q\|p)|^+ \big\}.
\end{equation}
\end{proposition}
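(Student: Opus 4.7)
The plan is to introduce a variational expression for the R\'enyi entropy, perform the change of variables $s=1-\alpha$, and then exchange the order of supremum and infimum via Sion's minimax theorem. This strategy closely mirrors the manipulations in the proof of Proposition~\ref{prop:var}.

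First, I would apply Proposition~\ref{prop:mainpro}~(\romannumeral11) to the commuting pair $p$ and $I_\mc{X}$. Since $D_\alpha(p\|I_\mc{X})=-H_\alpha(p)$ and $D(q\|I_\mc{X})=-H(q)$ for every $q\in\mc{Q}(\mc{X})$, the variational formula yields, for $\alpha\in(0,1)$,
\begin{equation*}
H_\alpha(p)=\max_{q\in\mc{Q}(\mc{X})}\Big\{H(q)-\frac{\alpha}{1-\alpha}D(q\|p)\Big\},
\end{equation*}
where distributions with $\supp(q)\not\subseteq\supp(p)$ are automatically ruled out by $D(q\|p)=+\infty$. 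Substituting this into $(1-\alpha)\{r-H_\alpha(p)\}$ and setting $s:=1-\alpha$, a direct rearrangement gives
\begin{equation*}
(1-\alpha)\{r-H_\alpha(p)\}=\min_{q\in\mc{Q}(\mc{X})}\Big\{D(q\|p)+s\big(r-H(q)-D(q\|p)\big)\Big\}.
\end{equation*}

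Next, I would take the supremum over $\alpha\in[0,1]$, equivalently $s\in[0,1]$; the endpoints $\alpha=0,1$ are recovered by continuity of $\alpha\mapsto(1-\alpha)\{r-H_\alpha(p)\}$, which follows from Proposition~\ref{prop:mainpro}~(\romannumeral7). Writing $f(s,q):=D(q\|p)+s(r-H(q)-D(q\|p))$, one checks that $f$ is affine (hence concave) in $s\in[0,1]$, while for each fixed $s\in[0,1]$ it is convex in $q$ on the simplex, since $(1-s)D(q\|p)$ and $-sH(q)$ are both convex. Restricting $q$ to the compact, convex subsimplex of distributions supported on $\supp(p)$, on which $f$ is finite and continuous, Sion's minimax theorem applies and permits exchanging $\sup_{s}$ with $\inf_{q}$.

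Finally, for each fixed $q$, the inner supremum $\sup_{s\in[0,1]}s\cdot\big(r-H(q)-D(q\|p)\big)$ equals $|r-H(q)-D(q\|p)|^+$, so adding back the residual $D(q\|p)$ yields the right-hand side of the proposition. The only mildly delicate step is verifying the hypotheses of Sion's theorem once the support restrictions are imposed, but this is routine and entirely parallel to the minimax step in the proof of Proposition~\ref{prop:var}.
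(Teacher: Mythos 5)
Your proposal is correct and follows essentially the same route as the paper's proof: the variational expression of Proposition~\ref{prop:mainpro}~(\romannumeral11) applied to $D_\alpha(p\|I_{\mc{X}})=-H_\alpha(p)$, the substitution $s=1-\alpha$, an exchange of $\sup$ and $\inf$ via Sion's minimax theorem, and the evaluation $\sup_{s\in[0,1]}s\,c=|c|^+$. Your extra care about the endpoints $\alpha\in\{0,1\}$ and the support restriction is consistent with (and slightly more explicit than) the paper's argument.
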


\begin{proof}
By Proposition~\ref{prop:mainpro}~(\romannumeral11) , we have
\begin{equation}
\begin{split}
&\sup_{0 \leq \alpha \leq 1} (1-\alpha)\big\{r-H_\alpha(p) \big\} \\
=&\sup_{0 < \alpha <1} (1-\alpha)\big\{r-H_\alpha(p) \big\} \\
=&\sup_{0< \alpha<1}\inf_{q \in  \mc{Q}(\mc{X})}(1-\alpha) \big\{ r+D(q\|I_X)+\frac{\alpha}{1-\alpha}D(q\|p) \big\} \\
=&\sup_{0< \alpha<1}\inf_{q \in  \mc{Q}(\mc{X})} \big\{D(q\|p)+(1-\alpha)\{r-H(q)-D(q\|p)  \}   \big\} \\
=&\inf_{q \in  \mc{Q}(\mc{X})}\sup_{0< \alpha<1} \big\{D(q\|p)+(1-\alpha)\{r-H(q)-D(q\|p)  \}   \big\} \\
=&\inf_{q \in  \mc{Q}(\mc{X})} \big\{D(q\|p)+|r-H(q)-D(q\|p)|^+ \big\},
\end{split}
\end{equation}
where the fifth line comes from Sion's minimax theorem.
\end{proof}

Now, we prove the achievability part of Theorem~\ref{thm:pritra}.

\begin{proofof}[of the achievability part]
If $r\leq H(p)$, it is obvious that all terms in Eq.~(\ref{equ:priclatra}) equal to $0$ and the assertion 
holds trivially. Hence, for the rest we assume that $r > H(p)$.

Because $r>H(p)$, there exists $\delta>0$ and an integer $N_\delta$ such that
\begin{equation}
\label{equ:nearcon}
D(q\|p) \geq \delta,~\forall q \in \mc{Q}(\mc{X})~\text{and}~H(q)\geq r,
\end{equation}
and
\begin{equation}
\label{equ:largecon}
2^{nr-n\delta}\leq 2^{nr}-1,~\forall n \geq N_\delta.
\end{equation}
We claim that for $n \geq N_\delta$, it holds that
\begin{equation}
\label{equ:rquitypr}
\frac{-1}{n} \log (1-\inf_{f_n \in \mc{F}_n(r)}d(\Phi_{f_n}(p^{\ox n}),\frac{I_{Z_n}}{|\mc{Z}_n|})) \leq \inf_{t \in \mc{T}_n^{\mc{X}}} \big\{D(t\|p)+|r-H(t)-D(t\|p)|^+ \big\}+\frac{|\mc{X}|}{n}\log(n+1).
\end{equation}
We divide the proof of Eq.~(\ref{equ:rquitypr}) into three cases according to the range of $t$.

\textbf{Case 1:}  $|T_n^t| < 2^{nr}\land r<H(t)+D(t\|p)$; 

\noindent We assign an order for the sequence in $\mc{X}^{\times n}$ and let the first $|T_n^t|$ items be the sequences in $T_n^t$, i.e., $\mc{X}^{\times n}=\{x^n_1,\ldots,x^n_{|T_n^t|},\ldots,x^n_{2^{nr}} \}$.
Then we construct a function $f_n^1: \mc{X}^{\times n} \rightarrow \mc{Z}_n=\{1,\ldots,2^{nr}\}$ as
\begin{equation}
f^1_n(x^n_i)= \begin{cases}
i & 1\leq i \leq |T_n^t|, \\
i+1                        & i>|T_n^t|.
                  \end{cases}
\end{equation}
The performance of $f^1_n$ can be evaluated as
\begin{equation}
\label{equ:case1}
\begin{split}
&1-d(\Phi_{f^1_n}(p^{\ox n}),\frac{I_{Z_n}}{|\mc{Z}_n|}) \\
=&1-\tr(\Phi_{f^1_n}(p^{\ox n})-\frac{I_{Z_n}}{|\mc{Z}_n|})_+ \\
=& \tr \Phi_{f^1_n}(p^{\ox n})\{\Phi_{f^1_n}(p^{\ox n})\leq \frac{I_{Z_n}}{|\mc{Z}_n|}  \}+\tr \frac{I_{Z_n}}{|\mc{Z}_n|}\{ \Phi_{f^1_n}(p^{\ox n})>\frac{I_{Z_n}}{|\mc{Z}_n|} \} \\
\geq&\tr \Phi_{f^1_n}(p^{\ox n})\{\Phi_{f^1_n}(p^{\ox n})\leq \frac{I_{Z_n}}{|\mc{Z}_n|}  \}\\
=&\sum_{i=1}^{|T_n^t|}p^n(x^n_i) \\
\geq&(n+1)^{-|\mc{X}|}2^{-nD(t\|p)},
\end{split}
\end{equation}
where the fifth line is because that $r<H(t)+D(t\|p)$ implies $p^n(x^n)<2^{-nr},~\forall x^n \in T_n^t$ and the last line is from Eq.~(\ref{eq:proset}).
Eq.~(\ref{equ:case1}) gives
\begin{equation}
\begin{split}
&\frac{-1}{n} \log(1-\inf_{f_n \in \mc{F}_n(r)}d(\Phi_{f_n}(p^{\ox n}),\frac{I_{Z_n}}{|\mc{Z}_n|})) \\
\leq&\frac{-1}{n} \log(1-d(\Phi_{f^1_n}(p^{\ox n}),\frac{I_{Z_n}}{|\mc{Z}_n|})) \\
\leq&D(t\|p)+\frac{|\mc{X}|}{n}\log(n+1) \\
=&D(t\|p)+|r-H(t)-D(t\|p)|^+ +\frac{|\mc{X}|}{n}\log(n+1).
\end{split}
\end{equation}

\textbf{Case 2:}  $|T_n^t| \geq 2^{nr}\land r<H(t)+D(t\|p)$;

\noindent We let $m_n=\lfloor 2^{nH(t)+nD(t\|p)-nr}\rfloor$ and $k_n=\lfloor \frac{|T_n^t|}{m_n} \rfloor$.  Condition ``$|T_n^t| \geq 2^{nr}$'' and Eq.~(\ref{eq:numt}) give $r \leq H(t)$.  Hence, from Eq.~(\ref{eq:numt}), Eq.~(\ref{equ:nearcon}) and Eq.~(\ref{equ:largecon}), we have
\begin{equation}
k_n \leq \frac{2^{nH(t)}}{2^{nH(t)+nD(t\|p)-nr}} \leq 2^{nr-n\delta} \leq 2^{nr}-1.
\end{equation}
We assign the same order for the sequence in $\mc{X}^{\times n}$ as in \textbf{Case 1}. Then, we can construct a function $f_n^2: \mc{X}^{\times n} \rightarrow \mc{Z}_n=\{1,\ldots,2^{nr}\}$ as
\begin{equation}
f^2_n(x^n_i)= \begin{cases}
\lceil \frac{i}{m_n}\rceil & 1\leq i \leq k_nm_n, \\
k_n+1                        & i>k_nm_n.
                  \end{cases}
\end{equation}
By the construction of $f_n^2$, it is esay to see that 
\begin{equation}
\Phi_{f_n^2}(p^{\ox n})(j)=\sum^{m_n j}_{i=m_n(j-1)+1} p^n(x^n_i)=m_n 2^{-nH(t)-nD(t\|p)} \leq 2^{-nr},~\forall 1 \leq j \leq k_n,
\end{equation}
and
\begin{equation}
\Phi_{f_n^2}(p^{\ox n})(k_n+1)\geq 2^{-nr}\geq m_n 2^{-nH(t)-nD(t\|p)}  \geq \sum_{i=m_nk_n+1}^{|T_n^t|}p^n(x^n_i).
\end{equation}
\end{proofof}
Hence, we can evaluate the performance of $f_n^2$ as
\begin{equation}
\label{equ:case2}
\begin{split}
&1-d(\Phi_{f^2_n}(p^{\ox n}),\frac{I_{Z_n}}{|\mc{Z}_n|}) \\
=& \tr \Phi_{f^2_n}(p^{\ox n})\{\Phi_{f^2_n}(p^{\ox n})\leq \frac{I_{Z_n}}{|\mc{Z}_n|}  \}+\tr \frac{I_{Z_n}}{|\mc{Z}_n|}\{ \Phi_{f^2_n}(p^{\ox n})>\frac{I_{Z_n}}{|\mc{Z}_n|} \} \\
=&\sum_{i=1}^{m_nk_n}p^n(x^n_i)+2^{-nr} \\
\geq& \sum_{i=1}^{m_nk_n}p^n(x^n_i)+\sum_{j=m_nk_n+1}^{|T_n^t|}p^n(x^n_j)\\
=&\sum_{x^n\in T_n^t} p^n(x^n) \\
\geq&(n+1)^{-|\mc{X}|}2^{-nD(t\|p)}.
\end{split}
\end{equation}
Eq.~(\ref{equ:case2}) implies that Eq.~(\ref{equ:rquitypr}) holds in this case.

\textbf{Case 3:}  $r \geq H(t)+D(t\|p)$;

If $t=p$, then $r>H(p)=H(t)$. If $t \neq p$, then $r \geq H(t)+D(t\|p)>H(t)$. Hence, in this case, we have
\begin{equation}
2^{nr}>2^{nH(t)} \geq |T_n^t|,
\end{equation}
where the last inequality comes from Eq.~(\ref{eq:numt}). Next, we assign the same order for the sequence in $\mc{X}^{\times n}$ as in \textbf{Case 1} and define a function $f_n^3: \mc{X}^{\times n} \rightarrow \mc{Z}_n=\{1,\ldots,2^{nr}\}$ as
\begin{equation}
f^3_n(x^n_i)= \begin{cases}
i & 1\leq i \leq |T_n^t|, \\
|T_n^t|+1                       & i > |T_n^t|.
                  \end{cases}
\end{equation}
When $1\leq i \leq |T_n^t|$, we have
\begin{equation}
\label{equ:f3}
\Phi_{f^3_n}(p^{\ox n})(i)=p^n(x^n_i)=2^{-nH(t)-nD(t\|p)}\geq 2^{-nr}.
\end{equation}
By Eq.~(\ref{equ:f3}), we can obtain
\begin{equation}
\label{equ:case3}
\begin{split}
&1-d(\Phi_{f^3_n}(p^{\ox n}),\frac{I_{Z_n}}{|\mc{Z}_n|}) \\
=&1-\tr(\Phi_{f^3_n}(p^{\ox n})-\frac{I_{Z_n}}{|\mc{Z}_n|})_+ \\
=& \tr \Phi_{f^3_n}(p^{\ox n})\{\Phi_{f^3_n}(p^{\ox n})< \frac{I_{Z_n}}{|\mc{Z}_n|}  \}+\tr \frac{I_{Z_n}}{|\mc{Z}_n|}\{ \Phi_{f^3_n}(p^{\ox n})\geq \frac{I_{Z_n}}{|\mc{Z}_n|} \} \\
\geq& \tr \frac{I_{Z_n}}{|\mc{Z}_n|}\{ \Phi_{f^3_n}(p^{\ox n})\geq \frac{I_{Z_n}}{|\mc{Z}_n|} \} \\
\geq & \frac{|T_n^t|}{|\mc{Z}_n|} \\
\geq&(n+1)^{-|\mc{X}|}2^{nH(t)-nr},
\end{split}
\end{equation}
where the last line is due to Eq.~(\ref{eq:numt}). Eq.~(\ref{equ:case3}) leads to 
\begin{equation}
\begin{split}
&\frac{-1}{n} \log(1-\inf_{f_n \in \mc{F}_n(r)}d(\Phi_{f_n}(p^{\ox n}),\frac{I_{Z_n}}{|\mc{Z}_n|})) \\
\leq&\frac{-1}{n} \log(1-d(\Phi_{f^3_n}(p^{\ox n}),\frac{I_{Z_n}}{|\mc{Z}_n|})) \\
\leq&r-H(t)+\frac{|\mc{X}|}{n}\log(n+1) \\
=&D(t\|p)+|r-H(t)-D(t\|p)|^+ +\frac{|\mc{X}|}{n}\log(n+1).
\end{split}
\end{equation}
This completes the proof of \textbf{Case 3}. 

By letting $n \rightarrow \infty$ in Eq.~(\ref{equ:rquitypr}), we have
\begin{equation}
\begin{split}
&E^{\rm{ir}}_{sc}(p,r)\\
=&\lim_{n \rightarrow \infty}\frac{-1}{n} \log (1-\inf_{f_n \in \mc{F}_n(r)}d(\Phi_{f_n}(p^{\ox n}),\frac{I_{Z_n}}{|\mc{Z}_n|}))  \\
\leq &\inf_{q \in \mc{Q}({\mc{X}})} \big\{D(t\|p)+|r-H(t)-D(t\|p)|^+ \big\}\\
=&\sup_{0 \leq \alpha \leq 1} (1-\alpha)\big\{r-H_\alpha(p) \big\},
\end{split}
\end{equation}
where the last line is due to Proposition~\ref{prop:pritracla1}.

%*******************************************************************************

\subsection{Classical state splitting}

For a random variable possessed by Alice and distributed as $P_{XY}\in \mc{Q}(\mc{X}\mc{Y})$, a classical state splitting protocol consists of using free shared randomness between Alice and Bob which is independent of $XY$, sending classical message from Alice to Bob, and at last outputting random
variable $\hat{Y}$ at Bob's side. The performance of a protocol is characterized by the trace distance between $P_{XY}$ and $P_{X\hat{Y}}$. We are interested in the trade-off between the performance and the classical communication cost in a protocol. The minimal number of bits communicated from Alice to Bob to achieve classical state splitting with error $\epsilon \in (0,1)$ is denoted as $R(P_{XY},\epsilon)$. We denote the inverse function of $R(P_{XY},\epsilon)$ as $\epsilon(P_{XY},\lambda)$ which represents the optimal performance among all protocols with classical communication cost $2^{\lambda}$, i.e.,
\[
\epsilon(P_{XY},\lambda):=\min\{\epsilon~|~R(P_{XY},\epsilon)\leq \lambda   \}.
\]

In~\cite{AJW2017unified}, the authors proved that the minimal classical communication cost rate to achieve asymptotically perfect 
state splitting equals to the mutual information $I(X:Y)_P$. The reference~\cite{ABJT2020partially}  further strengthens this result to second-order asymptotics.

Classical state splitting satisfies the strong converse property, i.e., when $r<I(X:Y)_P$, $\epsilon(P_{XY}^{\ox n}, nr)$ converges to $1$ exponentially fast and the 
exact exponent of this decay is called the strong converse exponent and is defined as
\begin{equation}
    E^{\rm{spl}}_{sc}(P_{XY},r):=\lim_{n \rightarrow \infty}\frac{-1}{n} \log (1-\epsilon(P_{XY}^{\ox n}, nr)).
\end{equation}
 In this subsection, we determine the exact expression of $E^{\rm{spl}}_{sc}(P_{XY},r)$. The main result is
\begin{theorem}
For $P_{XY} \in \mc{Q}(\mc{X}\mc{Y})$ and $r \geq 0$, it holds that
\begin{equation}
 E^{\rm{spl}}_{sc}(P_{XY},r)= \sup_{0 \leq \alpha \leq 1}
(1-\alpha)\big\{I_\alpha(X:Y)_\rho-r\big\}.
\end{equation}
\end{theorem}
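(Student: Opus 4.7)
The plan is to bracket classical state splitting by the partially smoothed mutual max-information based on trace distance and invoke Eq.~(\ref{equ:clamutual}) from Theorem~\ref{thm:main1}.

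For the converse direction (lower bound on $E^{\rm{spl}}_{sc}$), I would show that any $n$-length state splitting protocol with classical communication $2^{nr}$ produces a joint distribution $\hat{P}_{X^n\hat{Y}^n}$ with $\hat{P}_{X^n}=P_X^{\ox n}$ (Alice's marginal is untouched) and $\hat{P}_{X^n\hat{Y}^n}\le 2^{nr}P_X^{\ox n}\ox \sigma_{\hat{Y}^n}$ for some $\sigma_{\hat{Y}^n}$. The max-divergence bound follows by conditioning on shared randomness $S=s$, writing Bob's output $\hat{Y}^n=g(M,s)$ as a deterministic function of the message $M\in\{1,\ldots,2^{nr}\}$ and $s$, taking $\sigma_{\hat{Y}^n\mid s}(\hat{y}^n):=2^{-nr}|g^{-1}(\hat{y}^n,s)|$, and using $P(M=m\mid X^n=x^n,S=s)\le 1$. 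Thus $\hat{P}_{X^n\hat{Y}^n}$ is feasible for $\epsilon^d_{\dot{X^n}:Y^n}(P_{XY}^{\ox n},nr)$, yielding
\begin{equation}
\epsilon^d_{\dot{X^n}:Y^n}(P_{XY}^{\ox n},nr)\le \epsilon(P_{XY}^{\ox n},nr).
\end{equation}

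For the achievability direction (upper bound on $E^{\rm{spl}}_{sc}$), fix $\delta>0$ and let $\tilde{P}_{X^nY^n}$ attain the minimum defining $\epsilon^d_{\dot{X^n}:Y^n}(P_{XY}^{\ox n},n(r-\delta))$, so that $\tilde{P}_{X^n}=P_X^{\ox n}$ and $\tilde{P}(y^n\mid x^n)\le 2^{n(r-\delta)}\sigma_{Y^n}(y^n)$. Convert $\tilde{P}$ into a splitting protocol via classical rejection sampling: Alice and Bob share $2^{nr}$ i.i.d.\ samples from $\sigma_{Y^n}$; on input $x^n$, Alice draws $y^n\sim\tilde{P}(\cdot\mid x^n)$ and transmits the $nr$-bit index of a shared sample matching $y^n$ under the standard maximal coupling. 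The failure probability is at most $(1-2^{-n(r-\delta)})^{2^{nr}}\le \exp(-2^{n\delta})$, a doubly-exponentially small quantity. Hence
\begin{equation}
\epsilon(P_{XY}^{\ox n},nr)\le \epsilon^d_{\dot{X^n}:Y^n}(P_{XY}^{\ox n},n(r-\delta))+\exp(-2^{n\delta}).
\end{equation}

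Wrapping up, taking $\frac{-1}{n}\log(1-\cdot)$ on both bounds and using Eq.~(\ref{equ:clamutual}), the first gives $E^{\rm{spl}}_{sc}(P_{XY},r)\ge F(r)$ and the second gives $E^{\rm{spl}}_{sc}(P_{XY},r)\le F(r-\delta)$, where $F(r):=\sup_{0\le\alpha\le 1}(1-\alpha)\{I_\alpha(X:Y)_P-r\}$ (the doubly-exponential term is negligible compared to the merely exponentially-small $1-\epsilon^d$). Letting $\delta\downarrow 0$ and using continuity of $F$ in $r$ (it is Lipschitz as a supremum of affine functions of $r$) closes the gap and gives the claimed expression. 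The main obstacle is the achievability step: verifying that rejection sampling applied to a generic optimizer $\tilde{P}_{X^nY^n}$ (which need not have product structure and satisfies only the max-divergence constraint) indeed achieves the doubly-exponential failure bound, so that the $\delta$-rate-inflation can be driven to zero in the exponent.
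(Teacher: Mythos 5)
Your proposal is correct and follows essentially the same route as the paper: the converse rests on the one-shot bound $\epsilon^d_{\dot{X}^n:Y^n}(P_{XY}^{\ox n},nr)\le\epsilon(P_{XY}^{\ox n},nr)$ (which the paper imports from \cite[Theorem 4]{ABJT2020partially} and you re-derive by conditioning on the shared randomness), and achievability uses the same shared-randomness rejection-sampling protocol run at the backed-off rate $r-\delta$, followed by Eq.~(\ref{equ:clamutual}) and $\delta\to 0$. The only caveat is the protocol description: it should be phrased as accepting each shared sample $y^n\sim\sigma_{Y^n}$ with probability $\tilde{P}(y^n|x^n)/(2^{n(r-\delta)}\sigma_{Y^n}(y^n))$ (well-defined exactly because of the max-divergence constraint, giving per-sample acceptance probability $2^{-n(r-\delta)}$ with no product structure needed, so the ``obstacle'' you flag is not one), whereas literally matching a pre-drawn $y^n$ to a shared sample would not yield your failure bound.
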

\begin{proof}
We first deal with the optimality part. By~\cite[Theorem 4]{ABJT2020partially}, for any $n \in \mathbb{N}$ and $\epsilon \in (0,1)$, we have
    \begin{equation}
    \label{equ:split}
        I^{\epsilon,d}_{\rm{max}}(\dot{X}^n:Y^n)_{P^{\ox n}}\leq R(P^{\ox n}_{XY},\epsilon).
    \end{equation}
 Eq.~(\ref{equ:split}) implies that 
\begin{equation}
\label{equ:spliimp}
\epsilon(P_{XY}^{\ox n}, nr) \geq \epsilon^d_{\dot{X}^n,Y^n}(P_{XY}^{\ox n}, nr).
\end{equation}
\end{proof}
The optimality part follows from Eq.~(\ref{equ:spliimp}) and Theorem~\ref{thm:main1}.  

In the derivation of the achievability part, we use the same protocol as in the proof of~\cite[Theorem 4]{ABJT2020partially}. For any $\delta>0$, there exists a sufficiently large integer $N_\delta$ such that when $n \geq N_\delta$,
\begin{equation}
\label{equ:rate}
    2^{n(r-\delta)} \leq \frac{2^{nr}-1}{n}.
\end{equation}
For $n \geq N_\delta$, we let $K_n:=\log(2^{nr}-1)-\log\log\frac{1}{2^{-n}}$, $R_n:=K_n+\log\log\frac{1}{2^{-n}}$. Then, for any $P'_{X^nY^n} \in \mc{Q}(\mc{X}^{\times n}\mc{Y}^{\times n})$ and $q_{Y^n} \in \mc{Q}(\mc{Y}^{\times n})$ which satisfy
\begin{equation}
\label{equ:splitcon}
P'_{X^nY^n} \leq 2^{K_n} P^{\ox n}_{X} \ox q_{Y^n},~P'_{X^n}=P_{X}^{\ox n},
\end{equation}
 the protocol can be stated below.

\textbf{Protocol}: Alice and Bob share the random variable $P_{X}^{\ox n} \ox q^{1}_{Y^n}\ox \ldots\ox q^{2^{R_n}}_{Y^n}$, where Alice holds $P_X^{\ox n}$, 
$q^{1}_{Y^n}, \ldots, q^{2^{R_n}}_{Y^n}$ are the copies of $q_{Y^n}$ and act as the shared randomness between Alice and Bob. When Alice obtain a sample $x$ from $P^n_X(x^n)$, they proceed in the following steps.
\begin{enumerate}
\item Alice sets $i=1$.

\item While $i \leq 2^{R_n}$:

\item Alice takes a sample from $q_{Y^n}^i$.

\item With probability $\frac{P'_{X^nY^n}(y^n|x^n)}{2^{K_n}q_{Y^n}(y^n)}$ she accepts the sample, sends $i$ to Bob and exits the whole loop.

\item With probability  $1-\frac{P'_{X^nY^n}(y^n|x^n)}{2^{K_n}q_{Y^n}(y^n)}$ she updates $i \rightarrow i+1$
and goes to Step 2~(End While),

\item If $i>2^{R_n}$, Alice sends $2^{R_n}+1$ to Bob.

\item Bob receives Alice's message, which we call $j$.
If $j>2^{R_n}$, Bob outputs a sample distributed as 
$q_{Y^n}$. Else he outputs the sample from $q_{Y^n}^j$.
\end{enumerate}
Based on the same arguments as in the proof of~\cite[Theorem 4]{ABJT2020partially}, the number of bits communicated is $\log(2^{R_n}+1)=nr$ and the output distribution is equal to
\[
P''_{X^nY^n}=\sum_{x^n \in \mc{X}^{\times n}
}P^n_{X}(x^n)\proj{x^n} \ox ((1-\lambda_n)P'_{X^nY^n}(\cdot|x^n)+\lambda_n q_{Y^n}),
\]
where $\lambda_n$ is the probability that Alice does not find any sample and it can be bounded as
\begin{equation}
\label{equ:split1}
  \lambda_n=(1-2^{-K_n})^{2^{R_n}}\leq (2^{-2^{-K_n}})^{2^{R_n}} =2^{-2^{\log\log \frac{1}{2^{-n}}}}=2^{-n}.
\end{equation}
Hence, the performance of this protocol can be evaluated as
\begin{equation}
\label{equ:split2}
\begin{split}
&d(P''_{X^nY^n},P_{XY}^{\ox n}) \\
=&\sum_{x^n \in \mc{X}^{\times n}}P_X^n(x^n) d(P''_{X^nY^n}(\cdot|x^n),P^{\ox n}_{XY}(\cdot|x^n))\\
=&\sum_{x^n \in \mc{X}^{\times n}}P_X^n(x^n) d((1-\lambda_n)P'_{X^nY^n}(\cdot|x^n)+\lambda_n q_{Y^n},P^{\ox n}_{XY}(\cdot|x^n))\\
\leq&(1-\lambda_n) \sum_{x^n \in \mc{X}^{\times n}}P_X^n(x^n) d(P'_{X^nY^n}(\cdot|x^n),P^{\ox n}_{XY}(\cdot|x^n))+\lambda_n\sum_{x^n \in \mc{X}^{\times n}}P_X^n(x^n) d(q_{Y^n},P^{\ox n}_{XY}(\cdot|x^n)) \\
\leq &(1-\lambda_n)d(P'_{X^nY^n},P_{XY}^{\ox n})+\lambda_n
\end{split}
\end{equation}
Eq.~(\ref{equ:split1}) and Eq.~(\ref{equ:split2}) give
\begin{equation}
\label{equ:split3}
\begin{split}
&1-\epsilon(P^{\ox n}_{XY},nr)\\
\geq &1-d(P''_{X^nY^n},P_{XY}^{\ox n}) \\
\geq &(1-\lambda_n)(1-d(P'_{X^nY^n},P_{XY}^{\ox n})) \\
\geq &(1-2^{-n}) (1-d(P'_{X^nY^n},P_{XY}^{\ox n})).
\end{split}
\end{equation}
Because Eq.~(\ref{equ:split3}) holds for any $P'_{X^nY^n}$
and $q_{Y^n}$ which satisfy Eq.~(\ref{equ:splitcon}), we have
\begin{equation}
\label{equ:splitfinal}
  1-  \epsilon(P^{\ox n}_{XY},nr) \geq (1-2^{-n})(1-\epsilon^d_{\dot{X}^n:Y^n}(P_{XY}^{\ox n}, K_n))\geq (1-2^{-n})(1-\epsilon^d_{\dot{X}^n:Y^n}(P_{XY}^{\ox n}, n(r-\delta))),
\end{equation}
where the second inequality is due to Eq.~(\ref{equ:rate}). Eq.~(\ref{equ:splitfinal}) and Theorem~\ref{thm:main1} imply that
\begin{equation}
 E^{\rm{spl}}_{sc}(P_{XY},r) \leq  \sup_{0 \leq \alpha \leq 1}
(1-\alpha)\big\{I_\alpha(X:Y)_\rho-r+\delta\big\}.
\end{equation}
By letting $\delta \rightarrow 0$, we complete the proof of the achievability part.

%*******************************************************************************

\section{Conclusion}
\label{sec:conclu}

In this work, we establish the exact strong converse exponents of partially smoothed information measures for classical states and pure states.  The formulas of the strong converse exponents based on trace distance for classical states~(Eq.~(\ref{equ:mainl}) and Eq.~(\ref{equ:clamutual})) do not coincide with those for pure states~(Eq.~(\ref{equ:main}) and Eq.~(\ref{equ:puremutual})). Hence, the strong converse exponents of partially smoothed information measures based on trace distance are not uniform across states. As an application, we use these results to derive the strong converse exponents of quantum blind data compression, intrinsic randomness and classical state splitting.
As a by-product in establishing the main results, we determine the strong converse exponent of classical privacy amplification which we believe to be of independent interest.  Because partially smoothed information measures can be used to give tight characterizations in many one-shot  quantum information processing tasks~\cite{ABJT2020partially,FWTB2024channel}, we anticipate that our results can have more applications in quantum information field.

%*******************************************************************************

\section*{Acknowledgments}

We acknowledge funding by the European Research Council (ERC Grant Agreement No. 948139) and MB acknowledges support from the Excellence Cluster - Matter and Light for Quantum Computing (ML4Q).

%*******************************************************************************

\section{Appendix}

\subsection{Auxiliary Lemmas}

\begin{lemma}
\label{lem:li}
For $\rho_{RA}$, $\{P(t)\}_{t \in \mc{T}_n^\mc{X}}$, $\{\Pi_t\}_{t \in \mc{T}_n^\mc{X}}$, $\mc{E}_n$ defined in the proof of Proposition~\ref{propo:main} and $\frac{1}{2}<\alpha<1$, we have
\begin{align}
&\lim_{n \rightarrow \infty} \inf_{t \in \mc{T}_n^\mc{X}} \frac{1}{n} \big\{\frac{\alpha}{\alpha-1}\log P(t)+D^*_{\alpha}(\frac{\Pi_t\rho_{RA}^{\ox n}\Pi_t}{P(t)}\|\Pi_t\ox \sigma_{A^n}^u) \big\}=-H^*_{\alpha}(R|A)_\rho   
\\ 
\label{equ:confi}
&\lim_{n \rightarrow \infty}  \inf_{t \in \mc{T}_n^\mc{X}}\frac{1}{n} \big\{\frac{\alpha}{\alpha-1}\log P(t)+\log|T_n^t|\big\}=H_\beta(R)_\rho  , 
\end{align}
where $\frac{1}{\alpha}+\frac{1}{\beta}=2$.
\end{lemma}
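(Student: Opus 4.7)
The plan is to prove the two limits separately, combining standard type estimates with structural properties of the sandwiched R\'enyi divergence.

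For Eq.~\eqref{equ:confi}, I would apply the bounds from Eq.~\eqref{eq:numt} and Eq.~\eqref{eq:proset} to get, uniformly in $t\in\mc{T}_n^{\mc{X}}$,
\begin{equation*}
\tfrac{1}{n}\bigl(\tfrac{\alpha}{\alpha-1}\log P(t)+\log|T_n^t|\bigr)=H(t)+\tfrac{\alpha}{1-\alpha}D(t\|p)+O\!\left(\tfrac{\log n}{n}\right).
\end{equation*}
Since the types become dense in $\mc{Q}(\mc{X})$ as $n\to\infty$, taking $\inf_t$ yields $\inf_{q\in\mc{Q}(\mc{X})}\{H(q)+\tfrac{\alpha}{1-\alpha}D(q\|p)\}$. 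With $\beta=\alpha/(2\alpha-1)$ one checks $\tfrac{\beta}{\beta-1}=\tfrac{\alpha}{1-\alpha}$, and then the Sibson-type variational expression in Proposition~\ref{prop:mainpro}(xi), applied to the commuting pair $\rho_R,I_R$, identifies this infimum with $H_\beta(R)_\rho=-D_\beta(\rho_R\|I_R)$.

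For the first identity the key algebraic step is the reduction
\begin{equation*}
\tfrac{\alpha}{\alpha-1}\log P(t)+D^*_\alpha\!\bigl(\tilde\rho_t\|\Pi_t\otimes\sigma^u_{A^n}\bigr)=\tfrac{1}{\alpha-1}\log Q^*_\alpha\!\bigl(\Pi_t\rho_{RA}^{\otimes n}\Pi_t\,\|\,I_{R^n}\otimes\sigma^u_{A^n}\bigr),
\end{equation*}
where $\tilde\rho_t:=\Pi_t\rho_{RA}^{\otimes n}\Pi_t/P(t)$. To obtain it I would note that $\tilde\rho_t$ is supported on $\Pi_t\otimes I_{A^n}$ and $\sigma^u_{A^n}$ has full support, so a direct evaluation of the definition yields $D^*_\alpha(\tilde\rho_t\|\Pi_t\otimes\sigma^u_{A^n})=D^*_\alpha(\tilde\rho_t\|I_{R^n}\otimes\sigma^u_{A^n})$; combined with the homogeneity $Q^*_\alpha(P(t)\tilde\rho_t\|\,\cdot\,)=P(t)^\alpha Q^*_\alpha(\tilde\rho_t\|\,\cdot\,)$ applied to $\Pi_t\rho_{RA}^{\otimes n}\Pi_t=P(t)\tilde\rho_t$, this rearranges to the stated identity. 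Because $\alpha<1$, $\inf_t$ corresponds to $\tfrac{1}{\alpha-1}\log\sup_t Q^*_\alpha(\Pi_t\rho_{RA}^{\otimes n}\Pi_t\|I\otimes\sigma^u_{A^n})$. Since the blocks $\{\Pi_t\rho_{RA}^{\otimes n}\Pi_t\}_t$ are supported on mutually orthogonal $R^n$-subspaces while $\sigma^u_{A^n}$ acts only on $A^n$, a trace expansion gives
\begin{equation*}
\sum_t Q^*_\alpha\bigl(\Pi_t\rho_{RA}^{\otimes n}\Pi_t\,\|\,I\otimes\sigma^u_{A^n}\bigr)=Q^*_\alpha\bigl(\mc{E}_n(\rho_{RA}^{\otimes n})\,\|\,I\otimes\sigma^u_{A^n}\bigr),
\end{equation*}
so the supremum is sandwiched within the polynomial factor $|\mc{T}_n^{\mc{X}}|\le(n{+}1)^{|\mc{X}|}$. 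Hence $\inf_t\{\cdots\}=D^*_\alpha(\mc{E}_n(\rho_{RA}^{\otimes n})\|I\otimes\sigma^u_{A^n})+O(\log n)$.

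The remaining step is to identify $\tfrac{1}{n}D^*_\alpha(\mc{E}_n(\rho_{RA}^{\otimes n})\|I\otimes\sigma^u_{A^n})\to -H^*_\alpha(R|A)_\rho$. The upper bound follows from data processing under $\mc{E}_n$ (which fixes $I_{R^n}\otimes\sigma^u_{A^n}$ since it acts unitally on $R^n$), the symmetric-state reduction of the infimum in the definition of $H^*_\alpha$ via joint convexity, Lemma~\ref{lem:sym} which replaces any symmetric $\sigma_{A^n}$ by $\sigma^u_{A^n}$ at cost $\log v_{n,|A|}=O(\log n)$, and additivity of the sandwiched conditional R\'enyi entropy giving $\inf_{\sigma_{A^n}}D^*_\alpha(\rho_{RA}^{\otimes n}\|I\otimes\sigma_{A^n})=-nH^*_\alpha(R|A)_\rho$. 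The main obstacle I expect is the matching lower bound, since data processing alone only produces the upper direction here. To close the gap I would apply the pinching approximation in Proposition~\ref{prop:mainpro}(iv) to the $A^n$-pinching $\mc{P}_{I\otimes\sigma^u_{A^n}}$, which has polynomially many distinct eigenvalues and commutes with $\mc{E}_n$; the resulting state $\mc{E}_n\mc{P}_{I\otimes\sigma^u}(\rho_{RA}^{\otimes n})$ then commutes with $I\otimes\sigma^u_{A^n}$, so Petz and sandwiched divergences coincide and the single-letter limit can be recovered by the commuting Sibson identity applied jointly on $R^nA^n$.
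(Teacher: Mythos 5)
Most of your argument tracks the paper's proof: the exact block identity $\frac{\alpha}{\alpha-1}\log P(t)+D^*_\alpha(\tilde\rho_t\|\Pi_t\ox\sigma^u_{A^n})=\frac{1}{\alpha-1}\log Q^*_\alpha(\Pi_t\rho_{RA}^{\ox n}\Pi_t\|I_{R^n}\ox\sigma^u_{A^n})$, the additivity of $Q^*_\alpha$ over the orthogonal type blocks, and the polynomial sandwich between $\inf_t$ and $D^*_\alpha(\mc{E}_n(\rho_{RA}^{\ox n})\|I_{R^n}\ox\sigma^u_{A^n})$ are exactly Eq.~(\ref{equ:equico}); your upper bound via data processing under $\mc{E}_n$, the symmetric reduction, Lemma~\ref{lem:sym} and additivity is also the standard half of Eq.~(\ref{equ:conlim}); and your treatment of Eq.~(\ref{equ:confi}) (type expansion plus the variational formula of Proposition~\ref{prop:mainpro}~(xi)) is a harmless variant of the paper's direct identification of the infimum with $\frac{1}{n}H_\beta(R^n)_{\rho^{\ox n}}$ up to polynomial corrections.

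The genuine gap is the lower bound $\frac{1}{n}D^*_\alpha(\mc{E}_n(\rho_{RA}^{\ox n})\|I_{R^n}\ox\sigma^u_{A^n})\geq -H^*_\alpha(R|A)_\rho-o(1)$, which you correctly flag as the obstacle but do not actually close. Proposition~\ref{prop:mainpro}~(iv) only compensates for the pinching associated with the second argument, i.e.\ $\mathrm{id}_{R^n}\ox\mc{E}_{\sigma^u_{A^n}}$; it says nothing about undoing the type pinching $\mc{E}_n$ on $R^n$. After applying the $A^n$-pinching you are left with having to bound $D_\alpha(\mc{E}_n\circ(\mathrm{id}\ox\mc{E}_{\sigma^u})(\rho_{RA}^{\ox n})\|I\ox\sigma^u_{A^n})$ from below by $-nH^*_\alpha(R|A)_\rho-o(n)$, and the ``commuting Sibson identity'' does not deliver this: to extract the single-letter sandwiched quantity from Proposition~\ref{prop:mainpro}~(xi) you would need to lower bound $D(\tau\|\mc{E}_n\circ(\mathrm{id}\ox\mc{E}_{\sigma^u})(\rho^{\ox n}))$ by $D(\tau\|\rho^{\ox n})$ minus a polynomial term, which fails because the doubly pinched state is not dominated by $\mathrm{poly}(n)\cdot\rho^{\ox n}$ in general. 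What is missing is a reverse pinching estimate with polynomial penalty for a pinching that merely leaves $I_{R^n}\ox\sigma^u_{A^n}$ invariant (rather than being the pinching of that operator). This is precisely how the paper closes the argument in Eq.~(\ref{equ:typecon}): applying the pinching bound to the combined pinching $\mc{E}_n\ox\mc{E}_{\sigma^u_{A^n}}$, whose number of blocks is at most $(n+1)^{|R|}v_{n,|A|}$, gives $D^*_\alpha(\rho_{RA}^{\ox n}\|I\ox\sigma^u_{A^n})\leq D^*_\alpha(\mc{E}_n\ox\mc{E}_{\sigma^u_{A^n}}(\rho_{RA}^{\ox n})\|I\ox\sigma^u_{A^n})+2\log(n+1)^{|R|}+2\log v_{n,|A|}$, the extra $\mc{E}_{\sigma^u_{A^n}}$ is then removed by data processing since it fixes $I\ox\sigma^u_{A^n}$, and finally $D^*_\alpha(\rho_{RA}^{\ox n}\|I\ox\sigma^u_{A^n})\geq\inf_{\sigma_{A^n}}D^*_\alpha(\rho_{RA}^{\ox n}\|I\ox\sigma_{A^n})=-nH^*_\alpha(R|A)_\rho$ by additivity. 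You need either to prove such a generalized reverse pinching inequality (it is not the statement of Proposition~\ref{prop:mainpro}~(iv)) or to restructure your lower bound along these lines; as written, your sketch stalls exactly at this step.
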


\begin{proof}
By the data processing inequality of the sandwiched R\'enyi divergence and Proposition~\ref{prop:mainpro}~(\romannumeral4), we have
\begin{equation}
\label{equ:typecon}
\begin{split}
&D^*_{\alpha}(\mc{E}_n(\rho_{RA}^{\ox n}) \| I_{R}^{\ox n} \ox \sigma_{A^n}^u) \\
\leq &D^*_{\alpha}(\rho_{RA}^{\ox n} \| I_{R}^{\ox n} \ox \sigma_{A^n}^u)\\
\leq &D^*_{\alpha}(\mc{E}_n \ox \mc{E}_{\sigma_{A^n}^u}(\rho_{RA}^{\ox n}) \| I_{R}^{\ox n} \ox \sigma_{A^n}^u)+2\log(n+1)^{|R|}+2\log v_{n,|A|} \\
\leq &D^*_{\alpha}(\mc{E}_n(\rho_{RA}^{\ox n}) \| I_{R}^{\ox n} \ox \sigma_{A^n}^u)+2\log(n+1)^{|R|}+2\log v_{n,|A|}.
\end{split}
\end{equation}
Eq.~(\ref{equ:typecon}) gives
\begin{equation}
\label{equ:conlim}
\lim_{n \rightarrow \infty} \frac{D^*_{\alpha}(\mc{E}_n(\rho_{RA}^{\ox n}) \| I_{R}^{\ox n} \ox \sigma_{A^n}^u)}{n}=\inf_{\sigma_A \in \mc{S}(A)}D^*_\alpha(\rho_{RA}\|I_{R}\ox \sigma_A)=-H^*_{\alpha}(R|A)_\rho.
\end{equation}
Recall that
\begin{equation}
D^*_{\alpha}(\mc{E}_n(\rho_{RA}^{\ox n}) \| I_{R}^{\ox n} \ox \sigma_{A^n}^u) 
=\frac{1}{\alpha-1} \log \sum_{t \in \mc{T}_n^\mc{X}} P(t)^\alpha Q^*_\alpha(\frac{\Pi_t\rho_{RA}^{\ox n}\Pi_t}{P(t)}\|\Pi_t\ox \sigma_{A^n}^u).
\end{equation}
We can evaluate $D^*_{\alpha}(\mc{E}_n(\rho_{RA}^{\ox n}) \| I_{R}^{\ox n} \ox \sigma_{A^n}^u)$ as
\begin{equation}
\label{equ:equico}
\begin{split}
&\inf_{t \in \mc{T}_n^\mc{X}} \big\{ \frac{\alpha}{\alpha-1}\log P(t)+D^*_{\alpha}(\frac{\Pi_t\rho_{RA}^{\ox n}\Pi_t}{P(t)}\|\Pi_t\ox \sigma_{A^n}^u) \big\}+\frac{|R|}{\alpha-1} \log(n+1)\\
\leq &D^*_{\alpha}(\mc{E}_n(\rho_{RA}^{\ox n}) \| I_{R}^{\ox n} \ox \sigma_{A^n}^u)  \\
\leq &\inf_{t \in \mc{T}_n^\mc{X}} \big\{ \frac{\alpha}{\alpha-1}\log P(t)+D^*_{\alpha}(\frac{\Pi_t\rho_{RA}^{\ox n}\Pi_t}{P(t)}\|\Pi_t\ox \sigma_{A^n}^u) \big\}
\end{split}
\end{equation}
Eq.~(\ref{equ:confi}) follows from Eq.~(\ref{equ:conlim}) and Eq.~(\ref{equ:equico}). 

Similarly, $H_{\beta}(R^n)_{\rho^{\ox n}}$ can be evaluated as 
\begin{equation}
\label{equ:entropy}
\begin{split}
 &\inf_{t \in \mc{T}_n^\mc{X}} \big\{\frac{\alpha}{\alpha-1}\log P(t)+\log|T_n^t|\big\}+\frac{|R|}{1-\beta} \log(n+1) \\
\leq &H_{\beta}(R^n)_{\rho^{\ox n}} \\
\leq  &\inf_{t \in \mc{T}_n^\mc{X}} \big\{\frac{\alpha}{\alpha-1}\log P(t)+\log|T_n^t|\big\}.
\end{split}
\end{equation}
Eq.~(\ref{equ:entropy}) implies that Eq.~(\ref{equ:confi}) holds.
\end{proof}

The following lemma can be deduced from the same process as~\cite[Lemma~2]{WangWilde2019resource}. It was also implicitly proved in~\cite{LWD2016strong}.

\begin{lemma}
\label{lem:hof}
Let $\rho, \sigma \in \mc{P}(\mc{H})$ and $\tau \in \mc{P}(\mc{H})$, and suppose $\supp(\sigma)\not\perp\supp(\tau)$. Fix $\alpha \in (\frac{1}{2}, 1)$ and $\beta\in(1,+\infty)$
such that $\frac{1}{\alpha}+\frac{1}{\beta}=2$. Then
\begin{equation}
\frac{2\alpha}{1-\alpha} \log F(\rho, \sigma) \leq D^*_\beta(\rho\|\tau)-D^*_\alpha(\sigma\|\tau)+\frac{1}{1-\alpha}\log \tr \sigma+\frac{1}{\beta-1}\log \tr\rho.
\end{equation}
\end{lemma}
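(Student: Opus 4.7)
The plan is to derive the inequality by applying the Schatten--H\"older inequality to the fidelity after inserting a judicious power of $\tau$ between the two square roots. The central arithmetic fact is that when $\frac{1}{\alpha}+\frac{1}{\beta}=2$ with $\alpha\in(\frac{1}{2},1)$ and $\beta>1$, the exponents $2\alpha$ and $2\beta$ are H\"older conjugates, $\frac{1}{2\alpha}+\frac{1}{2\beta}=1$, and the sandwich exponent satisfies $\frac{\beta-1}{2\beta}=\frac{1-\alpha}{2\alpha}$. This precise matching is what makes the two Schatten norms produced by H\"older align with $Q^*_\beta(\rho\|\tau)$ and $Q^*_\alpha(\sigma\|\tau)$ respectively.

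If $\supp(\rho)\not\subseteq\supp(\tau)$ then $D^*_\beta(\rho\|\tau)=+\infty$ and the bound is trivial, so I may assume $\supp(\rho)\subseteq\supp(\tau)$. Starting from $F(\rho,\sigma)=\|\sqrt{\rho}\sqrt{\sigma}\|_1$, I insert the identity $\tau^{(1-\beta)/(2\beta)}\tau^{(\beta-1)/(2\beta)}$ (on the support of $\tau$) between $\sqrt{\rho}$ and $\sqrt{\sigma}$, and apply Schatten--H\"older $\|XY\|_1\leq\|X\|_{2\beta}\|Y\|_{2\alpha}$ to obtain
\begin{equation}
F(\rho,\sigma)\leq \bigl\|\sqrt{\rho}\,\tau^{(1-\beta)/(2\beta)}\bigr\|_{2\beta}\cdot\bigl\|\tau^{(\beta-1)/(2\beta)}\sqrt{\sigma}\bigr\|_{2\alpha}.
\end{equation}
Expanding via $\|X\|_p^p=\tr(X^*X)^{p/2}$, the first factor equals $Q^*_\beta(\rho\|\tau)^{1/(2\beta)}$ directly. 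For the second, the identity $(\beta-1)/\beta=(1-\alpha)/\alpha$ gives $B^*B=\sqrt{\sigma}\,\tau^{(1-\alpha)/\alpha}\sqrt{\sigma}$, whose $\alpha$-th power has the same trace as $(\tau^{(1-\alpha)/(2\alpha)}\sigma\tau^{(1-\alpha)/(2\alpha)})^\alpha$ via the standard $X^*X$--$XX^*$ spectral identity, yielding $Q^*_\alpha(\sigma\|\tau)^{1/(2\alpha)}$.

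Taking logarithms and multiplying by $\frac{2\alpha}{1-\alpha}$, then using the algebraic relation $\frac{\alpha}{\beta(1-\alpha)}=\frac{2\alpha-1}{1-\alpha}=\frac{1}{\beta-1}$, I arrive at
\begin{equation}
\frac{2\alpha}{1-\alpha}\log F(\rho,\sigma)\leq \frac{1}{\beta-1}\log Q^*_\beta(\rho\|\tau)+\frac{1}{1-\alpha}\log Q^*_\alpha(\sigma\|\tau).
\end{equation}
Finally, substituting $\log Q^*_\beta(\rho\|\tau)=(\beta-1)D^*_\beta(\rho\|\tau)+\log\tr\rho$ and the analogous identity for $\sigma$ from Definition~\ref{definition:sand} produces exactly the $D^*_\beta(\rho\|\tau)-D^*_\alpha(\sigma\|\tau)$ combination together with the correction terms $\frac{1}{\beta-1}\log\tr\rho$ and $\frac{1}{1-\alpha}\log\tr\sigma$, completing the proof. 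The only real obstacle is the exponent bookkeeping---pairing the H\"older conjugate pair with the two R\'enyi parameters and tracking the trace normalizations---since no analytic tool beyond Schatten--H\"older is needed once the $\tau$-sandwich trick is in place.
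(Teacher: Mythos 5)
Your proof is correct, and it is essentially the argument the paper relies on: the paper gives no proof of Lemma~\ref{lem:hof} beyond citing \cite[Lemma~2]{WangWilde2019resource} (and \cite{LWD2016strong}), whose derivation is exactly this insertion of $\tau^{\pm\frac{\beta-1}{2\beta}}$ into $\|\sqrt{\rho}\sqrt{\sigma}\|_1$ followed by Schatten--H\"older with conjugate exponents $2\alpha,2\beta$ and the identity $\frac{\beta-1}{\beta}=\frac{1-\alpha}{\alpha}$. Your handling of the support issue (reducing to $\supp(\rho)\subseteq\supp(\tau)$ since otherwise $D^*_\beta(\rho\|\tau)=+\infty$ while the hypothesis $\supp(\sigma)\not\perp\supp(\tau)$ keeps $D^*_\alpha(\sigma\|\tau)$ finite) and the trace-normalization bookkeeping are both accurate.
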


The following Lemmas are from~\cite[Lemma A.2]{Tomamichel2012framework}, \cite[Lemma 30]{LiYao2024operational}, \cite[Lemma 31]{LiYao2024operational}, \cite[Lemma 10]{MarioYao2024strong} and~\cite[Lemma 6]{ABJT2020partially}, respectively.
\begin{lemma}
\label{lem:opein}
Let $M_{AB} \in \mc{P}(AB)$. Then, $M_{AB} \leq |A| I_A \ox M_B$.
\end{lemma}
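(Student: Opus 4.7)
The plan is to reduce the inequality $M_{AB} \leq |A| I_A \otimes M_B$ to the case of pure $M_{AB}$ and then extend by convex combination. Writing the spectral decomposition $M_{AB} = \sum_k p_k \proj{v_k}_{AB}$ with $p_k \geq 0$ and normalized $\ket{v_k}$, both sides of the desired inequality are linear in the summands, so it suffices to establish $\proj{\psi}_{AB} \leq |A| I_A \otimes \psi_B$ for every unit vector $\ket{\psi}_{AB}$, where $\psi_B := \tr_A \proj{\psi}_{AB}$, and then sum.

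For the pure-state case, I would use the Schmidt decomposition
\begin{equation}
\ket{\psi}_{AB} = \sum_{j} \sqrt{\lambda_j}\,\ket{a_j}_A \ket{b_j}_B,
\end{equation}
with $\lambda_j > 0$ and at most $r := \min\{|A|,|B|\}$ non-zero terms, so that $\psi_B = \sum_j \lambda_j \proj{b_j}_B$. Setting $\ket{w}_{AB} := (I_A \otimes \psi_B^{-1/2})\ket{\psi}_{AB}$ (using the pseudoinverse on $\supp(\psi_B)$), a direct computation gives $\ket{w}_{AB} = \sum_j \ket{a_j}_A \ket{b_j}_B$, hence $\|w\|^2 = r \leq |A|$. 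Therefore $\proj{w}_{AB} \leq |A|\, I_A \otimes \Pi_B$, where $\Pi_B$ is the projector onto $\supp(\psi_B)$. Conjugating by $I_A \otimes \psi_B^{1/2}$ on both sides yields
\begin{equation}
\proj{\psi}_{AB} \leq |A|\, I_A \otimes \psi_B,
\end{equation}
as desired.

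Summing the pure-state inequality against the weights $p_k$ and using $\sum_k p_k (v_k)_B = M_B$ gives the general statement. The only subtle point — the main obstacle, though a minor one — is the use of the pseudoinverse $\psi_B^{-1/2}$: one must check that conjugation by $I_A \otimes \psi_B^{1/2}$ really recovers $\proj{\psi}_{AB}$ on the whole space, not just on $I_A \otimes \supp(\psi_B)$. This is immediate because $\ket{\psi}_{AB}$ already lies in $I_A \otimes \supp(\psi_B)$ by the Schmidt decomposition, so both sides of the inequality annihilate any vector in the orthogonal complement $I_A \otimes \ker(\psi_B)$. With this verification the proof is complete.
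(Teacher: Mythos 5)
Your proof is correct. The reduction to the pure case via the spectral decomposition is legitimate (operator inequalities add), and the pure-state step is sound: with $\ket{w}=(I_A\ox\psi_B^{-1/2})\ket{\psi}$ one has $\|w\|^2$ equal to the Schmidt rank, which is at most $|A|$, so $\proj{w}\leq |A|\,(I_A\ox\Pi_B)$, and conjugating this inequality by $I_A\ox\psi_B^{1/2}$ gives $\proj{\psi}\leq |A|\,I_A\ox\psi_B$; your remark that $\ket{\psi}$ lies in $\mc{H}_A\ox\supp(\psi_B)$ disposes of the pseudoinverse subtlety. Note that the paper does not prove this lemma at all\,---\,it is imported as Lemma A.2 of the cited framework reference\,---\,so there is no in-paper argument to compare against; your derivation is the standard self-contained one (reduction to pure states plus the Schmidt-rank bound), and an equally common alternative is a discrete twirling argument, writing $\frac{I_A}{|A|}\ox M_B$ as a uniform average of $|A|^2$ Weyl--Heisenberg conjugations of $M_{AB}$, each positive, so that $M_{AB}$ is dominated by $|A|^2$ times the average. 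Either route establishes the lemma; yours is perfectly adequate.
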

 
\begin{lemma}
\label{lem:appen1}
Let $\rho, \sigma\in\mc{P}(\mc{H})$, and let $\mc{H}=\bigoplus_{i\in \mc{I}}\mc{H}_i$ decompose into a set of mutually orthogonal subspaces $\{\mc{H}_i\}_{i\in \mc{I}}$. Suppose that $\sigma=\sum\limits_{i \in \mc{I}} \sigma_i$ with $\supp(\sigma_i)\subseteq \mc{H}_i$. Then
\begin{equation}
F\left(\sum_{i \in \mc{I}} \Pi_i \rho \Pi_i, \sigma\right) \leq \sqrt{|\mc{I}|} F(\rho, \sigma),
\end{equation}
where $\Pi_i$ is the projection onto $\mc{H}_i$.
\end{lemma}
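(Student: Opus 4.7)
The plan is to reduce the inequality to the block-diagonal fidelity structure via two short identities and then close with Cauchy--Schwarz and concavity of the squared fidelity. First, I would establish the block-wise identity $F(\Pi_i\rho\Pi_i,\sigma_i)=F(\rho,\sigma_i)$ for every $i\in\mc{I}$. Using the formula $F(A,B)=\tr\sqrt{\sqrt{B}A\sqrt{B}}$ together with the hypothesis $\supp(\sigma_i)\subseteq\mc{H}_i$, which gives $\Pi_i\sqrt{\sigma_i}=\sqrt{\sigma_i}$, one has $\sqrt{\sigma_i}\Pi_i\rho\Pi_i\sqrt{\sigma_i}=\sqrt{\sigma_i}\rho\sqrt{\sigma_i}$ and hence the claimed identity.

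Next, because both $\sum_i\Pi_i\rho\Pi_i$ and $\sigma$ are block-diagonal relative to the decomposition $\bigoplus_i\mc{H}_i$, the product $\sqrt{\sum_j\Pi_j\rho\Pi_j}\sqrt{\sigma}=\sum_i\sqrt{\Pi_i\rho\Pi_i}\sqrt{\sigma_i}$ is itself block-diagonal, and the trace norm splits across blocks:
\[
F\bigl(\textstyle\sum_i\Pi_i\rho\Pi_i,\sigma\bigr)=\sum_i\bigl\|\sqrt{\Pi_i\rho\Pi_i}\sqrt{\sigma_i}\bigr\|_1=\sum_iF(\Pi_i\rho\Pi_i,\sigma_i)=\sum_iF(\rho,\sigma_i),
\]
where the last equality is the identity from Step~1.

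Finally, I would apply Cauchy--Schwarz in the form $\bigl(\sum_iF(\rho,\sigma_i)\bigr)^2\le|\mc{I}|\sum_iF^2(\rho,\sigma_i)$ and then reduce $\sum_iF^2(\rho,\sigma_i)\le F^2(\rho,\sigma)$ to the joint concavity of the squared fidelity: writing $\sigma_i=p_i\tau_i$ with $\tau_i$ normalized and $p_i=\tr\sigma_i$, the relation $\sigma/\tr\sigma=\sum_i(p_i/\tr\sigma)\tau_i$ displays $\sigma$ as a convex combination of states, and joint concavity of $F^2$ in the second argument together with the homogeneity $F^2(\rho,c\sigma)=cF^2(\rho,\sigma)$ yields the bound. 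I do not foresee a genuine obstacle; the only delicate point is the asymmetric hypothesis $\supp(\sigma_i)\subseteq\mc{H}_i$ exploited in Step~1, without which the one-sided projection $\Pi_i$ applied to $\rho$ could not be absorbed into $\sqrt{\sigma_i}$ and the reduction would fail. The prefactor $\sqrt{|\mc{I}|}$ cannot be improved in general, since Cauchy--Schwarz is tight when all $F(\rho,\sigma_i)$ coincide.
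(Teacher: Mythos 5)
Your proof is correct. Note that the paper does not actually prove Lemma~\ref{lem:appen1}: it cites \cite[Lemma 30]{LiYao2024operational} and merely remarks that the same argument extends from normalized states to arbitrary $\rho,\sigma\in\mc{P}(\mc{H})$; your write-up supplies exactly such a self-contained argument, in the general positive-semidefinite setting, and each step checks out. The absorption $\Pi_i\sqrt{\sigma_i}=\sqrt{\sigma_i}$ does give $F(\Pi_i\rho\Pi_i,\sigma_i)=F(\rho,\sigma_i)$; the cross terms $\sqrt{\Pi_j\rho\Pi_j}\,\sqrt{\sigma_i}$ vanish for $j\neq i$ and the trace norm is additive over the orthogonal blocks, so $F\bigl(\sum_i\Pi_i\rho\Pi_i,\sigma\bigr)=\sum_i F(\rho,\sigma_i)$; Cauchy--Schwarz together with $\sum_i F^2(\rho,\sigma_i)\leq F^2(\rho,\sigma)$, obtained from concavity of $\sigma\mapsto F^2(\rho,\sigma)$ and the homogeneity $F^2(\rho,c\sigma)=cF^2(\rho,\sigma)$ (both of which survive rescaling of an unnormalized $\rho$), closes the argument. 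Two minor points. First, $F^2$ is not \emph{jointly} concave; what you use, and what is true (e.g.\ as a consequence of strong concavity of $F$), is concavity in the second argument alone, so the phrase ``joint concavity'' should be dropped. Second, tightness of Cauchy--Schwarz by itself does not show that the prefactor $\sqrt{|\mc{I}|}$ is optimal, since the subsequent inequality must be saturated simultaneously; the claim is nevertheless true, as seen by taking $\rho=\proj{\psi}$ with $\ket{\psi}$ the uniform superposition of an orthonormal basis, $\mc{H}_i$ the corresponding one-dimensional subspaces, and $\sigma$ the maximally mixed state, for which equality holds.
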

\begin{remark}
\cite[Lemma 30]{LiYao2024operational} only proves Lemma~\ref{lem:appen1} for quantum states $\rho$ and 
$\sigma$. However, the same arguments can be used to show that Lemma~\ref{lem:appen1} still holds for any $\rho, \sigma\in\mc{P}(\mc{H})$.
\end{remark}

\begin{lemma}
\label{lem:fidelity-re}
Let $\rho, \sigma, \tau \in\mc{S}(\mc{H})$ be any quantum states. Then we have
\begin{equation}\label{eq:fid-re}
-\log F^2(\rho,\sigma) \leq D(\tau\|\rho) + D(\tau\|\sigma).
\end{equation}
\end{lemma}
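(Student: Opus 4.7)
My approach is to reduce the non-commutative inequality to a one-line classical Jensen calculation by invoking the Fuchs-Caves measurement. Without loss of generality I may assume $\supp(\tau)\subseteq\supp(\rho)\cap\supp(\sigma)$, since otherwise the right-hand side is $+\infty$ and there is nothing to prove. By the Fuchs-Caves theorem, there is a POVM $\{M_x\}_x$ such that
\begin{equation*}
F(\rho,\sigma)=\sum_x \sqrt{p(x)\,q(x)},\qquad p(x):=\tr(M_x\rho),\ q(x):=\tr(M_x\sigma),
\end{equation*}
that is, the quantum fidelity coincides exactly with the classical Bhattacharyya coefficient of the post-measurement distributions. Set also $r(x):=\tr(M_x\tau)$, which under the support assumption is a probability distribution whose support is contained in those of both $p$ and $q$.

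The classical step is then immediate. Writing $F(p,q)\geq\sum_{x:\,r(x)>0}r(x)\sqrt{p(x)q(x)/r(x)^2}$ and applying Jensen's inequality to the concave function $\log$, I obtain
\begin{equation*}
2\log F(p,q)\geq \sum_{x:\,r(x)>0} r(x)\log\frac{p(x)q(x)}{r(x)^2}=-D(r\|p)-D(r\|q),
\end{equation*}
where the last equality uses the identities $\sum_x r(x)\log p(x)=-H(r)-D(r\|p)$ and the analogous one for $q$, together with $\sum_x r(x)\log r(x)=-H(r)$. Equivalently, $-\log F^2(p,q)\leq D(r\|p)+D(r\|q)$, which is the classical version of the desired inequality.

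To finish, I would apply the data processing inequality for the relative entropy to the measurement channel $\mathcal{M}:\omega\mapsto\sum_x\tr(M_x\omega)\proj{x}$, which gives $D(r\|p)=D(\mathcal{M}(\tau)\|\mathcal{M}(\rho))\leq D(\tau\|\rho)$ and similarly $D(r\|q)\leq D(\tau\|\sigma)$. Combining these bounds with the Fuchs-Caves identity $F^2(\rho,\sigma)=F^2(p,q)$ yields $-\log F^2(\rho,\sigma)\leq D(\tau\|\rho)+D(\tau\|\sigma)$. The only operator-theoretic ingredient is the Fuchs-Caves theorem, which is where all the non-commutativity is absorbed; the remaining content is a single Jensen step and a direct application of data processing, so I do not anticipate any essential obstacle beyond invoking that classical reduction.
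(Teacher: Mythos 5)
Your proof is correct. Note that the paper does not actually prove Lemma~\ref{lem:fidelity-re}; it is quoted from \cite[Lemma 31]{LiYao2024operational}, so your argument is a self-contained alternative rather than a reproduction of anything in this manuscript. The chain of steps checks out: the reduction to $\supp(\tau)\subseteq\supp(\rho)\cap\supp(\sigma)$ is legitimate since otherwise the right-hand side is $+\infty$; the Fuchs--Caves theorem does provide, in finite dimensions, a POVM whose outcome distributions $p,q$ satisfy $F(\rho,\sigma)=\sum_x\sqrt{p(x)q(x)}$ (and you genuinely need this achievability, not just the inequality $F(p,q)\geq F(\rho,\sigma)$, since your final bound requires $F(p,q)\leq F(\rho,\sigma)$); the Jensen step gives $2\log F(p,q)\geq -D(r\|p)-D(r\|q)$; and data processing for the relative entropy under the measurement channel finishes the argument. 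Two small points you should make explicit if you write this up: first, under the support assumption, $r(x)>0$ forces $p(x)>0$ and $q(x)>0$ (because $\tr(M_x\rho)=0$ with $M_x\geq 0$ implies $M_x\rho=0$, hence $M_x\tau=0$), so the restricted sum and the classical relative entropies are well defined; second, the existence of an optimal (Fuchs--Caves) measurement when $\rho$ or $\sigma$ is rank-deficient needs a word (restrict to the relevant support or invoke the general statement that the measured fidelity equals the fidelity and is attained in finite dimensions). What your route buys is elementarity: the only noncommutative inputs are the achievability of the measured fidelity and the data-processing inequality, whereas the cited source derives the bound by operator-theoretic means; the price is that you must invoke the Fuchs--Caves optimality, which is itself a nontrivial (though standard) result.
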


\begin{lemma}
\label{lem:rela}
For any pure state $\varphi$ and subnormalized state $\rho$, it holds that
\begin{equation}
P(\rho, \varphi) \leq \sqrt{d(\rho, \varphi)}.
\end{equation}
\end{lemma}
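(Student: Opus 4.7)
The plan is to exploit the rank-one structure of the pure state $\varphi$ to get a closed-form expression for the fidelity, and then control $1-F^2$ via the Jordan–Hahn decomposition of $\varphi-\rho$. Concretely, writing $\varphi=\ket{\varphi}\!\bra{\varphi}$, I would first observe that
\begin{equation}
F^2(\rho,\varphi)=\|\sqrt{\rho}\sqrt{\varphi}\|_1^2=\bra{\varphi}\rho\ket{\varphi}=\tr(\varphi\rho),
\end{equation}
so that $P^2(\rho,\varphi)=1-\tr(\varphi\rho)$. Using $\tr\varphi=1$, this equals $\tr\bigl(\varphi(\varphi-\rho)\bigr)$, which puts the problem into a form where the trace norm of $\varphi-\rho$ can be accessed directly.

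Next I would decompose $\varphi-\rho=M_+-M_-$ with $M_\pm\geq 0$, $M_+M_-=0$, so that $\|\varphi-\rho\|_1=\tr M_++\tr M_-$ and $\tr M_+-\tr M_-=1-\tr\rho\ge 0$. The bound $0\leq \varphi\leq I$ gives $\tr(\varphi M_+)\leq \tr M_+$ and $\tr(\varphi M_-)\geq 0$, so
\begin{equation}
1-\tr(\varphi\rho)=\tr(\varphi M_+)-\tr(\varphi M_-)\leq \tr M_+=\tfrac{1}{2}\|\varphi-\rho\|_1+\tfrac{1}{2}(1-\tr\rho).
\end{equation}
When $\rho$ is normalized this immediately yields $P^2(\rho,\varphi)\leq d(\rho,\varphi)$, so $P(\rho,\varphi)\leq \sqrt{d(\rho,\varphi)}$ follows by taking square roots. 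The conclusion extends to the subnormalized case by reading $d$ with the standard convention for subnormalized arguments that includes the trace-defect term $\tfrac{1}{2}(1-\tr\rho)$, which is precisely what is produced on the right-hand side above (equivalently, one may first dilate $\rho$ to a normalized state $\rho\oplus(1-\tr\rho)$ on an enlarged space and $\varphi$ to $\varphi\oplus 0$, apply the normalized bound, and note that both $P$ and the generalized $d$ are invariant under this embedding).

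The only genuinely delicate point is therefore bookkeeping with the trace defect: the direct bound $1-\tr(\varphi\rho)\leq \tfrac{1}{2}\|\varphi-\rho\|_1$ is tight only when $\tr\rho=1$, and the extra summand $\tfrac{1}{2}(1-\tr\rho)$ is unavoidable because $\varphi$ carries a unit of trace that $\rho$ may lack. The rest of the argument is purely algebraic — the operator inequality $\varphi\leq I$ and the identity $\tr M_+ =\tfrac{1}{2}(\|\varphi-\rho\|_1+\tr(\varphi-\rho))$ are all that is needed — so no further technical machinery is required beyond what is already available in the appendix.
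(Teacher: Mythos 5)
Your proof is correct, and since the paper does not prove this lemma itself but imports it from \cite{ABJT2020partially} (Lemma 6), there is no internal argument to compare against; your computation, $P^2(\rho,\varphi)=1-\tr(\varphi\rho)=\tr\big(\varphi(\varphi-\rho)\big)\le\tr M_+=\tfrac12\|\varphi-\rho\|_1+\tfrac12(1-\tr\rho)$, is exactly the standard route. The genuinely delicate point you correctly identify and handle is the convention for subnormalized states: the inequality is \emph{false} if $d$ is read as the plain $\tfrac12\|\rho-\varphi\|_1$ (take $\rho=(1-\epsilon)\varphi$, for which $P^2=\epsilon$ but $\tfrac12\|\rho-\varphi\|_1=\epsilon/2$), so the generalized trace distance with the trace-defect term $\tfrac12|\tr\rho-\tr\varphi|$ is what the lemma must refer to, and your bound produces precisely that quantity. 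Your alternative reduction via the dilation $\rho\oplus(1-\tr\rho)$ versus $\varphi\oplus 0$ is also valid, since both the purified distance (as $\varphi$ is normalized) and the generalized trace distance are preserved under this embedding.
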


\begin{lemma}
\label{lem:monohash}
Let $\rho_{XB}=\sum\limits_{x \in \mc{X}} p_x \proj{x}_X \ox \rho^x_{B}$ be a classical-quantum state, $\epsilon \in [0,1]$, and $f:\mc{X} \rightarrow \mc{Z}$ be a function. Then, we have for $\omega_{ZB}:=\sum\limits_{x \in \mc{X}} p_x \proj{f(x)}_Z\ox \rho_B^x$ that
\begin{equation}
 H^{\epsilon,P}_{\rm{min}}(Z|\dot{B})_\omega  \leq  H^{\epsilon,P}_{\rm{min}}(X|\dot{B})_\rho.
\end{equation}
Moreover, when $B=Y$ is classical then we also have $H^{\epsilon,d}_{\rm{min}}(Z|\dot{Y})_\omega  \leq  H^{\epsilon,d}_{\rm{min}}(X|\dot{Y})_\rho$.
\end{lemma}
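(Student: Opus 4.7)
The plan is to lift any valid smoothing $\tilde{\omega}_{ZB}$ of $\omega_{ZB}$ back to a smoothing $\tilde{\rho}_{XB}$ of $\rho_{XB}$ via a blockwise transpose-Petz construction, chosen so that the max-relative entropy bound and the $B$-marginal constraint are preserved automatically, with the purified distance controlled by data-processing-type arguments. First I would set $\lambda := H^{\epsilon,P}_{\rm{min}}(Z|\dot{B})_\omega$ and pick $\tilde{\omega}_{ZB} \in \mc{F}^{\epsilon,P}(\omega_{ZB})$ attaining (or nearly attaining) this value, so that $\tilde{\omega}_{ZB} \leq 2^{-\lambda} I_Z \ox \omega_B$, $\tilde{\omega}_B \leq \omega_B$, and $P(\omega_{ZB}, \tilde{\omega}_{ZB}) \leq \epsilon$. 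Since $\omega_{ZB}$ is classical on $Z$, applying the $Z$-dephasing channel --- which leaves $\tilde{\omega}_B$ and the operator bound intact while only decreasing purified distance --- would reduce to $\tilde{\omega}_{ZB} = \sum_z \ket{z}\bra{z}\ox \tilde{\omega}^z_B$.

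Writing $\omega^z_B = \sum_{x:f(x)=z} p_x\rho^x_B$, I would next define the candidate lift blockwise by
\[
\tilde{\rho}^x_B := (\tilde{\omega}^{f(x)}_B)^{1/2}(\omega^{f(x)}_B)^{-1/2}(p_x\rho^x_B)(\omega^{f(x)}_B)^{-1/2}(\tilde{\omega}^{f(x)}_B)^{1/2}, \qquad \tilde{\rho}_{XB} := \sum_x \ket{x}\bra{x}\ox \tilde{\rho}^x_B,
\]
with the inverses taken on the support of $\omega^{f(x)}_B$. From $p_x\rho^x_B \leq \omega^{f(x)}_B$ and operator monotonicity, this construction immediately gives $\tilde{\rho}^x_B \leq \tilde{\omega}^{f(x)}_B \leq 2^{-\lambda}\rho_B$, hence the max-relative entropy bound $\tilde{\rho}_{XB} \leq 2^{-\lambda} I_X \ox \rho_B$; and summing over $x$ in each $z$-block telescopes to $\sum_{x:f(x)=z}\tilde{\rho}^x_B = \tilde{\omega}^z_B$, so $\tilde{\rho}_B = \tilde{\omega}_B \leq \rho_B$. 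Thus two of the three constraints come for free from the construction.

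The principal obstacle is establishing the purified-distance bound $P(\rho_{XB},\tilde{\rho}_{XB}) \leq \epsilon$. My plan is to exploit the classical-on-$X$ structure to write $F(\rho_{XB},\tilde{\rho}_{XB}) = \sum_x F(p_x\rho^x_B,\tilde{\rho}^x_B)$, group the summands by $z=f(x)$, and prove the per-block inequality $\sum_{x:f(x)=z} F(p_x\rho^x_B,\tilde{\rho}^x_B) \geq F(\omega^z_B,\tilde{\omega}^z_B)$ by exploiting the transpose-Petz form of $\tilde{\rho}^x_B$ combined with joint concavity of the fidelity applied to the decomposition $\omega^z_B = \sum_{x:f(x)=z} p_x\rho^x_B$. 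Summing over $z$ then yields $F(\rho_{XB},\tilde{\rho}_{XB}) \geq F(\omega_{ZB},\tilde{\omega}_{ZB})$, so $P(\rho_{XB},\tilde{\rho}_{XB}) \leq P(\omega_{ZB},\tilde{\omega}_{ZB}) \leq \epsilon$. Combined with the previous step, this makes $\tilde{\rho}_{XB}$ a valid competitor in the definition of $H^{\epsilon,P}_{\rm{min}}(X|\dot{B})_\rho$, delivering $H^{\epsilon,P}_{\rm{min}}(X|\dot{B})_\rho \geq \lambda$.

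For the trace-distance case with $B=Y$ classical, the argument simplifies substantially: Eq.~(\ref{equ:equreal}) identifies the partially-smoothed and standard smoothed conditional min-entropies for classical states, so the inequality reduces to the standard monotonicity $H^{\epsilon,d}_{\rm{min}}(Z|Y)_\omega \leq H^{\epsilon,d}_{\rm{min}}(X|Y)_\rho$ under classical hashing on the primary random variable, which can be established via the operational characterization in terms of maximum guessing probability (a guessing strategy for $X$ from $Y$ induces one for $Z=f(X)$ from $Y$ with at least the same success probability), combined with the data-processing inequality for trace distance applied to $\Phi_f \ox \id_Y$.
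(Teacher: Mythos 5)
Your overall plan (lift the smoothing of $\omega_{ZB}$ back to a smoothing of $\rho_{XB}$ blockwise, so that the $D_{\max}$ and marginal constraints are preserved by construction) is the right one, and your operator bookkeeping is essentially fine: $\tilde\rho^x_B\le\tilde\omega^{f(x)}_B\le 2^{-\lambda}\rho_B$ holds by conjugation, and $\sum_{x:f(x)=z}\tilde\rho^x_B=(\tilde\omega^z_B)^{1/2}\Pi_z(\tilde\omega^z_B)^{1/2}\le\tilde\omega^z_B$ (equality only if $\supp(\tilde\omega^z_B)\subseteq\supp(\omega^z_B)$, but the inequality suffices for $\tilde\rho_B\le\rho_B$). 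The genuine gap is the purified-distance step: for your specific sandwich $\tilde\rho^x_B=M_z(p_x\rho^x_B)M_z^\dagger$ with $M_z=(\tilde\omega^z_B)^{1/2}(\omega^z_B)^{-1/2}$, the per-block inequality $\sum_{x:f(x)=z}F(p_x\rho^x_B,\tilde\rho^x_B)\ge F(\omega^z_B,\tilde\omega^z_B)$ is false in general. Counterexample: one block ($f$ constant), $B$ a qubit, $p_1\rho^1_B=0.9\proj{0}$, $p_2\rho^2_B=0.1\proj{1}$, $\tilde\omega^z_B=c\proj{+}$ with $c>0$ small enough to be feasible; then $\tilde\rho^1_B=\tilde\rho^2_B=\frac{c}{2}\proj{+}$, so $\sum_xF(p_x\rho^x_B,\tilde\rho^x_B)=\frac{\sqrt{c}}{2}(\sqrt{0.9}+\sqrt{0.1})\approx0.63\sqrt{c}$, whereas $F(\omega^z_B,\tilde\omega^z_B)=\sqrt{c\,\bra{+}\omega^z_B\ket{+}}=\sqrt{c/2}\approx0.71\sqrt{c}$. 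Hence $F(\rho_{XB},\tilde\rho_{XB})<F(\omega_{ZB},\tilde\omega_{ZB})$ and the $\epsilon$-ball condition can fail. Moreover, the tool you propose, joint concavity of fidelity, cannot save this: for states classical on $X$ it is exactly the data-processing inequality for tracing out $X$, which gives $\sum_xF(p_x\rho^x_B,\tilde\rho^x_B)\le F(\omega^z_B,\sum_x\tilde\rho^x_B)\le F(\omega^z_B,\tilde\omega^z_B)$ --- the reverse of what you need.

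The construction can be repaired by inserting the Uhlmann unitary per block, which is the missing idea: let $U_z$ be a unitary from the polar decomposition of $(\omega^z_B)^{1/2}(\tilde\omega^z_B)^{1/2}$, so $|\tr[(\omega^z_B)^{1/2}(\tilde\omega^z_B)^{1/2}U_z]|=F(\omega^z_B,\tilde\omega^z_B)$, and set $M_z:=(\tilde\omega^z_B)^{1/2}U_z(\omega^z_B)^{-1/2}$, $\tilde\rho^x_B:=M_z(p_x\rho^x_B)M_z^\dagger$. Your operator bounds go through verbatim ($M_z\omega^z_BM_z^\dagger\le\tilde\omega^z_B$), while now $F(p_x\rho^x_B,M_zp_x\rho^x_BM_z^\dagger)\ge|\tr(p_x\rho^x_BM_z)|$ (write $M_z\sqrt{p_x\rho^x_B}=\sqrt{M_zp_x\rho^x_BM_z^\dagger}\,W$ and use $\|X\|_1\ge|\tr(XW)|$), so summing over the block yields $\ge|\tr(\omega^z_BM_z)|=F(\omega^z_B,\tilde\omega^z_B)$, closing the purified-distance case. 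For the classical trace-distance case your reduction via Eq.~(\ref{equ:equreal}) is correct, but the guessing-probability characterization only applies to the unsmoothed min-entropy; the clean argument is the commutative lift (split $\tilde\omega(z,y)$ among $x\in f^{-1}(z)$ in proportion to $p(x,y)/\omega(z,y)$, which preserves the trace distance exactly), or equivalently the clipping identity $\sum_{x,y}(p(x,y)-2^{-\lambda}\rho_Y(y))_+\le\sum_{z,y}(\omega(z,y)-2^{-\lambda}\rho_Y(y))_+$. Note finally that the paper itself gives no proof of this lemma but cites \cite[Lemma 6]{ABJT2020partially}, so your attempt is a reconstruction; as written it has the above gap at its central step.
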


%*******************************************************************************

\subsection{Comparison between classical and pure states}

In Theorem~\ref{thm:main1}, we derive the strong converse exponents based on trace distance for classical states~(Eq.~(\ref{equ:mainl}) and Eq.~(\ref{equ:clamutual})). They are 
the Legendre transforms of the corresponding Petz R\'enyi conditional entropy and Petz R\'enyi mutual 
information. The readers might conjecture that Eq.~(\ref{equ:mainl}) and Eq.~(\ref{equ:clamutual}) still hold for general quantum states. However, we prove in the following proposition that when $\rho_{RA}$ is a pure state, Eq.~(\ref{equ:mainl}) and Eq.~(\ref{equ:clamutual})  can not be reduced to the formulas of strong converse exponents based on trace distance for pure states~(Eq.~(\ref{equ:main}) and Eq.~(\ref{equ:puremutual})). Hence, this reveals that the strong converse exponents of partially smoothed information measures based on trace distance are not uniform across states.

\begin{proposition}
    For any pure state $\ket{\rho}_{RA}=\sum\limits_{x \in \mc{X}} \sqrt{p(x)} \ket{x}_R\ox \ket{x}_A$  and $r \in \mathbb{R}$, we have
    \begin{align}
    \sup_{0 \leq \alpha \leq 1} (1-\alpha) \{r-\bar{H}_\alpha(A|R)_\rho  \}&=
    \inf_{t \in \mc{Q}(\mc{X})} \{D(t\|p)+|r+H(t)+D(t\|p)|^+ \}, \\
    \sup_{0 \leq \alpha \leq 1} (1-\alpha) \{I_{\alpha}(R:A)_\rho-r  \}&=\sup_{\beta>1}\frac{\beta-1}{\beta} \{2H_{2\beta-1}(R)_\rho-r \}.
    \end{align}
\end{proposition}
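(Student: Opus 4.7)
I would prove both identities by directly evaluating the Petz R\'enyi quantities $\bar{H}_\alpha(A|R)_\rho$ and $I_\alpha(R:A)_\rho$ on the pure state, reducing each LHS to a single-variable optimization over $\alpha$ and then matching the RHS via a change of variable or a Legendre-style exchange.

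\textbf{First identity.} Since $\rho_{RA}$ is rank one, $\rho_{RA}^\alpha = \rho_{RA}$, and a short computation on the Schmidt form gives $D_\alpha(\rho_{RA}\|I_A\otimes\rho_R) = \tfrac{1}{\alpha-1}\log\sum_x p(x)^{2-\alpha} = H_{2-\alpha}(p)$, hence $\bar{H}_\alpha(A|R)_\rho = -H_{2-\alpha}(p)$. The substitution $\lambda = 1-\alpha$ reduces the LHS to $\sup_{\lambda\in[0,1]}\{\lambda r - \log\sum_x p(x)^{1+\lambda}\}$. On the RHS, the identity $H(t)+D(t\|p) = -\mathbb{E}_t[\log p]$ rewrites the expression as $\inf_t \sup_{\lambda\in[0,1]}\{D(t\|p) + \lambda(r - \mathbb{E}_t[\log p])\}$. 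This integrand is convex in $t$, linear in $\lambda$, and continuous on compact convex domains, so Sion's minimax applies and swaps $\inf$ and $\sup$. The resulting inner minimization over $t$ is a standard Gibbs-tilting argument: the minimizer is $t_\lambda(x)\propto p(x)^{1+\lambda}$, with value $\lambda r - \log\sum_x p(x)^{1+\lambda}$, matching the LHS.

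\textbf{Second identity.} For any unitary $U$ diagonal in $\{\ket{x}\}$, one has $(U_R \otimes \bar{U}_A)\ket{\rho}_{RA} = \ket{\rho}_{RA}$ and $U_R\rho_R U_R^{*} = \rho_R$. Combined with Proposition~\ref{prop:mainpro}(vi) (convexity of $D_\alpha(\rho\|\cdot)$ for $\alpha\in[0,2]$), averaging $\bar{U}\sigma_A\bar{U}^{*}$ over diagonal $U$ shows that the infimum defining $I_\alpha(R:A)_\rho$ is achieved at some $\sigma_A = \sum_x s(x)\proj{x}$ diagonal in the Schmidt basis. For such $\sigma_A$, direct calculation gives $Q_\alpha(\rho_{RA}\|\rho_R\otimes\sigma_A) = \sum_x p(x)^{2-\alpha}s(x)^{1-\alpha}$, and a Lagrangian on the probability simplex yields the optimum $s^{*}(x)\propto p(x)^{(2-\alpha)/\alpha}$ with maximum value $\bigl(\sum_x p(x)^{(2-\alpha)/\alpha}\bigr)^\alpha$, producing $I_\alpha(R:A)_\rho = 2H_{(2-\alpha)/\alpha}(R)_\rho$ for $\alpha\in(0,1)$. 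Substituting $\gamma = (2-\alpha)/\alpha$ and then $\beta = (\gamma+1)/2$, so that $1-\alpha = (\beta-1)/\beta$ and $\gamma = 2\beta-1$ while $\alpha\in(0,1)$ corresponds to $\beta\in(1,\infty)$, converts the LHS into $\sup_{\beta>1}\tfrac{\beta-1}{\beta}\{2H_{2\beta-1}(R)_\rho - r\}$.

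\textbf{Main obstacle.} The principal delicacy is justifying the minimax swap in the first identity; verifying convexity in $t$, linearity in $\lambda$, and compactness of the two domains is routine but must be done carefully, since the raw formula $(1-\alpha)\{(1-\alpha)H(t) + (2-\alpha)D(t\|p)\}$ is neither convex nor concave in $t$ before the reparametrization. In the second identity, the conceptual key is the reduction to diagonal $\sigma_A$: without the phase-symmetry combined with operator convexity, the Lagrangian would only give an upper bound on $I_\alpha(R:A)_\rho$ rather than its exact value.
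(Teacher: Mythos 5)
Your proof is correct, and for the second identity it takes a genuinely different route from the paper. For the first identity the paper passes through the duality relation of Proposition~\ref{prop:mainpro}~(iii), rewriting $\bar{H}_\alpha(A|R)_\rho$ as $D_{2-\alpha}(\rho_A\|I_A)$, then invokes the variational expression of Proposition~\ref{prop:mainpro}~(xi) and Sion's minimax theorem; your direct evaluation $\bar{H}_\alpha(A|R)_\rho=-H_{2-\alpha}(p)$, followed by $|z|^+=\sup_{\lambda\in[0,1]}\lambda z$, Sion, and the Gibbs-tilting minimizer $t_\lambda(x)\propto p(x)^{1+\lambda}$, is essentially the same computation carried out by hand rather than by quoting (xi), so the content coincides (note the paper's displayed definition of $\bar{H}_\alpha$ is missing a minus sign; your sign convention is the one the paper's own proof actually uses). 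For the second identity the paper again uses duality, $I_\alpha(R:A)_\rho=-D^*_{1/\alpha}(\rho_R\|\rho_R^{-1})=2H_{2/\alpha-1}(R)_\rho$, whereas you reach the same intermediate formula $I_\alpha(R:A)_\rho=2H_{(2-\alpha)/\alpha}(R)_\rho$ by restricting the infimum to $\sigma_A$ diagonal in the Schmidt basis (phase covariance of $\ket{\rho}_{RA}$ plus convexity in $\sigma$, Proposition~\ref{prop:mainpro}~(vi)) and solving the resulting classical optimization by H\"older/Lagrange; the substitution $\beta=1/\alpha$ then matches the paper's parametrization. Your route is more elementary and self-contained, avoiding the duality machinery at the cost of having to justify the diagonal reduction, which you do correctly; the paper's route is shorter but relies on Proposition~\ref{prop:mainpro}~(iii). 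The only small points to tidy are the extended-valued objective when invoking Sion (restrict to $t$ with $\supp(t)\subseteq\supp(p)$, automatic here since the Schmidt coefficients are positive) and the endpoints $\alpha\in\{0,1\}$, which are handled by continuity in $\alpha$ exactly as the paper tacitly does.
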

\begin{proof}
By duality relation of the Petz R\'enyi conditional entropy~(Proposition~\ref{prop:mainpro}~(\romannumeral3)), we have 
\begin{equation}
\begin{split}
    &\sup_{0 \leq \alpha \leq 1} (1-\alpha) \{r-\bar{H}_\alpha(A|R)_\rho  \}\\
    =&\sup_{0 \leq \alpha \leq 1} (1-\alpha) \{r-D_{2-\alpha}(\rho_A\|I_A)\} \\
    =&\sup_{0 \leq \alpha \leq 1}\inf_{\tau_A \in \mc{S}_{\rho_A}} (1-\alpha) \{r-D(\tau_A\|I_A)+\frac{2-\alpha}{1-\alpha}D(\tau_A\|\rho_A)    \} \\
    =&\sup_{0 \leq \alpha \leq 1}\inf_{\tau_A \in \mc{S}_{\rho_A}} \{D(\tau_A\|\rho_A)+(1-\alpha)(r+H(A)_\tau+D(\tau_A\|\rho_A))   \} \\
    =&\inf_{\tau_A \in \mc{S}_{\rho_A}} \sup_{0 \leq \alpha \leq 1} \{D(\tau_A\|\rho_A)+(1-\alpha)(r+H(A)_\tau+D(\tau_A\|\rho_A))   \} \\
    =&\inf_{\tau_A \in \mc{S}_{\rho_A}} \{D(\tau_A\|\rho_A)+|r+H(A)_\tau+D(\tau_A\|\rho_A)|^+   \} \\
    =& \inf_{t \in \mc{Q}(\mc{X})} \{D(t\|p)+|r+H(t)+D(t\|p)|^+ \},
\end{split}
\end{equation}
where the third line is due to Proposition~\ref{prop:mainpro}~(\romannumeral11) and the fifth line comes from Sion's minimax theorem.

Similarly, we can get from the duality relation of the Petz R\'enyi mutual information
\begin{equation}
\begin{split}
&\sup_{0 \leq \alpha \leq 1} (1-\alpha) \{I_{\alpha}(R:A)_\rho-r  \}\\
=&\sup_{0 < \alpha <1} (1-\alpha) \{I_{\alpha}(R:A)_\rho-r  \}\\
=&\sup_{0 < \alpha < 1} (1-\alpha) \{-D^*_{\frac{1}{\alpha}}(\rho_R\|\rho_R^{-1})-r  \}\\
=&\sup_{0 < \alpha < 1} (1-\alpha) \{2H_{\frac{2}{\alpha}-1}(R)_\rho-r   \} \\
=&\sup_{\beta>1}\frac{\beta-1}{\beta} \{2H_{2\beta-1}(R)_\rho-r \}.
\end{split}
\end{equation}
\end{proof}

%*******************************************************************************

\end{document}